\newcommand{\conj}[1]{\overline{#1}}
\newcounter{NameOfTheNewCounter}
\newtheorem{theorem}{Theorem}[NameOfTheNewCounter]
\newtheorem{lemma}[theorem]{Lemma}
\newtheorem{proposition}[theorem]{Proposition}
\newtheorem{definition}[theorem]{Definition}
\def\lp{\frac 1 {l_{0}}}
\def\kp{\frac 1 {m_{0}}}
\newcommand{\Gcal}{\bmath{\mathcal{G}}}
\newcommand{\Rcal}{\bmath{\mathcal{R}}}
\newcommand{\Mcal}{\bmath{\mathcal{M}}}
\newcommand{\Fcal}{\bmath{\mathcal{F}}}
\newcommand{\GG}{\bmath{G}}
\title[Calibration artefacts in radio interferometry: Ghosts in WSRT data]{Calibration artefacts in radio interferometry.\\I. Ghost sources in WSRT data}
\author[T.~L. Grobler , C.~D. Nunhokee, O.~M. Smirnov, A.~J. van Zyl, A.~G. de Bruyn]{T.~L. Grobler$^{12}$\thanks{E-mail:
t.grobler@ru.ac.za}, 
C.~D. Nunhokee$^{1}$, O.~M. Smirnov$^{12}$, A.~J. van Zyl$^{3}$, A.~G. de Bruyn$^{45}$\\
$^1$Department of Physics and Electronics, Rhodes University, PO Box 94, Grahamstown, 6140, South Africa\\
$^2$SKA South Africa, 3rd Floor, The Park, Park Road, Pinelands, 7405, South Africa\\
$^3$Department of Mathematics and Applied Mathematics, University of Pretoria, Private bag X20, Hatfield, Pretoria 0028, South Africa\\
$^4$ASTRON, P.O. Box 2, Dwingeloo, 7900AA, The Netherlands\\
$^5$Kapteyn Astronomical Institute, University of Groningen, 9700 AV, Groningen, The Netherlands }
\begin{document}

\date{in original form 2014 February 2}

\pagerange{\pageref{firstpage}--\pageref{lastpage}} \pubyear{2014}

\maketitle

\label{firstpage}

\begin{abstract}
This work investigates a particular class of artefacts, or ghost sources, in radio interferometric images. 
Earlier observations with  (and simulations of) the Westerbork Synthesis Radio Telescope (WSRT) suggested that 
these were due to calibration with incomplete sky models. A theoretical framework is derived that validates this 
suggestion, and provides predictions of ghost formation in a two-source scenario. The predictions are found to 
accurately match the result of simulations, and qualitatively reproduce the ghosts previously seen in observational data. 
The theory also provides explanations for many previously puzzling features of these artefacts (regular geometry,
PSF-like sidelobes, seeming independence on model flux), and shows that the 
observed phenomenon of flux suppression affecting unmodelled sources is due to the same mechanism. 
We demonstrate that this ghost formation mechanism is a fundamental feature of calibration, and exhibits
a particularly strong and localized signature due to array redundancy. To some extent this mechanism will 
affect all observations (including those with non-redundant arrays), though in most cases the ghosts remain 
hidden below the noise or masked by other
instrumental artefacts. The implications of such errors on future deep observations are discussed. 
\end{abstract}

\begin{keywords}
Instrumentation: interferometers, Methods: analytical, Methods: numerical, Techniques: interferometric
\end{keywords}

\section{Introduction}

In the context of radio interferometry, the term {\em calibration} refers to the estimation and correction of instrumental errors (which are traditionally taken to also include effects of the troposphere and ionosphere) on the observed visibilities. Current calibration approaches boil down to a joint fit to the observations of a sky model and an instrumental model, such as that provided by the radio interferometer measurement equation \citep[RIME;][]{ME1,RRIME1}. A typical observing strategy will include intermittent {\em calibrator scans} of a known calibrator field, for which an accurate prior sky model is available; the obtained instrumental solutions can then be interpolated onto scans of the target field. These can be further refined through a process known as {\em self-calibration} or {\em selfcal} \citep{Cornwell:selfcal}. Selfcal is an iterative approach (the sky model is refined at each iteration) that minimizes the error between predicted visibilities corrupted by the instrumental model (the free 
parameters) 
and the observed visibilities in a least squares sense during each iteration. An initial sky model for selfcal can be obtained by imaging visibilities that have been corrected by the interpolated calibrator solutions. Where a reasonable initial sky model for the target field is available, it can even provide the starting point for selfcal without the need for an external calibrator.

Traditional selfcal assumes an instrumental model where all effects are direction-independent. The increased field of view of modern radio interferometers implies that direction dependent effects can no longer be ignored during calibration. Incorporating direction dependent effects into calibration solutions (third-generation calibration, or 3GC) has become a major research field over the past few years \citep{Intema2009,Smirnov2011,Kazemi2011,Kazemi2013,Wijnholds2009}. \citet{Veen2004,Rau2009} have conducted good literature reviews on calibration. 

It is well-established that calibration can lead to imperfect images, even to the generation of spurious source components,
elimination or suppression  of real components, and the deformation of the structure of extended sources \citep{Linfield1986,Wilkinson1988,Taylor1999,Marti2010,Marti2008}. 
\cite{Kazemi2013a} have proposed a novel calibration 
technique that is meant to minimize the amount of source suppression that occurs if the sky model is incomplete. 
This approach uses a $t$-distribution to model the residual noise. \cite{Marti2008} has shown that 
spurious point sources can form when performing calibration on data containing nothing but white noise. 

\begin{figure*}
 \centering
 \includegraphics[width=\textwidth]{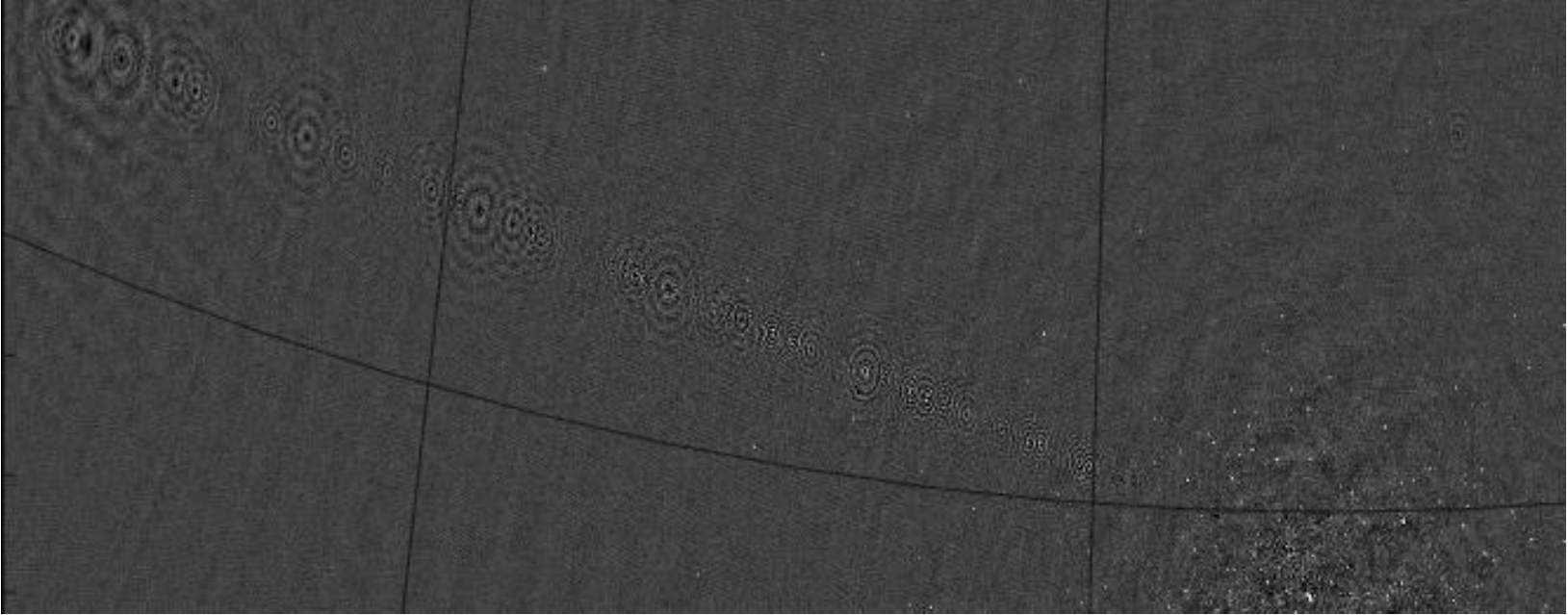}
 \caption{Ghost sources in a 92cm WSRT observation of J1819+3845. The target field is in the lower-image corner of the image, and Cyg A is to the upper left, just outside the image.}
 \label{fig:2004ghosts} 
\end{figure*}

One of the more striking sightings of calibration artefacts occurred in 2004, in a 92cm 
Westerbork Synthesis Radio Telescope (WSRT) observation of J1819+3845 by de Bruyn (Fig.~\ref{fig:2004ghosts}). 
After self-calibration, the map exhibited a string of ``ghosts'' -- point-source-like objects, mostly of negative 
flux, arranged along a line linking the brightest object in the field with Cyg A, which was about 20$^\circ$ away 
(i.e. in a distant sidelobe of the primary beam, and extremely attenuated by WSRT's extremely low sidelobe 
response). The pattern was highly peculiar as the ghosts were 
arranged with some regularity, and their positions did not vary with frequency. No other observations at the 
time were known to exhibit such features, and the problem remained open until a series of 21cm WSRT observations 
in 2010, which were done as part of the ``Quality Monitoring Committee'' project \citep{Smirnov2011qmc}. In
these observations, a large pointing error was deliberately introduced, and the resulting residual images 
(post self-cal) exhibited similar artefacts. The ghosts were fainter, but there were several strings of them, all associated with the brightest objects in the field (Fig.~\ref{fig:2010ghosts}). The problem was then investigated empirically, through the use of simulations \citep{Smirnov2010ghosts}, and this revealed a number of features:
\begin{itemize}
  \item The ghosts were associated with sky model errors (i.e. missing or incorrect flux in the sky model, or direction-dependent errors towards the brightest sources). In the QMC case, this was due to the large pointing error; in the J1819+3845 case this was due to insufficiently accurate modelling of Cyg A. Correcting for these errors (by solving for differential gains towards the brighter sources in the QMC case, and towards Cyg A in the J1819+3845 case) made the ghosts disappear.
  \item A simple simulation of a two-source (1 Jy and 1 mJy) field, where only the 1 Jy source was included in the calibration model, while the second source played the role of ``contaminator'', produced a similar ghost pattern in the residual visibilities. The peak intensity of the pattern was roughly at $\mu$Jy level, and appeared to be proportional to the flux of the contaminator source
  (but did not depend on the flux of the model source!) This suggested that ghosts should {\bf always} arise in the presence of incomplete sky models, but would generally be buried in the thermal noise, unless the observations were very sensitive, or the missing model sources
  were sufficiently bright.
  \item The ghosts always arranged themselves along a line (or lines) passing through the unmodelled or poorly modelled 
  source(s), and the dominant source(s) in the sky model. The positions of the ghosts corresponded to some (but not all) rational fractions of the interval between the sources (i.e. 1/2, 1/3, 2/3, 1/5, etc.), with significant variation in intensity. The positions did not depend on frequency.
  \item The ghosts exhibited sidelobes that were similar, but not identical to, the PSF of the telescope.
  \item Similar simulations with other telescopes (VLA) showed a far less regular artefact pattern.
\end{itemize}

\begin{figure*}
 \centering
 \includegraphics[width=\textwidth]{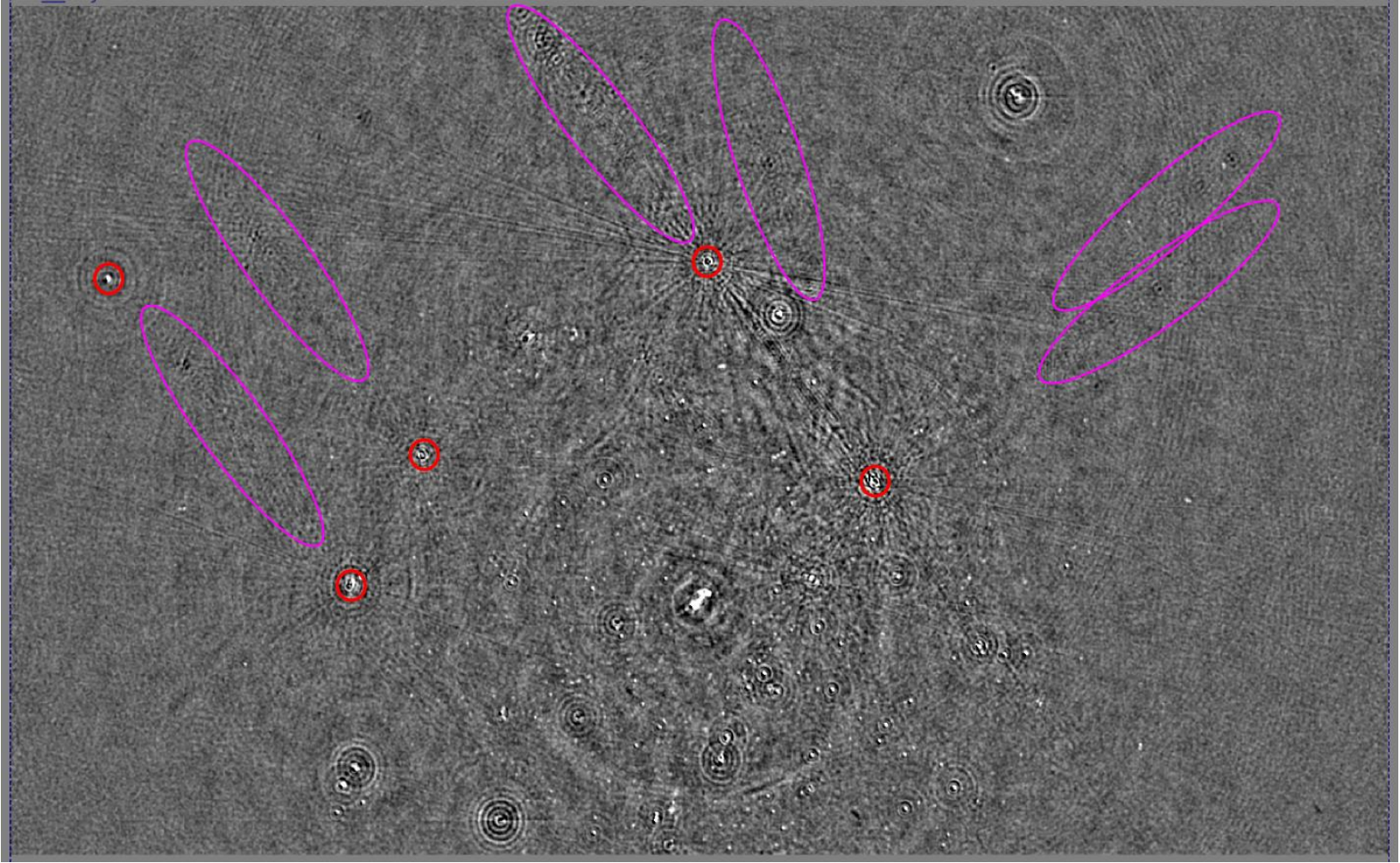}

 \caption{Ghost sources in a 21cm WSRT observation of the QMC2 field. Note that this is a residual dirty map, i.e. the sky model sources have been subtracted, and the visibly brighter sources here are in fact relatively faint. The positions of the brightest subtracted sources are indicated by red circles. Multiple strings of ghosts are visible, here highlighted by the ellipses. Note how the strings are firmly associated with the brightest sources.}
 \label{fig:2010ghosts} 
\end{figure*}

These facts strongly suggested that the regularity of the ghost patterns in Figs.~\ref{fig:2004ghosts} and \ref{fig:2010ghosts} was somehow related to the highly redundant geometry of the WSRT, but the mechanism by which they arose was not clear. Another conclusion of that study 
was that ghosts could be minimized and eventually driven below the noise by laboriously improving the sky model and/or applying direction-dependent solutions. 

The problem of calibration artefacts is becoming more important with the advent of new observational techniques and new radio telescopes such as the Low Frequency Array (LOFAR), the upgraded Jansky Very Large Array (JVLA), MeerKAT, etc., as well as the upcoming Square Kilometer Array (SKA). Not only do these telescopes promise (and in some cases already deliver) unprecedented sensitivity, they also increase the data rates substantially. The increased sensitivity means that fainter artefacts cannot be ignored, while the data rates require that calibration become largely automated, with careful and laborious manual data reduction no longer an option. Observations of the diffuse HI in the $0.5<z<20$  in order to probe galaxy formation \citep[e.g.][]{EOR2006}
or the nature of dark energy \citep{IM2010} need to face exquisite calibration in order to subtract foreground that are spatially and spectrally orders of magnitude brighter \citep[i.e.][]{Bernardi-EOR-foregrounds,Switzer-IM,Pober-PAPER-foregrounds}. Similar calibration requirements are needed for future stacking techniques for HI detection 
\citep[e.g.][]{Stacking2013} and continuum surveys aimed at revealing the nJy population \citep{Norris-SKA-continuum}. On the other hand, transient source 
detection pipelines require very fast on-the-fly calibration, which necessitates the use of shallow and incomplete sky models (Wijnholds 2013, priv. comm.) It is clear that a deeper and theoretical understanding of calibration artefacts, including those buried in the noise, is required if these new instruments and techniques are to achieve their scientific goals.

We have been investigating two classes of calibration artefacts, namely spurious point sources (ghosts) and source 
suppression (i.e. the reduction in observed flux of sources not included in the sky model). This work establishes that
both are manifestation of the same underlying mechanism, and aims to provide a theoretical 
understanding of this. The objective is to extend the results of 
previous papers \citep{Linfield1986,Wilkinson1988,Marti2008,Taylor1999}, by studying the underlying theoretical 
principles which are responsible for ghost formation and source suppression (as opposed to conducting an 
empirical survey of the different types of calibration artefacts that have been identified). 

This paper is the first in a series; ongoing work is concentrating on ghost formation and source suppression in non-redundant
interferometers, and on direction-dependent calibration. Ghosts have already been spotted in LOFAR data (de Bruyn priv. comm.),
and patricularly in the presense of transient sources (Fender priv. comm.), and early indications are that the same mechanism
is responsible. Future papers in the series will cover these phenomena.


\section{Problem overview and definitions}

In this paper, we will concentrate on the WSRT example, since its highly redundant East-West geometry makes for prominent and regular ghosts. The results can be extended to other telescopes, which the follow-up work on source suppression will also deal with. We make a number of further simplifications:

\begin{itemize}
  \item We consider a case where the true sky consists of two discrete point sources with fluxes $A_1$ and $A_2$, the former at the phase 
  centre, and the calibration model consists of just the central source $A_1$.
  \item The sources are unpolarized, and we consider only a single frequency channel.
  \item Only direction-independent calibration (i.e. regular self-cal) is performed.
\end{itemize}

Multiple sources, multiple frequencies, polarization and studying the effects of direction-dependent solutions will be the subject of future work.

This section will anticipate some results of the following sections, in order to provide a logical outline that the rest of the paper will fill in.

\subsection[]{Calibration}
\label{sec:ALS}

In its general form, (unpolarized) calibration entails finding a diagonal antenna gain matrix $\GG=\rmn{diag}(\bmath{g})=$ 
$\rmn{diag}([g_1,g_2,\cdots,g_n]^T)$ such that

\begin{equation}
\label{eq:cal}
||\Rcal - \GG\Mcal\GG^H||
\end{equation}

is minimized at each observational time-step. The superscript notation $()^H$ in Eq. \ref{eq:cal} denotes the Hermitian transpose. The Hermitian matrix $\Rcal$ is the observed unpolarized visibility matrix, where element $r_{pq}$ of $\Rcal$ is the visibility measured by the baseline
formed by antennas $p$ and $q$. The matrix $\Mcal$ is the corresponding visibility matrix generated from the calibration sky model.  
Eq.~\ref{eq:cal} can then be restated as

\begin{equation}
\label{eq:cal2}
||\Rcal-\bmath{g}\bmath{g}^H\odot\Mcal||=||\Rcal-\Gcal\odot\Mcal||,
\end{equation}

where ``$\odot$'' represents element-by-element multiplication (Hadamard product), and $\Gcal=\bmath{g}\bmath{g}^H$ is the matrix product of the gain solution vector with its own Hermitian transpose. The elements of $\Gcal$ will be denoted by $g_{pq}$. Crucially, $\Gcal$ is a rank one matrix by construction, and conversely, any rank one matrix can be decomposed into a product of the form $\bmath{g}\bmath{g}^H$.

Conventional approaches to radio interferometric calibration ignore the autocorrelations (i.e. the diagonal of the visibility matrix), since these are subject to a high additional self-noise term, and employ non-linear optimization techniques such as 
Levenberg-Marquardt \citep{Levenberg1944,Marquardt1963} to find a maximum likelihood (ML) solution for the 
off-diagonal terms of Eq.~\ref{eq:cal}. When a Gaussian noise model is assumed, this becomes 
equivalent to least squares (LS) minimization: 

\begin{equation}
\label{eq:LM}
\min_{\bmath{g}} \sum_{p \neq q} (r_{pq} - g_p m_{pq} \conj{g}_q)^2, 
\end{equation}
where $\conj{g}_q$ denotes the complex conjugate of $g_q$.

We will use the term LS calibration to refer to an LS solution of Eq.~\ref{eq:LM}. Most reduction packages in current use perform some sort of LS calibration. \citet{Kazemi2013a} propose an alternative approach called {\em robust calibration}, where a ML solution is obtained under the assumption of a $t$-distribution for the noise. Since implementations of robust calibration are not yet publicly available, we do not study it in this work. Where required, we make use of the MeqTrees package \citep{meqtrees} to do LS calibration.


If autocorrelations are included in the optimization problem, an approach called Alternating Least Squares gain estimation \citep[ALS;][]{Boonstra2003,Wijnholds2009} can be used to obtain a solution to Eq.~\ref{eq:cal2}. It is not clear whether ALS provides a practical advantage over traditional LS without autocorrelations, since implementations of ALS compatible with conventional radio interferometric 
data do not exist. The issue deserves to be investigated in a separate study. For our purposes, ALS turns out to provide a vital theoretical framework in which ghost formation can be understood analytically. 

\subsection{Ghost formation}
\label{sec:ghostform}

In a nutshell, ghost sources are produced when Eq.~\ref{eq:cal} is solved for with an incomplete or incorrect model $\Mcal$. 
Consider the simple case where the observed visibilities $\Rcal$ correspond to two point sources, and the 
calibration model consists of a single point source at centre, $\Mcal=\bmath{1}$, where $\bmath{1}$ (boldface 1) represents a matrix of
all ones (not a unity matrix!) 
If we then assume a perfect instrument with unity gains, the actual solutions for $\GG$ will not be quite equal to unity, as they will attempt to fit for the difference between $\Mcal$ and $\Rcal$. Qualitatively, this process can be understood as follows: calibration attempts to move some ``real flux'' from the model $\Mcal$ to compensate for the unmodelled flux of the second point source. When these solutions are applied to the data, the resulting corrected visibilities

\begin{equation}
  \Rcal^\mathrm{(c)} = \GG^{-1}\Rcal\GG^{-H},
  \label{eq:corr}  
\end{equation}

will contain ghost sources in addition to real sources. In Eq.~\ref{eq:corr}, $()^{-1}$ denotes standard matrix inversion, while $()^{-H}$ designates $(()^H)^{-1}$. The rest of this paper analyses the mechanism by which this comes about. Note that we do not consider the effects of noise in our analysis; earlier empirical work \citep{Smirnov2010ghosts} has shown that the same pattern arises with or without noise.

In this context, LS calibration has proven to be very difficult to study theoretically. By contrast, the ALS formulation does yield the necessary insights. In this paper we therefore approach the problem of ghosts from several directions:

\begin{itemize}
  \item We develop a theoretical framework based on ALS that predicts ghost formation;
  \item We empirically compare the results of ALS and LS calibration, and show that they yield similar ghost patterns (with minor differences that are explained);
  \item We provide empirical results for ghost formation using ALS and LS, and show that these match the theoretical predictions;
  \item We show that all of the above match observed ghost patterns in real data, such as those seen in Fig.~\ref{fig:2004ghosts}.
\end{itemize}

These results suggest that the theoretical insights gained from the ALS framework are valid for the LS approaches, while the last 
point demonstrates that our simplified assumptions provide a good fit to real observations.

\subsection{Distillation}
\label{sec:dist}

Since ghost sources are relatively faint (as we'll show below), they can be difficult to detect over the thermal noise and the PSF sidelobes of actual sources. In hindsight, this probably explains why the phenomena was not spotted earlier. A straightfoward way to detect the ghosts in simulations is to ``distill'' them into residual visibilities as follows:

\begin{enumerate}
  \item We form predicted visibilities from a ``true'' sky ($\Mcal_0$), and an incomplete calibration sky model ($\Mcal$). 
  \item The ``observed visibilities'' $\Rcal$ then correspond to $\Mcal_0$.
  \item We obtain calibration solutions $\GG$ by solving Eq.~\ref{eq:cal} using $\Rcal$ and $\Mcal$.
  \item We apply the solutions to $\Rcal$ (Eq.~\ref{eq:corr}), yielding corrected visibilities $\Rcal^\mathrm{(c)}$.   
  \item We image the residuals $\Rcal^\Delta = \Rcal^\mathrm{(c)}-\Rcal$. The real sources then (mostly) cancel out, the noise term, if any, also (mostly) cancels out, and the resulting image yields the ``distilled'' ghost sources.
\end{enumerate}

Note that in real-life observations, actual gains are never unity, and the residuals $\Rcal^\mathrm{(c)}-\Rcal$ would not reveal much since real sources would not cancel out. However, in our perfect telescope simulation, the gain solutions account for sky model incompleteness and nothing more, and the ghosts are easily visible in images of the residuals.

In the two-source, noise-free case considered here, the true sky $\Mcal_0$ is equal to

\[
  \Rcal = \Mcal_0 = A_1 \bmath{1} + A_2 \bmath{K},
\]

where $\bmath{K}$ is a Fourier kernel matrix of complex phase terms corresponding to the offset of the second source w.r.t. the phase centre. The residuals then correspond to 

\[
  \Rcal^\Delta = A_1 \GG^{-1} \bmath{1} \GG^{-H} +  A_2 \GG^{-1} \bmath{K} \GG^{-H} - A_1 \bmath{1} - A_2 \bmath{K}
\]

\newcommand{\Gtop}{\Gcal^{\top}}

By defining the matrix $\Gtop = \GG^{-1}\bmath{1}\GG^{-H} = \{g_{pq}^{-1}\} = \big\{\frac{1}{g_{pq}}\big\}$ (i.e. the element-by-element inverse or the Hadamard inverse of
$\Gcal$), we can rewrite this as

\begin{equation}
  \Rcal^\Delta = A_1 (\Gtop - \bmath{1}) + A_2(\Gtop - \bmath{1}) \odot \bmath{K}.
\end{equation}

The matrix $\Gtop - \bmath{1}$ is in some sense fundamental. As will be shown below, it yields the basic ghost pattern corresponding to one source. From the equation above, we can see that the residuals will contain a superposition of two ghost patterns, scaled by $A_1$ and $A_2$, with the second pattern shifted to the position of the second source. In the general case, the residuals will correspond to a convolution of the true sky with $\Gtop - \bmath{1}$. Since in practice $A_2 \ll A_1$ (i.e. the missing flux in the model is usually considerably less than the flux accounted for), the first realization of the pattern is dominant (moreover, as will also be shown below, in the WSRT case the positions of the ghosts in the two patterns fall on top of one another). We shall refer to $\Gtop - \bmath{1}$ as the {\em distilled ghost pattern}. 

\section[]{Theoretical derivation}

\label{sec:t_der}
In this section analytic expressions for the elements of $\Gtop$ are derived. In the image domain each element of $\Gcal$ represents a different ghost pattern.
The ghost patterns that are associated with $\Gcal$ form due to a loss of information. Since $\Gcal$ is of a lower rank than 
$\Rcal$ (assuming a two source sky and a single source in the model) some information is lost when $\Gcal$ is computed. The inadequate rank one model $\Gcal$ leads to a significant change in the Fourier characteristics
of the original matrix $\Rcal$. The change in Fourier characteristics manifest as ghost patterns when $\Gcal$ is imaged. 
When $\Gtop$ is calculated the ghost patterns remain the same (the fluxes of the sources do however change). When the antenna gain solutions are applied
to $\Rcal$ the ghost patterns of $\Gtop$ get convolved with the true sky, which implies that $\Rcal^\mathrm{(c)}$ will contain ghost sources. 

A brief introduction to the Appendix is given in Sect.~\ref{sec:app_intro}, since it is crucial to the theory in this section. The mathematical definition of a regularly-spaced array is given in Sect.~\ref{sec:conf}, while Sect.~\ref{sec:sky} gives a better description of the experimental test case that is considered. Analytic expressions for the elements of $\Gtop$ are then derived in Sect.~\ref{sec:ghost_p}. Section~\ref{sec:imaging} describes how this results in ghost patterns in the dirty images, while Sect.~\ref{sec:corrvis} analyzes the effect of $\Gtop$ on the corrected visibilities.

The derivations in this section are highly mathematical; the crucial result is Eq.~\ref{eq:IpqG}, which shows that
the calibrated visibilities on each baseline, in case of a two-source sky and one-source model, are sampled from a periodic
one-dimensional $uv$-distribution, which in turn corresponds to a string of delta functions in the image plane. The reader
wishing to skip the heavier mathematics is encouraged to take Eq.~\ref{eq:IpqG} at face value, and skip to 
Sect.~\ref{sec:imaging}, which explains how the ``strings'' corresponding to each baseline combine to form 
ghosts in the final image.

\subsection[]{Introduction to Appendix}
\label{sec:app_intro}

The brief introductory explanation from above will be expanded upon in the rest of Sect.~\ref{sec:t_der}. The Appendix will be one of the main tools we will use to accomplish this. A brief introduction
to the Appendix is therefore needed. The appendix contains lemmas and propositions. The propositions are the main results that are used to derive the theoretical
results in this section. The lemmas are the dependencies that are required by these propositions. The relation between the lemmas and propositions are discussed in greater detail in the Appendix itself. The Appendix proves certain properties of $\Rcal(\bmath{b})$, $\Gcal(\bmath{b})$ and
$\Gtop(\bmath{b})$, which are the extrapolated counterparts of $\Rcal$, $\Gcal$ and $\Gtop$ (see Definition~\ref{def:R} and Definition~\ref{def:G}). 
These properties turn out to be essential in deriving the distilled ghost pattern. The following propositions are proven in the Appendix:
\begin{enumerate}
 \item Proposition~\ref{prop:1}, the rank of $\Rcal(\bmath{b})$ is rank two. This proposition quantifies the amount of information that is being lost during the computation of $\Gcal(\bmath{b})$. 
 \item Proposition~\ref{prop:2}, the elements of the function-valued matrix $\Gcal(\bmath{b})$ are periodic, effectively one-dimensional, differentiable, Hermitian functions.
 \item Proposition~\ref{prop:3}, it follows from Proposition~\ref{prop:2} that the elements of $\Gcal(\bmath{b})$ can be written as an effectively one-dimensional Fourier-series (which ultimately leads to the formation of ghosts).
 \item Proposition~\ref{prop:4}, the elements of $\Gtop(\bmath{b})$ are also periodic, effectively-one dimensional, differentiable, Hermitian functions and therefore by Proposition~\ref{prop:3} can also be expressed as an effectively one-dimensional Fourier-series.
\end{enumerate}

\subsection[]{Regular and redundant array geometries}
\label{sec:conf}
Since the geometric regularity of the WSRT layout will turn out to have an important effect on ghost formation, let's provide a formal mathematical definition here. 

\begin{definition}[Regularly-spaced array] Let us pick a coordinate system with origin at the first antenna position $\bmath{u}_1=0$.
We shall call a set of antenna positions $\{\bmath{u}_p\}$ {\em regularly-spaced} if there exists a {\em common quotient baseline} (CQB) $\bmath{b}_0$ such that each antenna position is an integer multiple of $\bmath{b}_0$, i.e. that $\bmath{u}_p = \phi_p \bmath{b}_0$, with $\phi_p$ being a whole number. We will also require that $\bmath{b}_0$ is the largest such baseline (equivalently, the greatest common divisor of $\{\phi_p\}$ is 1).
\end{definition}

\begin{definition}[Array geometry matrix]
\label{def:phi}
The array geometry matrix $\mathbf{\Phi}$ is an $n \times n$ integer matrix with elements $\phi_{pq}=\phi_q-\phi_p$. 
\end{definition}

Ovbiously, a regularly-spaced array defined in this way is necessarily one-dimensional. Note that regularity is preserved under rotation (but only in an East-West array). Note also that $\bmath{b}_0$ does not necessarily correspond to a real baseline. Most commonly-used configurations of the WSRT are regularly-spaced: the 10 fixed antennas have a CQB of 144m, while the CQB of the array as a whole is determined by the positions of the movable antennas RTA to RTD, with typical CQB lenghts of 6 or 12m\footnote{Since the WSRT movable antennas can in principle be placed at any position along a continuum, a non-regularly-spaced configuration is technically possible, but never used in practice.}. A {\em redundant} array will have many identical entries in $\mathbf{\Phi}$. A regularly-spaced array is not necessarily redundant, but WSRT itself is highly redundant. 

The matrix $\mathbf{\Phi}$ has a few interesting mathematical properties, which will be fully derived in the Appendix. Note that the actual $uv$-coordinates of each baseline are given by $\bmath{b}_0\bmath{\Phi}$. The matrix $\bmath{\Phi}$ can be thought of as 
representing a whole number scaling relationship between all the $uv$-tracks of the interferometer, and the {\em reference track} given by the CQB $\bmath{b}_0(t)$, which is a function of time due to the Earth's rotation. 

\subsection[]{The two source problem}
\label{sec:sky}

Let us consider a sky composed of two unpolarized point sources of flux $A_1$ and $A_2$, and a calibration sky model consisting of just the primary source $A_1$. Since the solutions to the calibration equation (Eq.~\ref{eq:cal}) are invariant with respect to amplitude rescaling and positional shifts that are applied to both the sky and the model, we may, without loss of generality, restrict ourselves to the case where the primary source has unity flux and is located at the phase centre. The ``true sky'' as a function of position $\bmath{s}=(l,m)$ (where $l$ and $m$ are the direction cosines) is then equal to 
$I_{\mathcal{R}}(\bmath{s})=A_1\delta(\bmath{s})+A_2\delta(\bmath{s}-\bmath{s}_0),$ and the ``model sky'' to $I_{\mathcal{M}}(\bmath{s})=A_1\delta(\bmath{s})$, where $A_1 = 1$, $\bmath{s_0}=(l_0,m_0)\neq 0$ is the position of the secondary source, and $\delta$ is the Kronecker delta function. Let us further assume a perfect interferometer with unity gains, a monochromatic observation, and integration intervals sufficiently short to make smearing negligible. The 
``observed visibility'' corresponding to the true sky $I_{\Rcal}$ can be written as

\begin{equation}
\label{eq:rtrue}
r(\bmath{u}) = A_1 + A_2e^{-2\pi i \bmath{u}\cdot\bmath{s}_0},\\
\end{equation}

where $\bmath{u}\cdot\bmath{s}_0$ is a dot product. If the array is regularly-spaced as defined above, then the visibility observed 
by baseline $pq$ at $uv$-coordinates $\bmath{u}_{pq}=\phi_{pq}\bmath{b}_0$ is

\begin{equation}
\label{eq:phi}
V_{pq} = r(\bmath{u}_{pq}) = r(\phi_{pq}\bmath{b}_0) = A_1 + A_2e^{-2\pi i \phi_{pq}\bmath{b}_0\cdot\bmath{s}_0},\\
\end{equation}
where $\bmath{b}_0=\bmath{b}_0(t)$ is the CQB. The ``model visibilities'' corresponding to the model sky $I_{\Mcal}$ above, are trivially all unity.

\subsection[]{The extrapolated visibility matrix}
The observed visibilities for each observational time step can be packed into a two dimensional matrix 
\begin{equation}
\Rcal = \left[ \begin{array}{cccc}
V_{11} & V_{12} & \cdots & V_{1n}\\
V_{21} & V_{22} & \cdots & V_{2n}\\
\vdots&\vdots&\vdots&\vdots\\
V_{n1} & V_{n2} & \cdots & V_{nn}\end{array} \right].
\end{equation} 
The elements of $\Rcal$ are functions that depend on time. For a regularly-spaced array, Eq.~\ref{eq:phi} can be utilized to rewrite all the elements of $\Rcal$ as functions of $\bmath{b}_0$. We can express this formally via the following definition:

\begin{definition}[Extrapolated visibility matrix]
\label{def:R}
Let $\Rcal(\bmath{b}):\mathbfss{R}^2\rightarrow\mathbfss{C}^{n\times n}$ be an $n \times n$ Hermitian function-valued matrix with entries
\begin{equation}
\label{eq:rpq}
r_{pq}(\bmath{b}) = r(\phi_{pq}\bmath{b}),
\end{equation}
where $r$ is given by Eq.~\ref{eq:rtrue}, $\phi_{pq}$ is given by the array geometry matrix $\mathbf{\Phi}$, $\bmath{b}=(u,v)$ and 
$\bmath{s}_0=(l_0,m_0)\neq 0$ are real two-vectors, $A_1=1$, and $0<A_2<1$.\\ 
\end{definition}

This allows us to formally define $\Rcal(\bmath{b})$ over the entire $uv$-plane, i.e. for any value $\bmath{b}$. The actual observed visibilities $\Rcal$ at time $t$ are given by $\Rcal(\bmath{b}_0(t))$. For any given baseline $pq$, $r_{pq}(\bmath{b}_0)$ corresponds to the visibilities measured by that baseline. Since $\bmath{b}_0(t)$ follows an elliptical track, our actual ``measurements'' on baseline $pq$ (i.e. the values subject to calibration) are restricted to that series of $uv$-points. However, by replacing $\bmath{b}_0$ by the free variable $\bmath{b}$ in Eq.~\ref{eq:rpq}, we automatically define an ``extrapolated'' visibility function over the entire $uv$-plane. By definition, the values of $r_{pq}$ over the track $\bmath{b}_0(t)$ are equal to visibilities measured by baseline $pq$ over the track $\phi_{pq}\bmath{b}_0(t)$. 

Note also that Eq.~\ref{eq:rpq} can also be seen as a coordinate scaling relationship between the observed visibility distribution $r(\bmath{u})$ and any given $r_{pq}(\bmath{b})$. To emphasize this, we use the variable $\bmath{u}$ to represent coordinates in the  ``observed'' $uv$-plane (where $r$ lives), and $\bmath{b}$ for coordinates in the ``scaled'' $uv$-planes (where the $r_{pq}$'s live). This also implies that the ``sky'' corresponding to any $r_{pq}$ (i.e. the inverse Fourier transform of $r_{pq}$) is a scaled and stretched version of the true sky.

Finally and most crucially (as we'll see in the discussion of ALS below), the $\Rcal(\bmath{b})$ matrix for any $\bmath{b}\neq 0$ can be shown to have rank two (see Proposition~\ref{prop:1} in the Appendix).

\subsection[]{The calibration matrix}
\label{sec:ghost_p}

Since our model visibilities are all unity, the calibration process (Eq.~\ref{eq:cal2}) entails finding some kind of ``best fit'' 
rank one matrix $\Gcal$, given $\Rcal$. In effect, the calibration process results in a mapping $\Rcal\to\Gcal$; by extension, this also defines a mapping $\Rcal(\bmath{b})\to\Gcal(\bmath{b})$ for any $\bmath{b}$. For LS calibration, the best fit is given by Eq.~\ref{eq:LM}. This has proven difficult to explore analytically, so we will consider ALS calibration instead (and later empirically show that it yields similar results).

In a nutshell, ALS calibration obtains a $\Gcal$ by ``de-ranking'' $\Rcal$, i.e. keeping just its largest eigenvalue. More precisely:

\begin{definition}[ALS calibration matrix]
\label{def:G}
Let $\Gcal(\bmath{b})=\lambda(\bmath{b})\mathbf{x}(\bmath{b})\mathbf{x}^{H}(\bmath{b})$, where $\lambda(\bmath{b})$ is the largest eigenvalue of $\Rcal(\bmath{b})$, and $\mathbf{x}(\bmath{b})$
is its associated normalized eigenvector.\\
\end{definition}

We will designate the elements of $\Gcal$ as $g_{pq}(\bmath{b})$.

To provide a specific example of the above, let us create a theoretical three-element interferometer with a geometry matrix of
\begin{equation}
\mathbf{\Phi}=\left[ \begin{array}{ccc}
0 & 3 & 5\\
-3 & 0 & 2\\
-5& -2 & 0 \end{array} \right],
\end{equation}

\begin{figure}
 \centering
 \includegraphics[width=0.48\textwidth]{./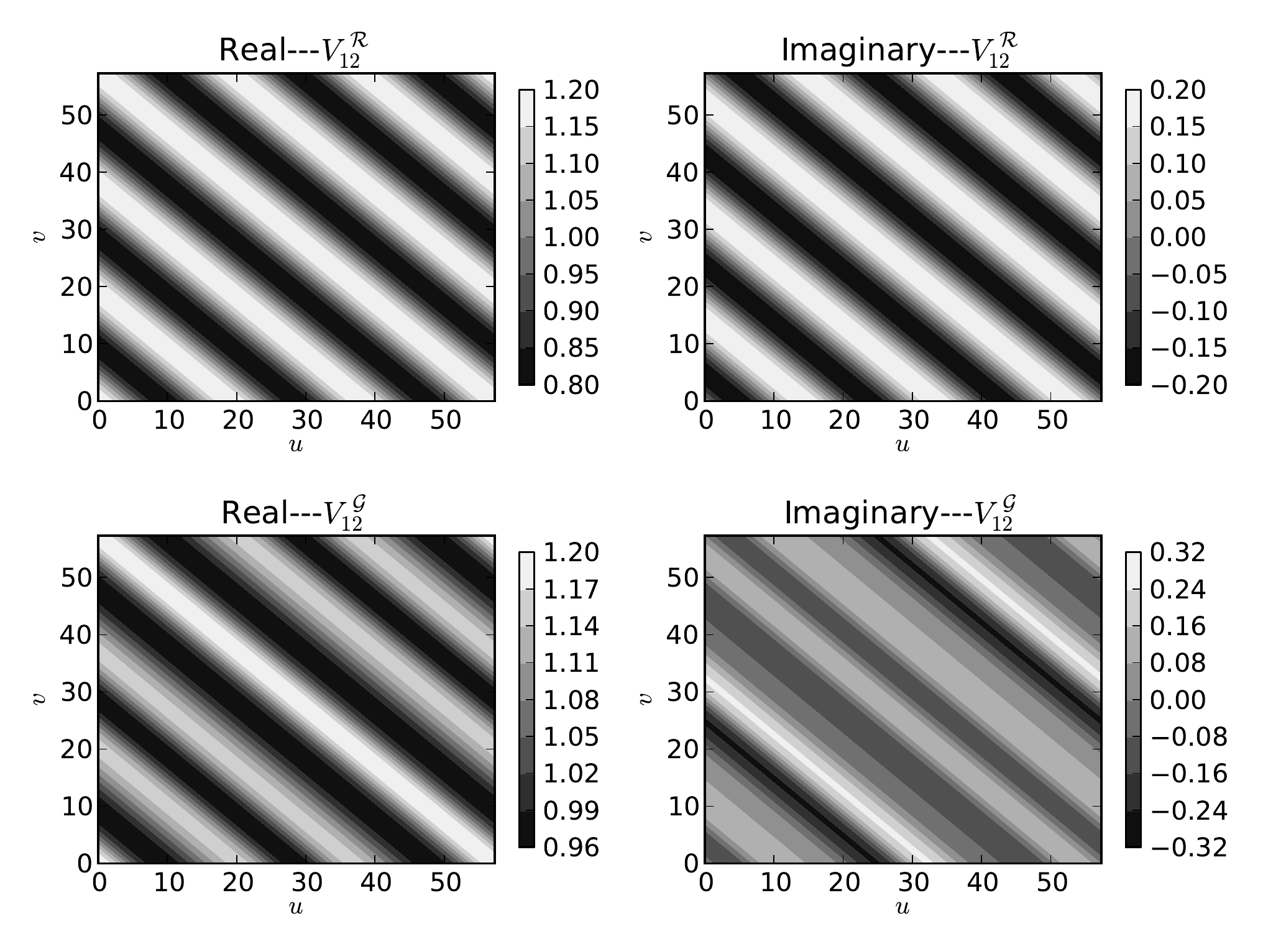}
 \caption{The functions $r_{12}(\bmath{b})$ and $g_{12}(\bmath{b})$.}
 \label{fig:fig1} 
\end{figure}
\begin{figure}
 \centering
 \includegraphics[width=0.48\textwidth]{./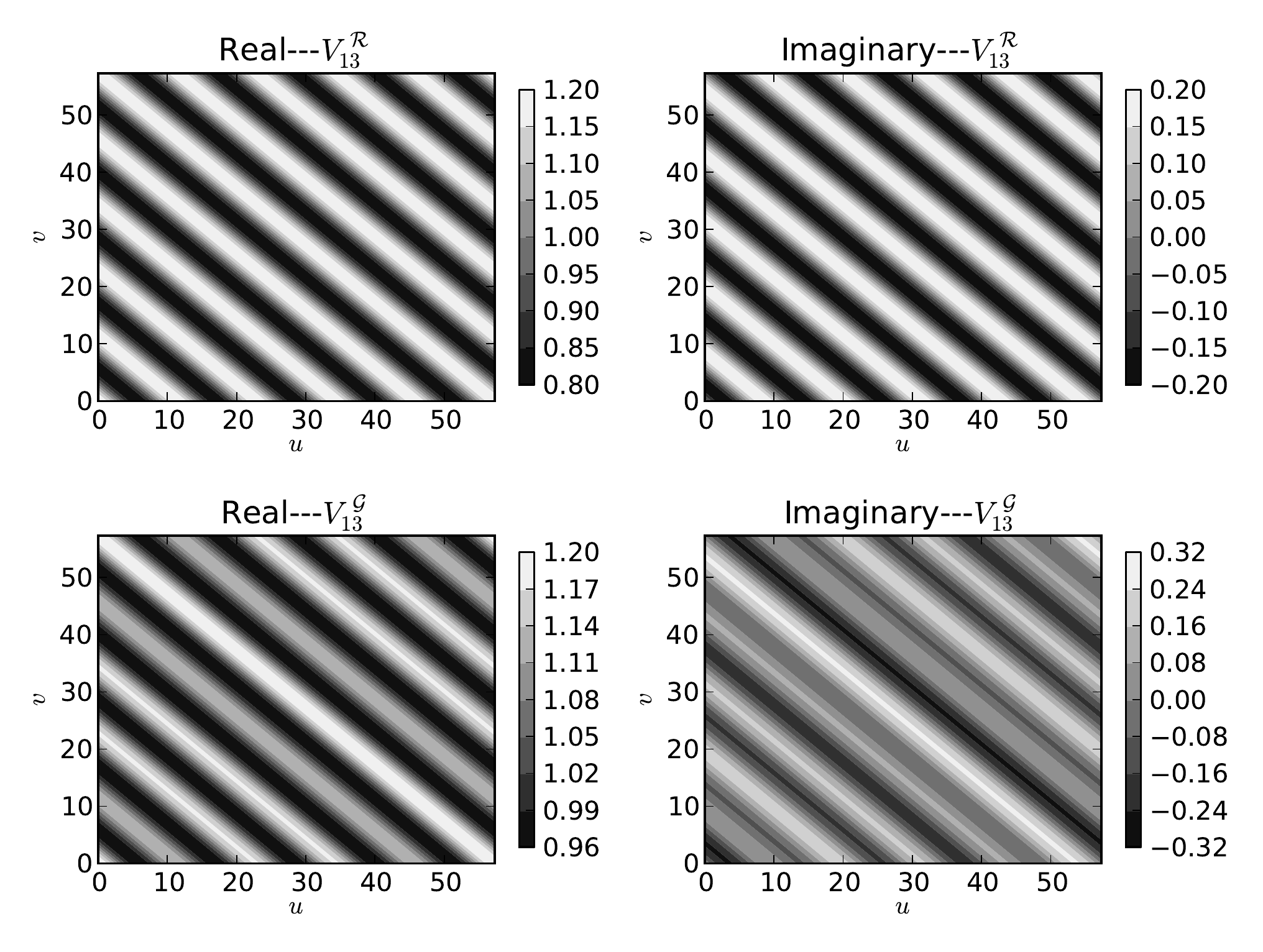}
 \caption{The functions $r_{13}(\bmath{b})$ and $g_{13}(\bmath{b})$.}
 \label{fig:fig2} 
\end{figure}
\begin{figure}
 \centering
 \includegraphics[width=0.48\textwidth]{./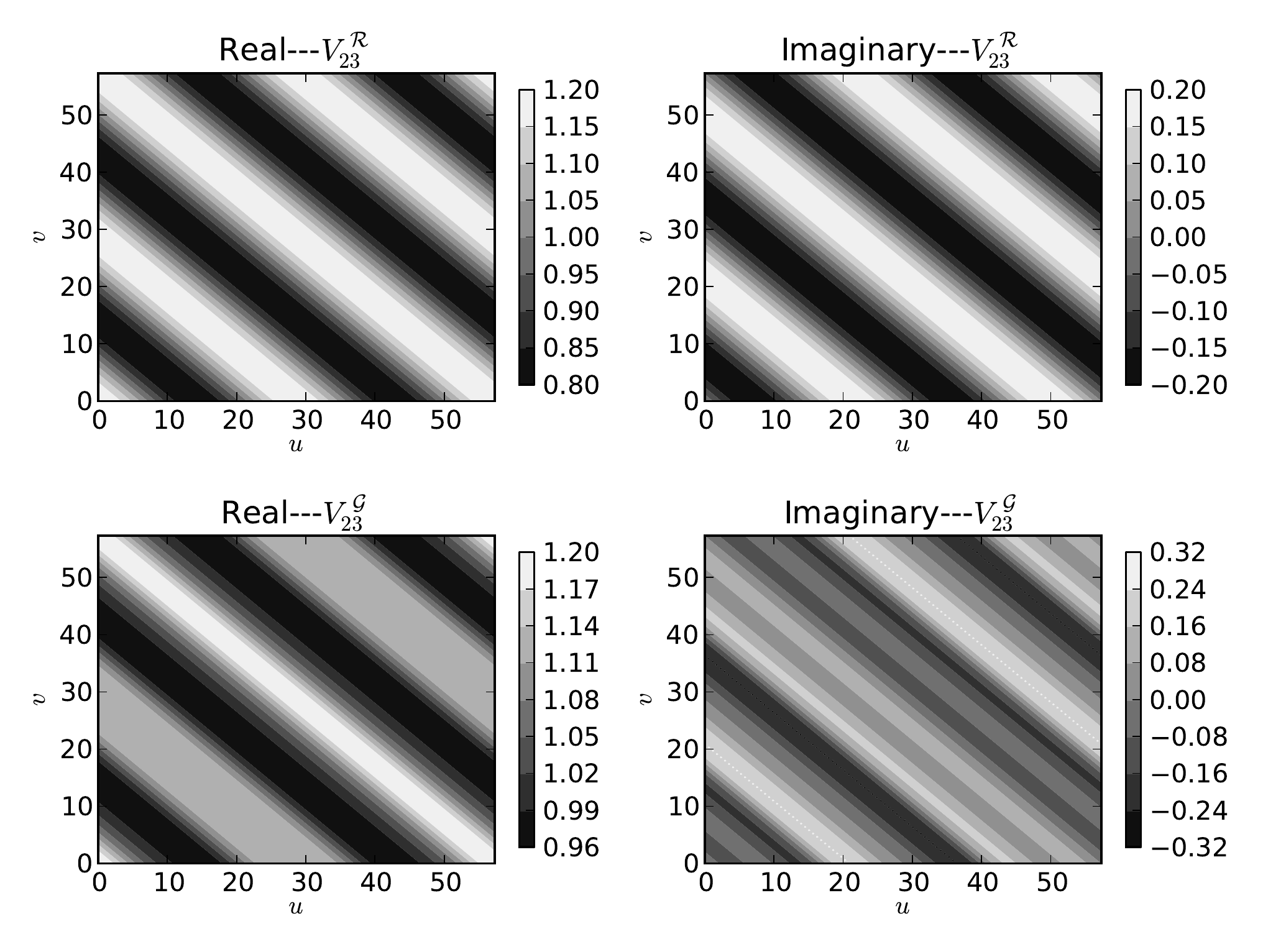}
 \caption{The functions $r_{23}(\bmath{b})$ and $g_{23}(\bmath{b})$.}
 \label{fig:fig3} 
\end{figure}

and place a secondary source of flux $A_2=0.2$Jy at $l_0=1^{\circ},m_0=1^{\circ}$. Fig.~\ref{fig:fig1}, Fig. ~\ref{fig:fig2} and Fig. ~\ref{fig:fig3} graphically display the resulting $\Rcal(\bmath{b})$ and $\Gcal(\bmath{b})$ matrices. The following observations can also be made. The functions $r_{pq}(\bmath{b})$ (top row of each figure) are trivial phase gradients, and are thus continuous, differentiable, Hermitian and periodic in 
the $u$ and $v$ direction with periods of  $\frac{1}{\phi_{pq}|l_0|}$ and $\frac{1}{\phi_{pq}|m_0|}$. They are effectively one-dimensional,
i.e. constant along each line $v=-\frac{l_0}{m_0}u+c$ for any $c$. The $\Gcal(\bmath{b})$ functions  (bottom row of each figure) have a more interesting structure, but are also differentiable, Hermitian, one-dimensional and periodic, with periods along $u$ and $v$ of $\frac{1}{|l_0|}$ and $\frac{1}{|m_0|}$. Moreover, this holds for any ALS calibration matrix as defined above (see Proposition~\ref{prop:2}). The difference between the $g_{pq}$'s is that 
each of them has a different secondary harmonic which is determined by $\phi_{12},\phi_{13}$ and $\phi_{23}$.

Any periodic, one-dimensional and differentiable function can be written out in terms of a one-dimensional discrete Fourier transform. We can therefore decompose each $g_{pq}$ as follows:

\begin{equation}
\label{eq:m_v}
g_{pq}(\bmath{b}) = \sum_{j=-\infty}^{\infty}c_{j,pq}^{\mathcal{G}}e^{-2\pi i j\bmath{b}\cdot\bmath{s}_0}
\end{equation}

Proposition~\ref{prop:3} derives this result formally. The coefficients $c_{j,pq}^{\mathcal{G}}$ (real, since $g_{pq}$ is Hermitian) 
have a very non-trivial structure, but they can be calculated using Eq.~\ref{eq:eq_2}. 

Now, since $g_{pq}(\bmath{b})$ represents the predicted corrupted visibilities (given that we have a unity model), it is fair to ask, what 
image-plane distribution corresponds to the visibility distribution $g_{pq}(\bmath{b})$? Doing an inverse 2D Fourier transform, we obtain:

\begin{equation}
\mathcal{F}^{-1}\{\ g_{pq} \}(\bmath{s}) = \sum_{j=-\infty}^{\infty}c_{j,pq}^{\mathcal{G}}\delta(\bmath{s}-j\bmath{s}_0),
\end{equation}

i.e. a sum of delta-functions whose locations are integer multiples of $\bmath{s}_0=(l_0,m_0)$. 

Let us now define something we'll call the ``$\Gcal$-sky of baseline $pq$'' as follows:

\begin{eqnarray}
I_{pq}^{\mathcal{G}}(\bmath{s}) &=& \mathcal{F}^{-1}\bigg\{\ g_{pq}\bigg(\frac{\bmath{b}}{\phi_{pq}}\bigg) \bigg\} \nonumber\\
\label{eq:IpqG}
&=&\sum_{j=-\infty}^{\infty}c_{j,pq}^{\mathcal{G}}\,\delta\bigg(\bmath{s}-\frac{j \bmath{s}_0}{\phi_{pq}}\bigg). 
\end{eqnarray}

The physical meaning of $I_{pq}^{\mathcal{G}}$ is as follows: it is a sky distribution whose Fourier transform yields a visibility
distribution that, along the $uv$-track given by $\phi_{pq}\bmath{b}_0(t)$, is consistent with the \emph{predicted corrupted visibilities} $g_{pq}$ along the track given by $\bmath{b}_0(t)$ (note how the scaling relationship of Eq.~\ref{eq:rpq} enters into Eq.~\ref{eq:IpqG}).  In other words, after the best-fitting calibration gains have been applied, the resulting predicted visibilities for each baseline $pq$ {\em will be consistent with a sky of delta functions spaced at intervals of} $s_0/\phi_{pq}$, with intensities given by $\{c_{j,pq}^{\mathcal{G}}\}$.

These delta functions are the fundamental ingredients of the ghosts observed in Fig.~\ref{fig:2004ghosts}. We will shortly show that the corrected visibilities exhibit a similar structure, but first let us consider what happens to the visibilities given by $g_{pq}$ during imaging.

\subsubsection{More on extrapolation} 

It has been our experience that the mathematical construct of extrapolated visibility functions, which is key to the
above arguments, is particularly difficult to explain or justify clearly. In this section we attempt to reformulate
the argument again in general terms.

In order to understand ghost formation, we need to understand the behaviour of the best-fitting visibilities
produced by the calibration process. The actual visibilities per each baseline are sampled along an elliptical track 
in the $uv$-plane. Analysing the mathematical properties of functions
defined along a specific $uv$-track proved to be a difficult problem, analysing continuous functions defined 
over the entire $uv$-plane proved more fruitful. We therefore proceed by finding a unique mapping from the 
former (specific) problem to the latter (general) problem, and back. 

More specifically, the extrapolation operation defined above (Eq.~\ref{eq:rpq}) provides a formal recipe for mapping 
sets of per-baseline visibilities onto functions defined over the entire $uv$-plane (giving us a per-baseline 
``virtual $uv$-plane'' that is consistent with the visibilities over the one specific track of that baseline). We then define the 
ALS calibration process in terms of operations on such virtual $uv$-planes. The virtual $uv$-planes corresponding 
to the best-fit visibilities ($g_{pq}(\bmath{b})$, bottom row of Figs.~\ref{fig:fig1}-\ref{fig:fig3}), of which
the actual visibilities are a subset (given by the baseline's $uv$-track) turn out to have certain mathematical properties: they are Hermitian, one-dimensional and periodic (with the same period 
across all baselines), and therefore correspond to a string of delta-functions in the image domain. Note how these 
properties are straightforward to establish for functions defined over the entire $uv$-plane, but are a lot less
obvious if one only considers a subset of the $uv$-plane along a track. (This observation is the
main justification for the extrapolation formalism.) This establishes that the best-fit visibilities 
per each baseline are consistent with a string of delta functions. Finally, the geometric scaling relationship 
implicit in Eq.~\ref{eq:rpq} causes the spacing of the delta-functions to be inversely proportional to baseline length.

\subsection{Imaging}
\label{sec:imaging}

In the situation above, each baseline's predicted corrupted visibilities correspond to its own apparent sky $I_{pq}$. During conventional interferometric imaging, the per-baseline visibilities are interpolated onto a ``common'' $uv$-plane using convolutional gridding, and the result is Fourier transformed back into an estimate of the sky (the so-called ``dirty image''). Mathematically, this can be described as follows: 

\begin{equation}
I_D = \mathcal{F}^{-1}\bigg\{\ \sum_{pq} S_{pq}\mathcal{F}\{ I_{pq} \}\ \bigg\}. 
\end{equation}

Here, $S_{pq}$ is the \emph{sampling function} of baseline $pq$. The sampling function is only non-zero in the neighbourhood of the track described by $\bmath{u}_{pq}$, and accounts for both the imaging weights and the interpolation coefficients of the gridding process. This can be rewritten as

\begin{equation}
\label{eq:Idpq}
I_D = \sum_{pq} P_{pq} \circ I_{pq}, 
\end{equation}
where ``$\circ$'' denotes convolution, and $P_{pq}=\mathcal{F}^{-1}\{S_{pq}\}$ is the (unnormalized) PSF associated with baseline $pq$. Note that in the case of each baseline seeing a common sky $I$, the above becomes
\begin{equation}
I_D = (\sum_{pq} P_{pq} ) \circ I, 
\end{equation}
which is the familiar result that the dirty image $I_D$ is the convolution of the true sky $I$ by the point spread function of the array $P$ given by
\begin{equation}
\label{eq:psf}
P = \sum_{pq} P_{pq}.
\end{equation}

Now, recall that Eq.~\ref{eq:IpqG} describes a string of delta functions spaced at intervals of $\bmath{s}_0/\phi_{pq}$. 
If we define $\phi_0$ as the least common multiple of all $\phi_{pq}$, we can rewrite the equation as a sequence of delta functions spaced
at intervals of $\bmath{s}_0/\phi_0$, some of them possibly of zero amplitude: 

\begin{eqnarray}
\label{eq:IpqG1}
I_{pq}^{\mathcal{G}}(\bmath{s}) 
= \sum_{k=-\infty}^{\infty}d_{k,pq}^{\mathcal{G}}\,\delta\bigg(\bmath{s}-\frac{k \bmath{s}_0}{\phi_0}\bigg),
\end{eqnarray}

where $d_{k,pq}^{\mathcal{G}}=c_{j,pq}^{\mathcal{G}}$ if there is an integer $j$ such that $k\phi_{pq}=j\phi_0$, and zero otherwise. To simplify further equations, we'll use the $\delta_k$ as shorthand for the $k$-th delta function above:

\[
\delta_k(\bmath{s}) = \delta\bigg(\bmath{s}-\frac{k \bmath{s}_0}{\phi_0}\bigg).
\]

Substituting this into Eq.~\ref{eq:Idpq}, we get
\begin{eqnarray}
I_D^\mathcal{G} &=& \sum_{pq} P_{pq} \circ \bigg ( \sum_{k=-\infty}^{\infty}d_{k,pq}^{\mathcal{G}}\,\delta_k\bigg) \\
\label{eq:ID}
 &=& \sum_{k=-\infty}^{\infty} \bigg ( \sum_{pq} d_{k,pq}^{\mathcal{G}}P_{pq} \bigg ) \circ \delta_k.
\end{eqnarray}

Physically, this can be interpreted as follows. The dirty image $I_D^\mathcal{G}$ which we get as a result of imaging the predicted corrupted visibilities consists of a string of delta functions at regularly-spaced locations $k \bmath{s}_0/\phi_0$, each one convolved with its own {\em ghost spread function} (GSF) $P^\mathcal{G}_k$:

\begin{equation}
\label{eq:gsf}
P^\mathcal{G}_k = \sum_{pq} d_{k,pq}^{\mathcal{G}}P_{pq}.
\end{equation}

Comparing this to Eq.~\ref{eq:psf}, we can now understand the previously puzzling observation that the ghost sources in 
Fig.~\ref{fig:2004ghosts} appear to be convolved with differing point spread functions, similar but not identical 
to the nominal PSF of the WSRT. 

Furthermore, ghost positions do not depend on frequency (only on array and source 
geometry) -- though the GSF of course does. This is also consistent with previous observations.

\subsection{Corrected visibilities}
\label{sec:corrvis}

In real life, one would typically be imaging the \emph{corrected visibilities} (Sect.~\ref{sec:ghostform}) given by

\begin{equation}
\label{eq:cor}
 \Rcal^\mathrm{(c)} = \GG^{-1}\Rcal\GG^{-H} = \Gcal^{\top}\odot\Rcal,
\end{equation}

and our real goal is to understand the effect of $\Gcal$ on the \emph{corrected sky} $I^\mathrm{(c)} = \Fcal^{-1}\{\Rcal^\mathrm{(c)}\}$. To get there, we need to take an intermediate step. First, let us define  a ``$\Gcal^\top$-sky'' whose Fourier transform is consistent with the visibility distribution given by $g_{pq}^{-1}$. Proposition~\ref{prop:4} shows\footnote{Note that this proposition implicitly assumes
$g_{pq}\neq 0$, i.e. that the ALS calibration solutions are not null. Intuition suggests that this is a safe assumption: a null gain solution would yield null predicted visibilities, which could hardly be a ``best fit'' to the calibration equation in any sense. However, obtaining a rigorous proof of this has been surprisingly difficult, so we will let the assumption stand as is.} that the visibility distribution $g^{-1}_{pq}(\bmath{b})$ can also be decomposed into a Fourier series:

\begin{equation}
g^{-1}_{pq}(\bmath{b}) = \sum_{j=-\infty}^{\infty}c^\top_{j,pq} e^{2\pi i j\bmath{b}\cdot\bmath{s}_0},
\end{equation}

which implies that the corresponding ``$\Gcal^\top$-sky'' has a similar form to Eq.~\ref{eq:IpqG1}, but with a different set of coefficients:

\begin{equation}
\label{eq:G_inv}
I_{pq}^{\mathcal{G}^{\top}} = \sum_{k=-\infty}^{\infty}d_{k,pq}^{\top}\,\delta_k.
\end{equation}

Now, consider the matrix $\Gcal^\top(\bmath{b}) \odot \Rcal(\bmath{b})$. We'll designate its elements as $r_{pq}^\top(\bmath{b})$. The inverse Fourier transform of each element is then 

\begin{equation}
\mathcal{F}^{-1}\{ r_{pq}^\top \} = \mathcal{F}^{-1}\{ g^{-1}_{pq} \} \circ \mathcal{F}^{-1}\{ r_{pq} \},
\end{equation}

and the inverse Fourier transforms of both components have already been derived above. This means that the ``corrected sky'' 
corresponding to the corrected visibilities of baseline $pq$ is given by

\begin{equation}
I_{pq}^\mathrm{(c)} = I_{pq}^{\mathcal{G}^{\top}} \circ I^{\mathcal{R}},
\end{equation}
i.e. is simply a convolution of the real sky $I^{\mathcal{R}}$ with the ``ghost pattern'' of delta functions given by $I_{pq}^{\mathcal{G}^{\top}}$ above. In other words, the ``corrected sky'' will contain multiple instances of the fundamental ghost 
pattern (what we call the distilled ghost pattern), centered on each source, and scaled by the flux of that source. It 
can be seen that the sky corresponding to the residuals $\Rcal^\Delta$ is given by

\begin{equation}
\label{eq:IpqDelta}
I_{pq}^\Delta = I_{pq}^{\mathcal{G}^{\top}\!-1} \circ I^{\mathcal{R}} = (I_{pq}^{\mathcal{G}^{\top}}-\delta) \circ I^{\mathcal{R}}.
\end{equation}

The $I_{pq}^{\mathcal{G}^{\top}\!-1}$ term is the per-baseline sky associated with the distilled ghost pattern $\Gcal^\top-\bmath{1}$. 
By analogy with Eq.~\ref{eq:ID}, we can derive an expression for the full dirty image:

\begin{equation}
\label{eq:IdGT1}
I_D^{\mathcal{G}^{\top}\!-1} = \sum_{k=-\infty}^{\infty} \bigg ( 
\sum_{pq} \hat{d}_{k,pq}^{\top} P_{pq} 
\bigg ) \circ \delta_k,
\end{equation}

where $\hat{d}_{k,pq}^\top = d_{k,pq}^\top-1_{\{k=0\}}$, and the notation $1_{\{k=0\}}$ represents a series whose coefficients 
are 1 at $k=0$ and 0 elsewhere. Note the physical meaning of this operation: the $d_{0,pq}^\top$ coefficient corresponds to the
position of the real source $A_1$ in the image, and is close to unity, since the corrected visibilities contain the real source 
as well as all the ghosts. Subtracting 1 from this corresponds to taking the residual visibilities.

In our simple case the real sky consists of two sources, and the resulting corrected sky is a superposition of 
two patterns given by $I_{pq}^{\mathcal{G}^{\top}}$, scaled by $A_1$ and $A_2$, and centered on origin and on $\bmath{s}_0$, 
respectively. Since each pattern yields ghosts at discrete intervals of $\bmath{s}_0/\phi_{0}$, the two sets of positions
align, and we can work out the amplitudes of the resulting superposed ghost sources by summing up the corresponding coefficients. 
By analogy with Eq.~\ref{eq:ID}, we can derive the following equation for the dirty image formed from corrected visibilities:

\begin{equation}
\label{eq:Idcorr}
I_D^\mathrm{(c)} = \sum_{k=-\infty}^{\infty} \bigg ( 
\sum_{pq} (A_1d_{k,pq}^\top + A_2d_{k-\phi_0,pq}^\top)
P_{pq} 
\bigg ) \circ \delta_k.
\end{equation}

From Eq. \ref{eq:IpqDelta}, it follows that the dirty image corresponding to the residuals can be obtained by subtracting unity from the corresponding $d$ coefficient:

\begin{equation}
\label{eq:Idres}
I_D^\Delta = \sum_{k=-\infty}^{\infty} \bigg ( 
\sum_{pq} (A_1 \hat{d}_{k,pq}^\top + A_2 \hat{d}_{k-\phi_0,pq}^\top)
P_{pq} 
\bigg ) \circ \delta_k,
\end{equation}

where again $\hat{d}_{k,pq}^\top=d_{k,pq}^\top-1_{\{k=0\}}$.

Equations~\ref{eq:ID}, \ref{eq:IdGT1}, \ref{eq:Idcorr} and \ref{eq:Idres} summarize the formation of ghost patterns in the predicted corrupted, corrected and residual visibilities. 

For our purposes, it is important to derive a theoretical prediction for the amplitudes of individual ghost sources as a 
fraction of the ``missing flux'' $A_2$. Consider the $k$-th ghost source located at $k\bmath{s}_0/\phi_0$. 
From Eq.~\ref{eq:IdGT1}, it follows that the amplitude of the ghost source is given by the weighted sum

\[
\sum_{pq} \hat{d}_{k,pq}^{\top} P_{pq}(0), 
\]

where the per-baseline weights $P_{pq}(0)$ are ultimately determined by the imaging weights. For simplicity, let us consider the case of natural weighting, in which case the sum becomes unweighted. We can then define the quantity

\newcommand{\pqavg}[1]{\left\langle#1\right\rangle_{pq}}

\begin{equation}
\zeta_k = \pqavg{\hat{d}_{k,pq}^{\top}}
\end{equation}

where $\pqavg{\cdot}$ denotes averaging over all baselines $pq$. This gives us the amplitude of the $k$-th ghost source in the distilled pattern (assuming natural weighting), and can be computed analytically from the results above. Likewise, the $k$-th source in the corrected residuals is a superposition of two appropriately scaled sources from the distilled pattern, and its amplitude is given by (assuming natural weighting):

\begin{equation}
\label{eq:zeta}
\zeta_k^\Delta = A_1 \zeta_k + A_2 \zeta_{k-\phi_0}
\end{equation}

Of particular interest is the quantity $\zeta_{\phi_0}^\Delta/A_2$, as this gives the relative amplitude of the 
``flux suppression ghost'' sitting on top of source $A_2$. Indirectly, this one ghost has been observable since the 
invention of selfcal, since it corresponds to the previously noted phenomenon of flux suppression in 
unmodelled sources. The theoretical derivation given here provides an explanation for this.

\section[]{Results}
\label{sec:results}


Section~\ref{sec:ghost_p} provides a theoretical framework for understanding ghost formation, as well as a mechanism for predicting the distribution and amplitudes of ghosts in the two-source case. In this section, we apply the mechanism to predict ghost formation for a specific observational scenario, and compare the results with simulations.

As discussed above, the ghost pattern is highly dependent on the array configuration. The results in this section were all generated with a traditional (36,108,1332,1404m) WSRT configuration. Unless specified otherwise, $A_2 = 0.2$ Jy, $A_1 = 1$, Jy, $l_0 = 1^{\circ}$ and $m_0=0^{\circ}$, and we assumed monochromatic observations at a frequency of 1.45 GHz. To verify the theory developed above, we compare the distribution of ghost sources obtained  by three methods:

\begin{itemize}
\item A theoretically predicted distribution, using the framework of Section~\ref{sec:ghost_p}.
\item ALS calibration of simulated data (using a custom-made implementation).
\item LS calibration of simulated data using the MeqTrees \citep{meqtrees} package.
\end{itemize}

\begin{figure*}
\includegraphics[width=.949\textwidth]{./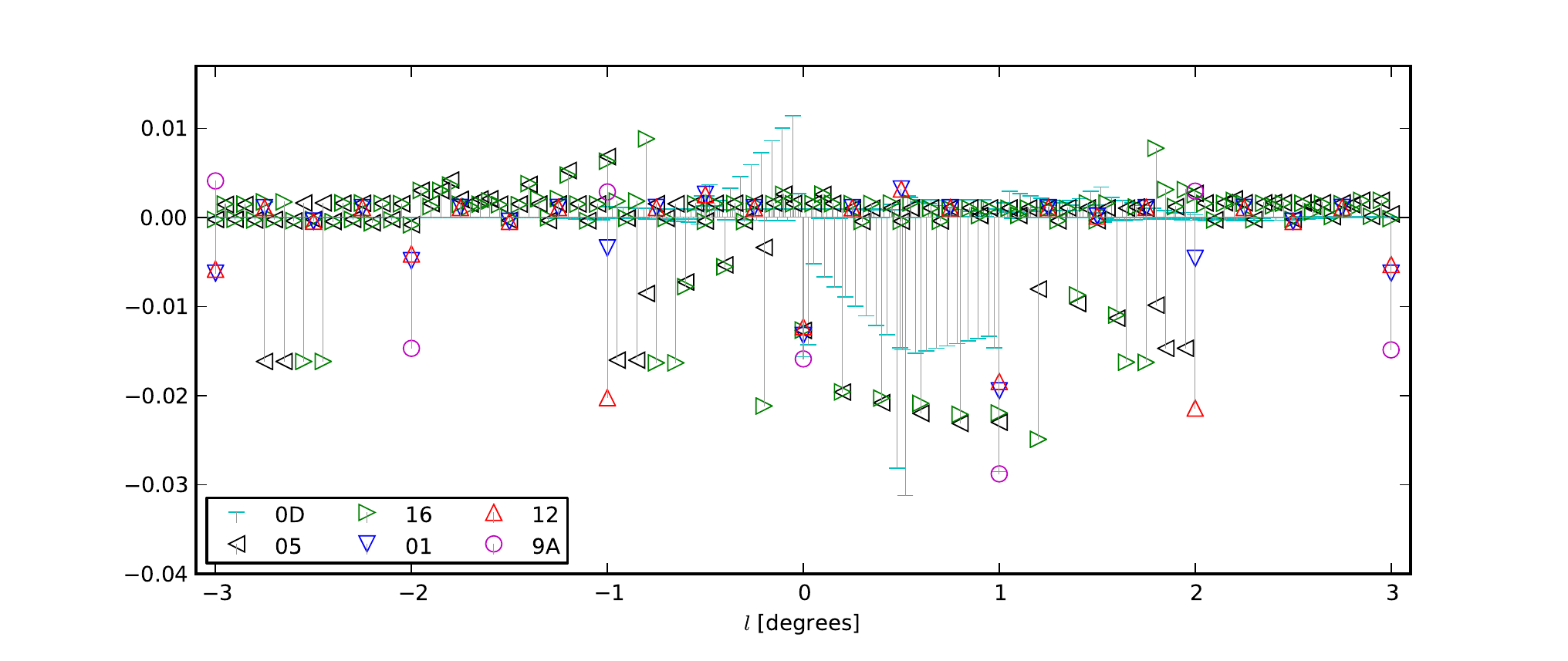}
\caption{Theoretical ghost pattern for baselines  9A (36m), 01 and 12 (144m), 05 and 16 (720m), 0D (2.7km).}
\label{fig:theor_stem} 
\includegraphics[width=.949\textwidth]{./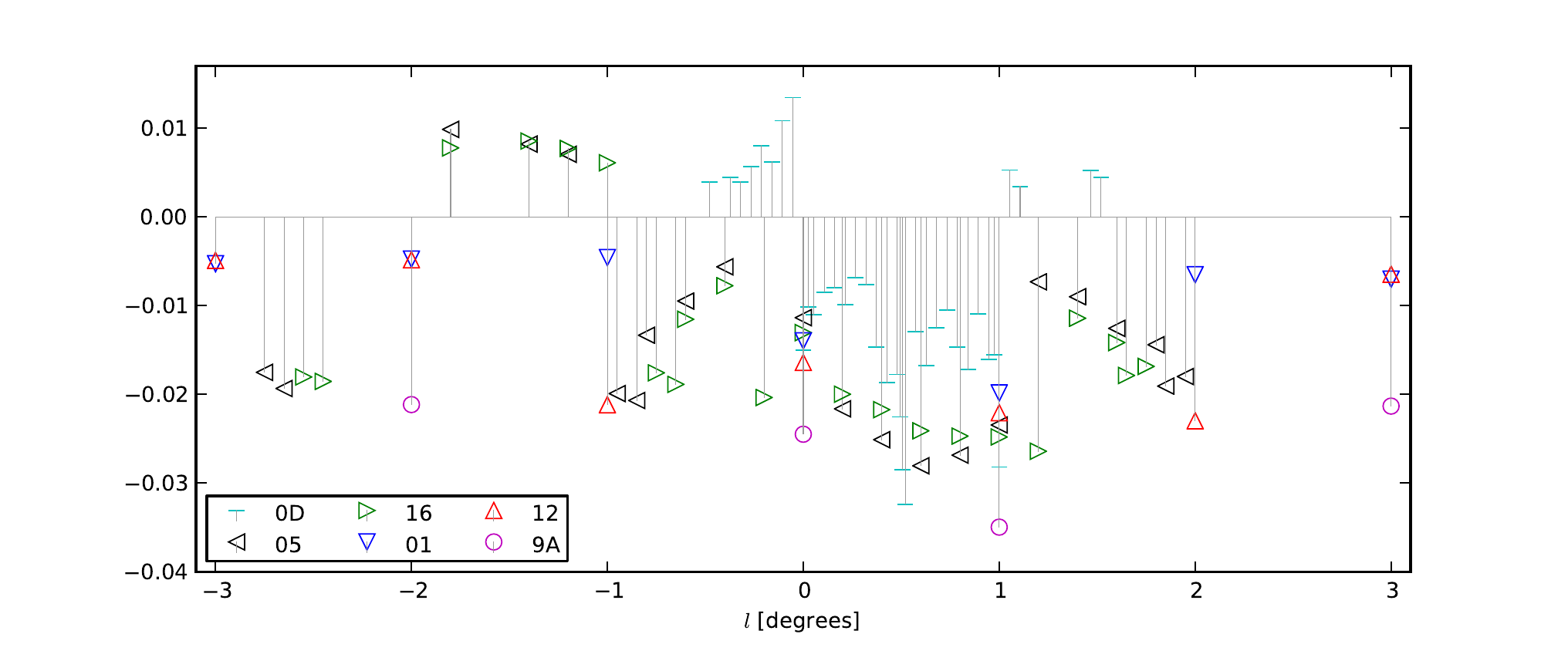}
\caption{ALS ghost pattern for baselines  9A (36m), 01 and 12 (144m), 05 and 16 (720m), 0D (2.7km).}
\label{fig:als_stem} 
\includegraphics[width=.949\textwidth]{./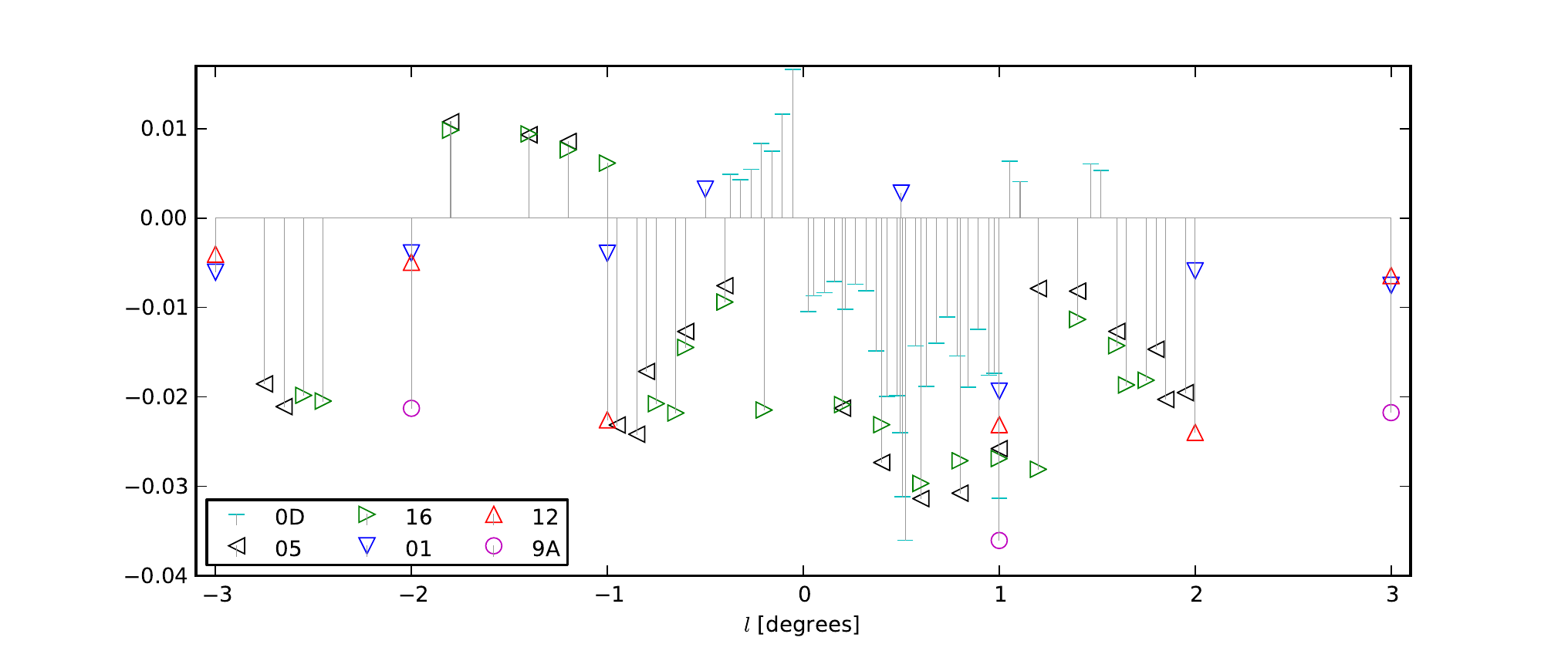}
\caption{LS ghost pattern for baselines  9A (36m), 01 and 12 (144m), 05 and 16 (720m), 0D (2.7km).}
\label{fig:ls_stem} 
\end{figure*}

\newlength{\plotheight}
\plotheight = .3\columnwidth

\begin{figure*}%
\centering
\begin{minipage}{\columnwidth}\begin{center}\parbox{0cm}{\scriptsize~~ALS:~01~(144m)}%
\includegraphics[height=\plotheight]{./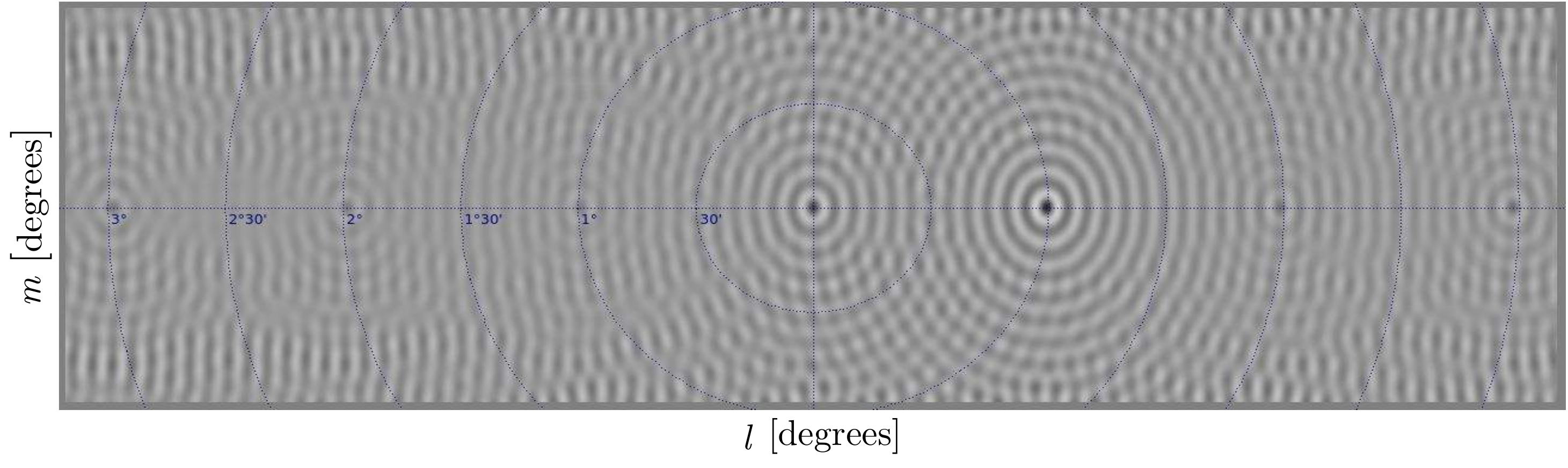}%
\end{center}\end{minipage}%
\begin{minipage}{\columnwidth}\begin{center}\parbox{0cm}{\scriptsize~~ALS:~05~(720m)}%
\includegraphics[height=\plotheight]{./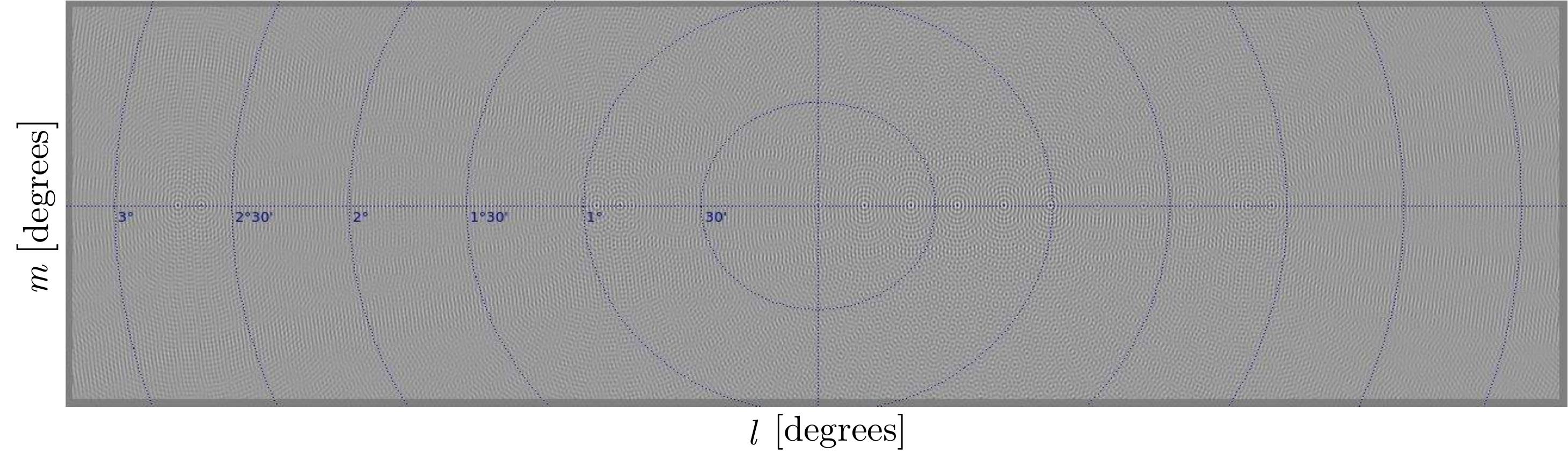}
\end{center}\end{minipage}\\
\begin{minipage}{\columnwidth}\begin{center}\parbox{0cm}{\scriptsize~~ALS:~12~(144m)}%
\includegraphics[height=\plotheight]{./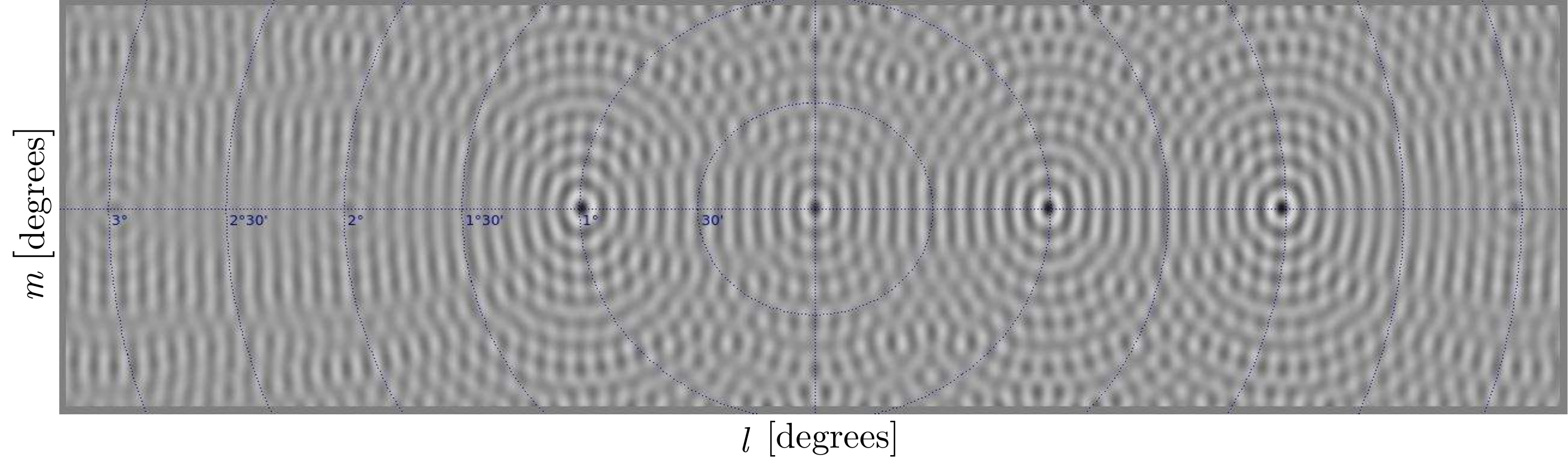}
\end{center}\end{minipage}%
\begin{minipage}{\columnwidth}\begin{center}\parbox{0cm}{\scriptsize~~ALS:~16~(720m)}%
\includegraphics[height=\plotheight]{./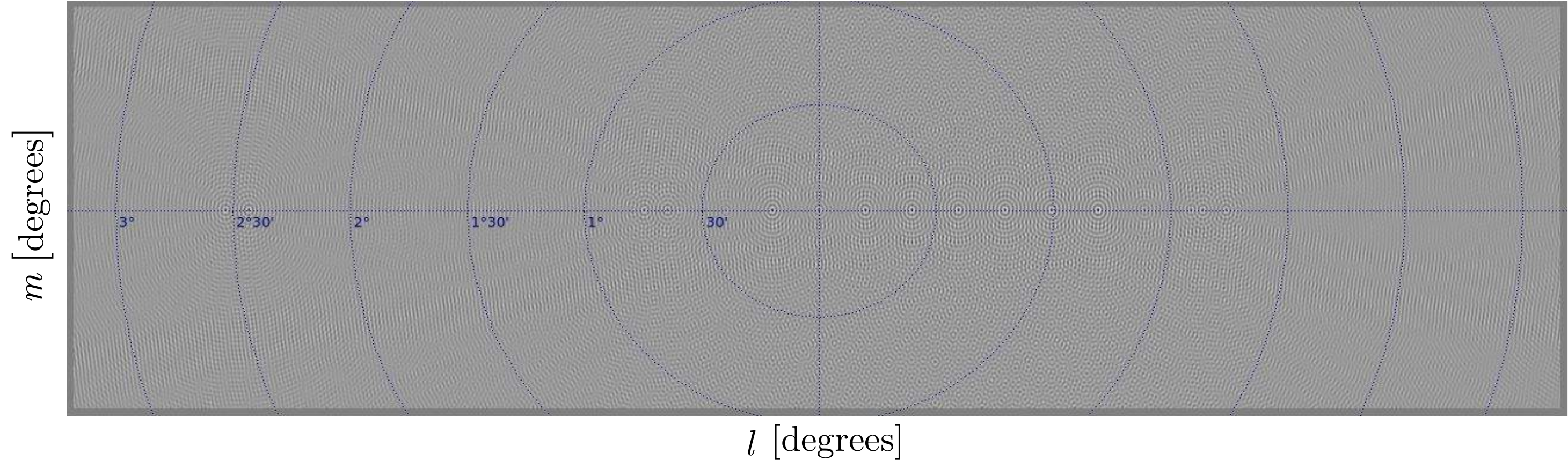}
\end{center}\end{minipage}\\
\begin{minipage}{\columnwidth}\begin{center}\parbox{0cm}{\scriptsize~~ALS:~9A~(36m)}%
\includegraphics[height=\plotheight]{./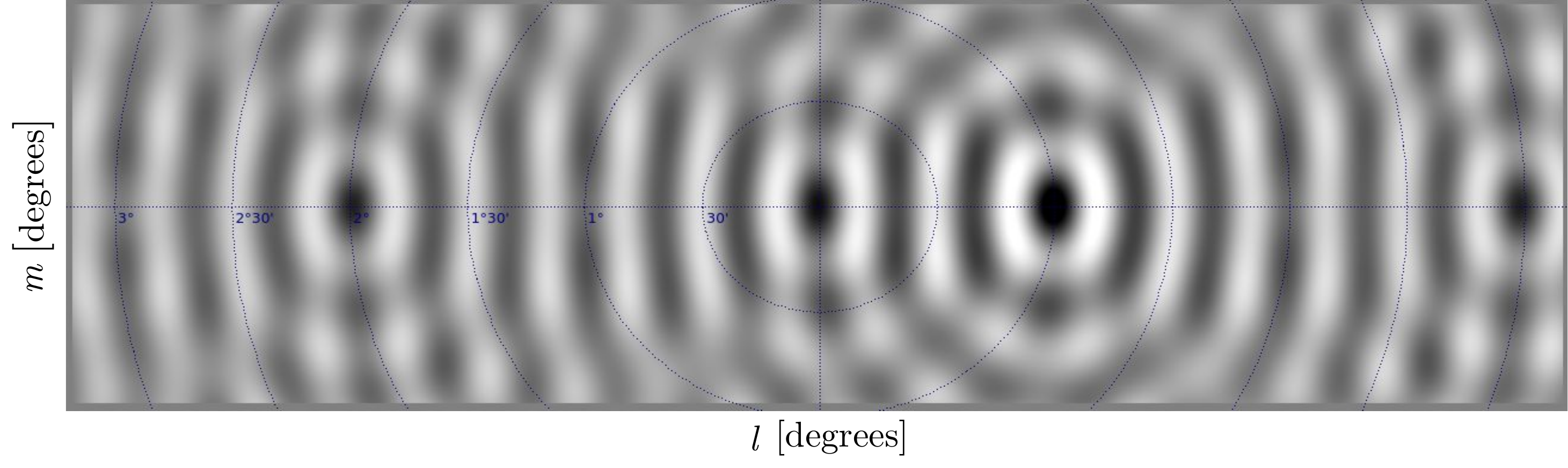}
\end{center}\end{minipage}%
\begin{minipage}{\columnwidth}\begin{center}\parbox{0cm}{\scriptsize~~ALS:~0D~(2.7km)}%
\includegraphics[height=\plotheight]{./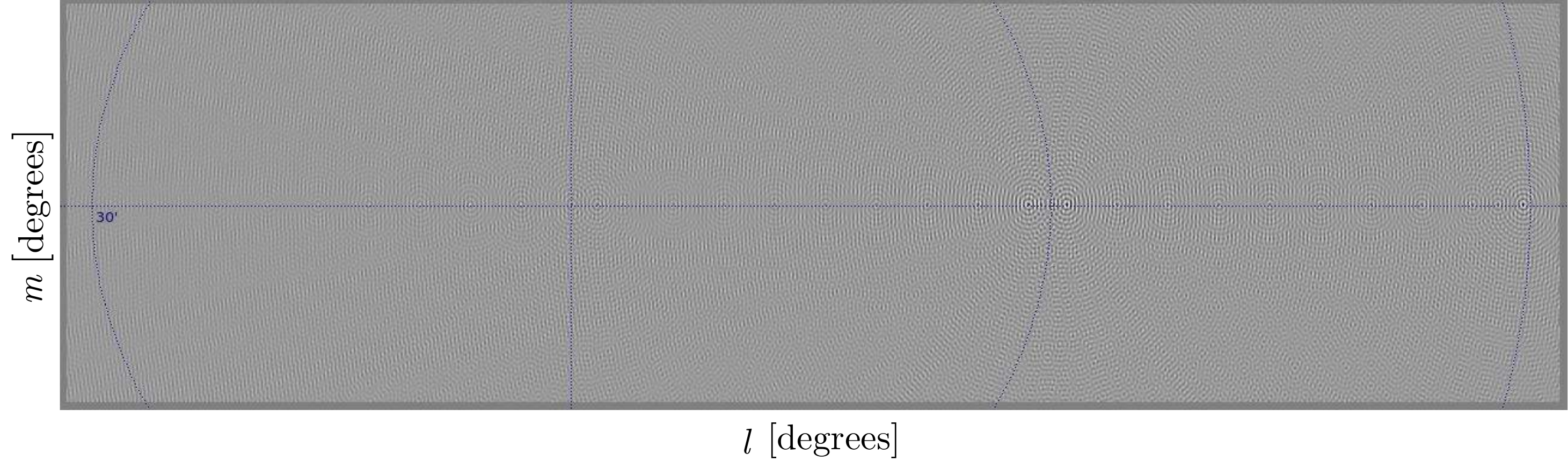}
\end{center}\end{minipage}
\begin{minipage}{\columnwidth}\begin{center}\parbox{0cm}{\scriptsize~~LS:~01~(144m)}%
\includegraphics[height=\plotheight]{./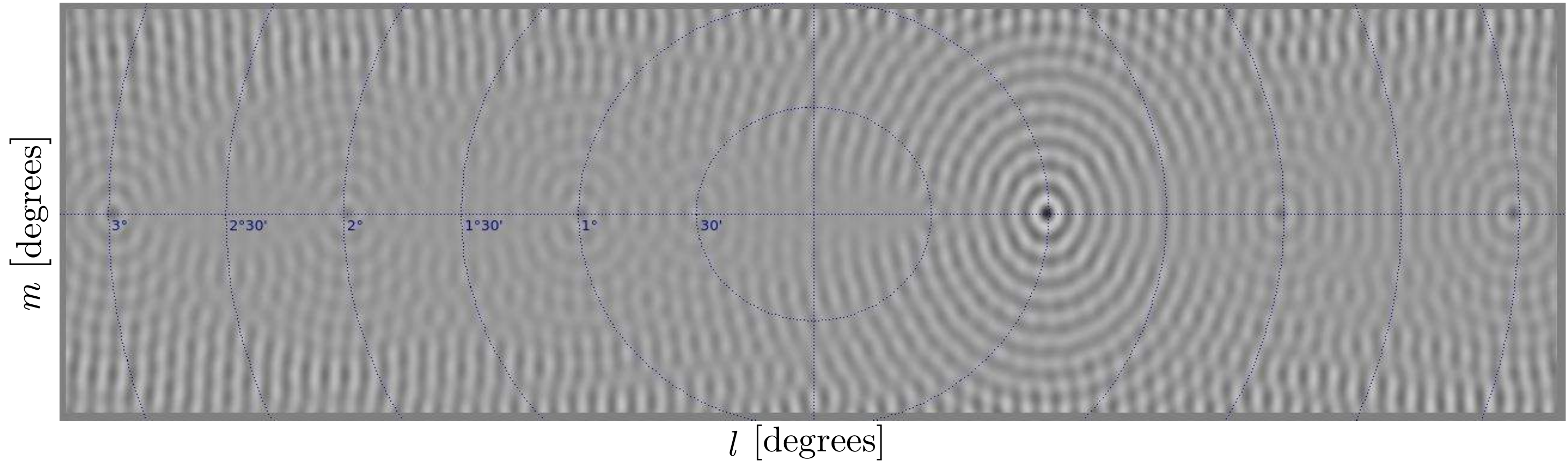}
\end{center}\end{minipage}
\begin{minipage}{\columnwidth}\begin{center}\parbox{0cm}{\scriptsize~~LS:~05~(720m)}%
\includegraphics[height=\plotheight]{./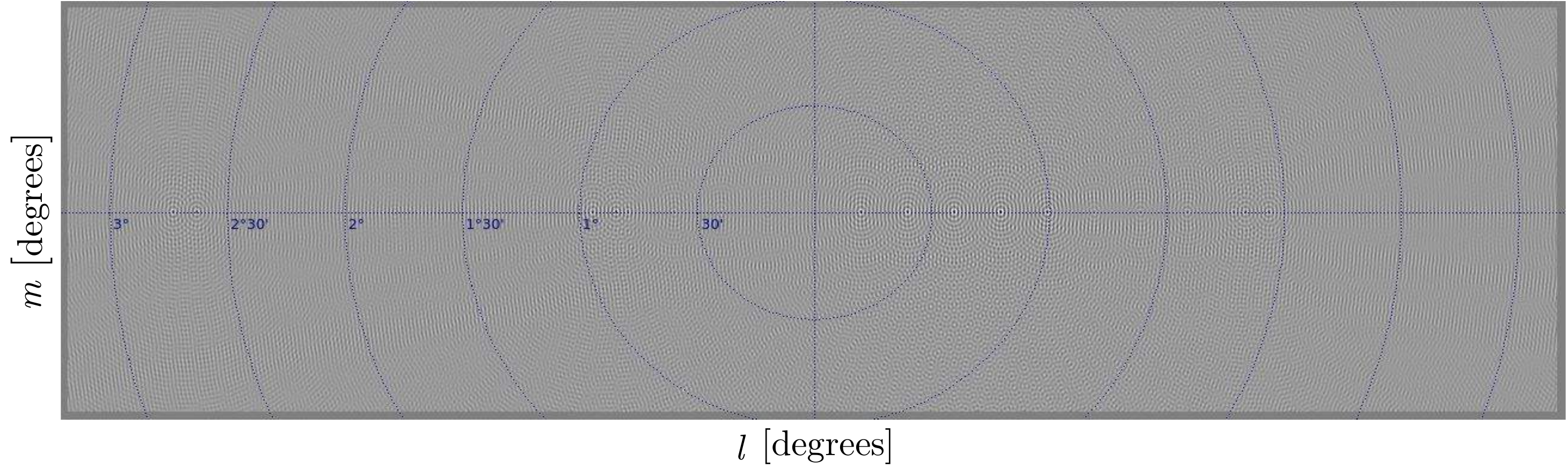}
\end{center}\end{minipage}\\
\begin{minipage}{\columnwidth}\begin{center}\parbox{0cm}{\scriptsize~~LS:~12~(144m)}%
\includegraphics[height=\plotheight]{./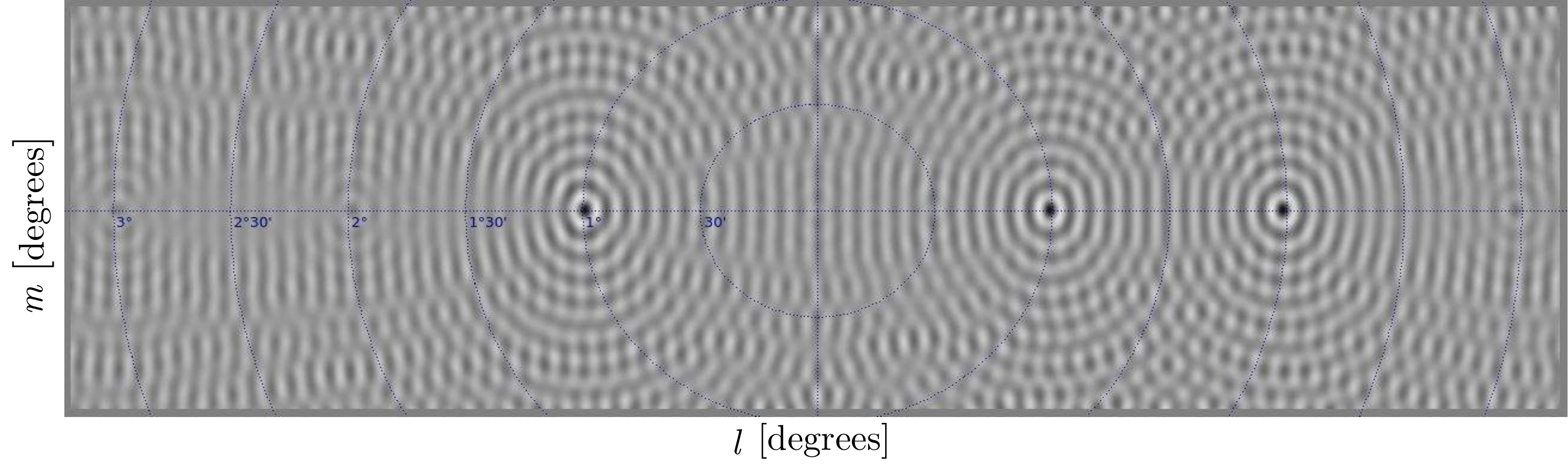}
\end{center}\end{minipage}
\begin{minipage}{\columnwidth}\begin{center}\parbox{0cm}{\scriptsize~~LS:~16~(720m)}%
\includegraphics[height=\plotheight]{./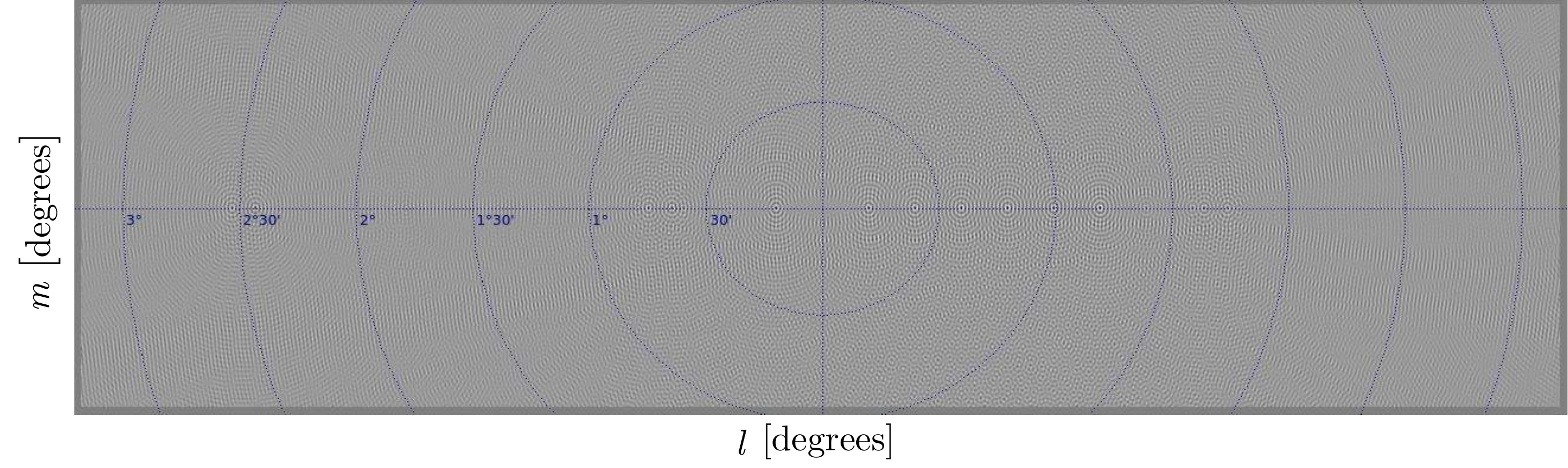}
\end{center}\end{minipage}\\
\begin{minipage}{\columnwidth}\begin{center}\parbox{0cm}{\scriptsize~~LS:~9A~(36m)}%
\includegraphics[height=\plotheight]{./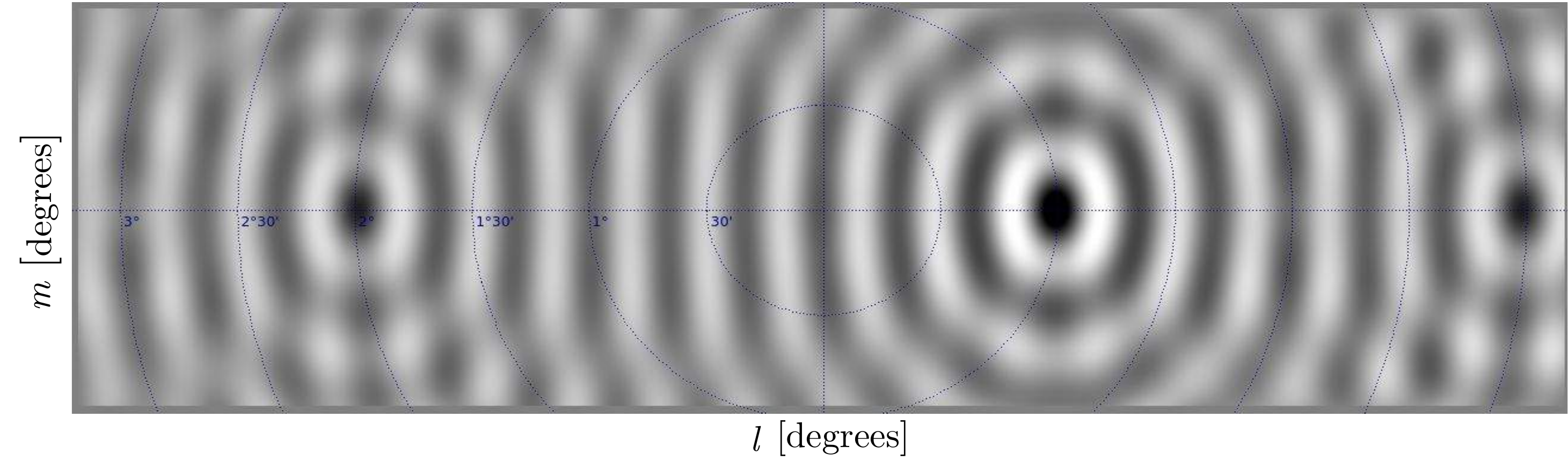}
\end{center}\end{minipage}
\begin{minipage}{\columnwidth}\begin{center}\parbox{0cm}{\scriptsize~~LS:~0D~(2.7km)}%
\includegraphics[height=\plotheight]{./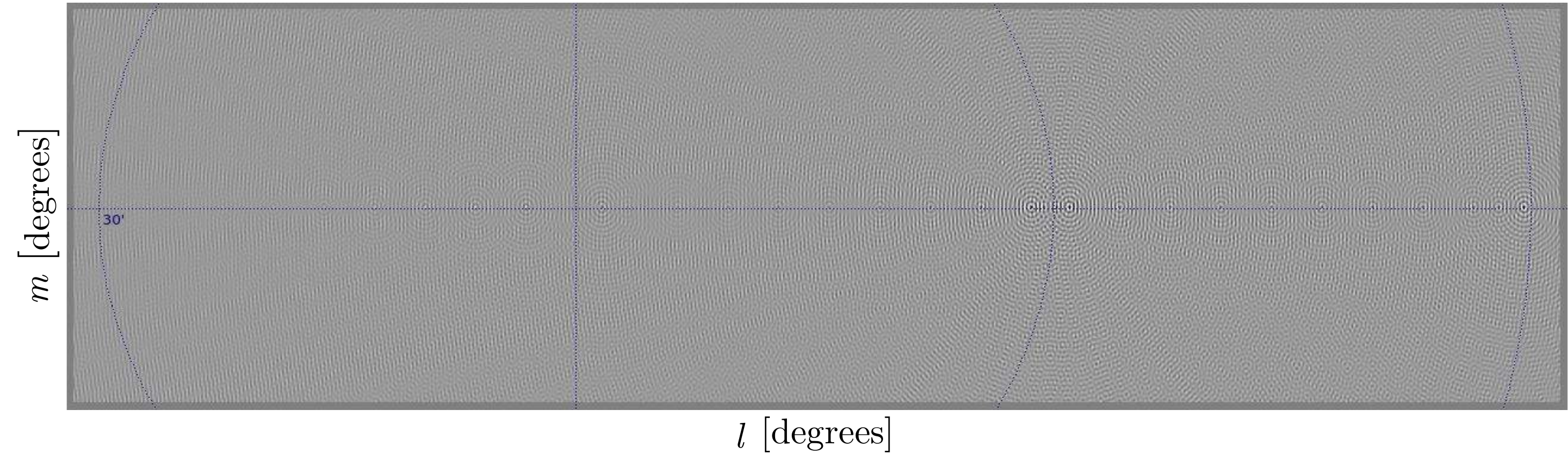}
\end{center}\end{minipage}
\caption{Dirty images of ALS and LS ghost patterns for baselines
9A (36m), 01 and 12 (144m), 05 and 16 (720m), and 0D (2.7km).}\label{fig:sim_ghosts}
\end{figure*}

Fig.~\ref{fig:theor_stem} displays the theoretically determined distilled ghost patterns for a selection of baselines
(9A: 36m, 01 and 12: 144m, 05 and 16: 720m, 0D: 2.7 km). We also obtain a set of simulated distilled ghost patterns 
for the same set of baselines (Figs.~\ref{fig:als_stem}, \ref{fig:ls_stem}) as follows:

\begin{itemize}

\item We run ALS or LS calibration on a set of simulated visibilities, and derive the calibrated visibilities;

\item We image the calibrated visibilities for each baseline (using the \emph{lwimager} program -- an FFT-based
imager derived from the CASA libraries and functionally equivalent to the CASA imager). The resulting
dirty maps are given in Fig.~\ref{fig:sim_ghosts};

\item We measure fluxes at the ghost source positions in the resulting dirty images, resulting in Figs.~\ref{fig:als_stem} and
\ref{fig:ls_stem}.

\end{itemize}

As predicted by Sect.~\ref{sec:t_der}, short baselines yield a few coarsely-spaced ghosts (e.g. 9A), while long baselines (e.g. 0D) yield many finely-spaced ghosts. 


Comparing Figs.~\ref{fig:theor_stem}, \ref{fig:als_stem} and \ref{fig:ls_stem}, the following general observations (and subsequent conclusions) can be made:
\begin{itemize}
 \item The bright ghost sources in Fig.~\ref{fig:theor_stem} and the bright sources in 
 Figs.~\ref{fig:als_stem}--\ref{fig:ls_stem} show up at the same $lm$ coordinates. This validates Eq.~\ref{eq:G_inv}.    
 \item The weaker sources given by the theoretical ghost patterns (Fig.~\ref{fig:theor_stem}) are not 
 visible in Fig.~\ref{fig:als_stem}. Furthermore, there are small differences in flux between the 
 theoretical ghost patterns and the measured fluxes of the corresponding ghost sources in Fig.~\ref{fig:als_stem}.
 Note, however, that the dirty images are dominated by the sidelobes of the brighter ghost sources -- which in general (for $n > 2$) are not amenable to normal deconvolution, since each ghost spread function (Eq.~\ref{eq:gsf}) is different. This both masks the fainter ghost sources, and distorts the flux measurements.

 \item The ghost patterns yielded by ALS and LS calibration (Fig.~\ref{fig:als_stem} and Fig.~\ref{fig:ls_stem}) are qualitatively similar, but show different amplitudes. This is understandable, as they are products of slightly different optimization problems, and therefore yield slightly different calibration solutions. In particular, there are negative ghosts at $0^{\circ}$ in Fig.~\ref{fig:als_stem}, while there are none in
 Fig.~\ref{fig:ls_stem}. This implies that ALS tends to also suppress the flux of the modelled source, 
 while LS doesn't. This can be explained by the following argument. The total flux of the sky and the calibration model
 is given by the diagonal terms (autocorrelations) of $\Rcal$ and $\Mcal$, while the total power in the off-diagonal terms is zero.
 When the autocorrelation constraints are ignored (as in LS), there is no 
 restriction on the total flux in the model, which leaves the gain solutions $g_p\conj{g}_q$ in Eq.~\ref{eq:LM} more 
 freedom to fit the mean amplitude of $r_{pq}$ over time. If the autocorrelations are also fitted (as is the case in 
 ALS, Eq.~\ref{eq:cal2}), then the gain solutions must also account for the total flux of the sky ($A_1+A_2$) using a model 
 containing a total flux of only $A_1$. This yields mean gain-amplitudes of slightly above unity in $\Gcal$, and thus 
 below unity in $\Gcal^\top$. This results in a negative ghost source at the phase centre in the distilled 
 ghost pattern $I_{pq}^{\mathcal{G}^{\top}\!-1}$, i.e. flux suppression of the primary source.
 \end{itemize}

\begin{figure*}%
\includegraphics[height=.4\columnwidth,width=\columnwidth]{./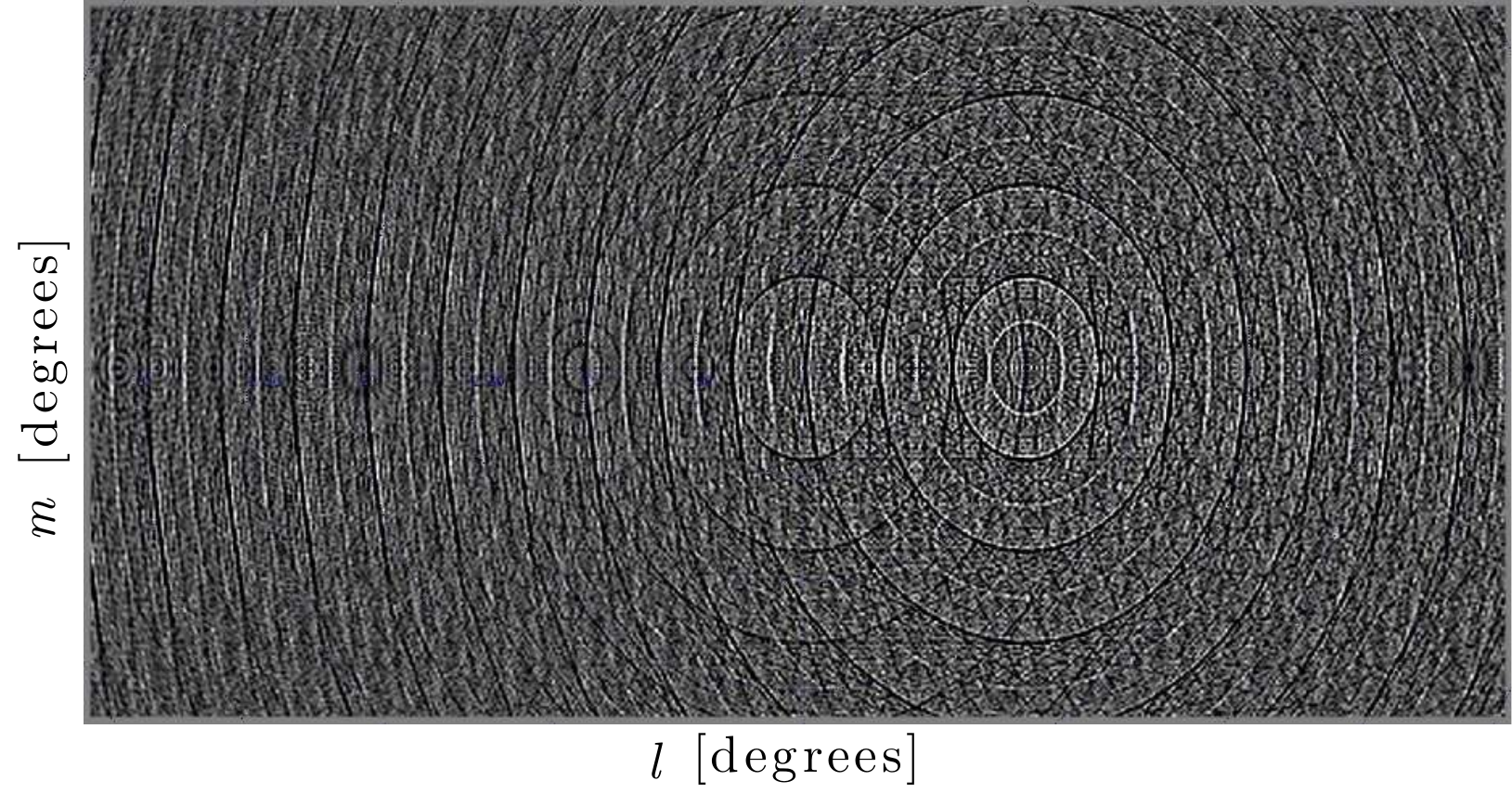}\hfill%
\includegraphics[height=.4\columnwidth,width=\columnwidth]{./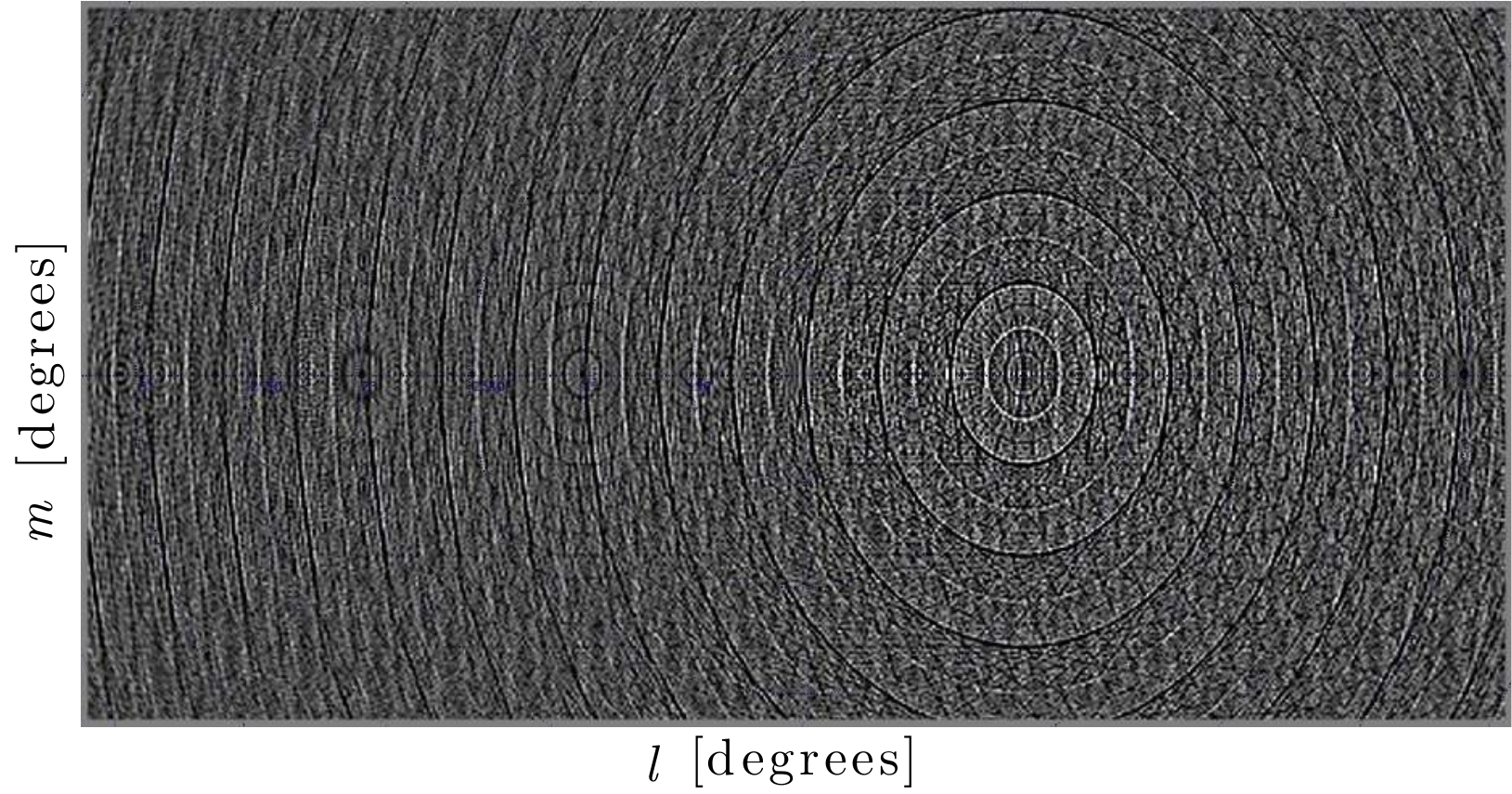}%
\caption{ALS (left) and LS (right) distilled ghost pattern for the full WSRT array, with the $A_2$ source at $1^\circ$.}
\label{fig:patt_ls_1deg}
\end{figure*}

\begin{figure*}%
\includegraphics[height=.4\columnwidth,width=\columnwidth]{./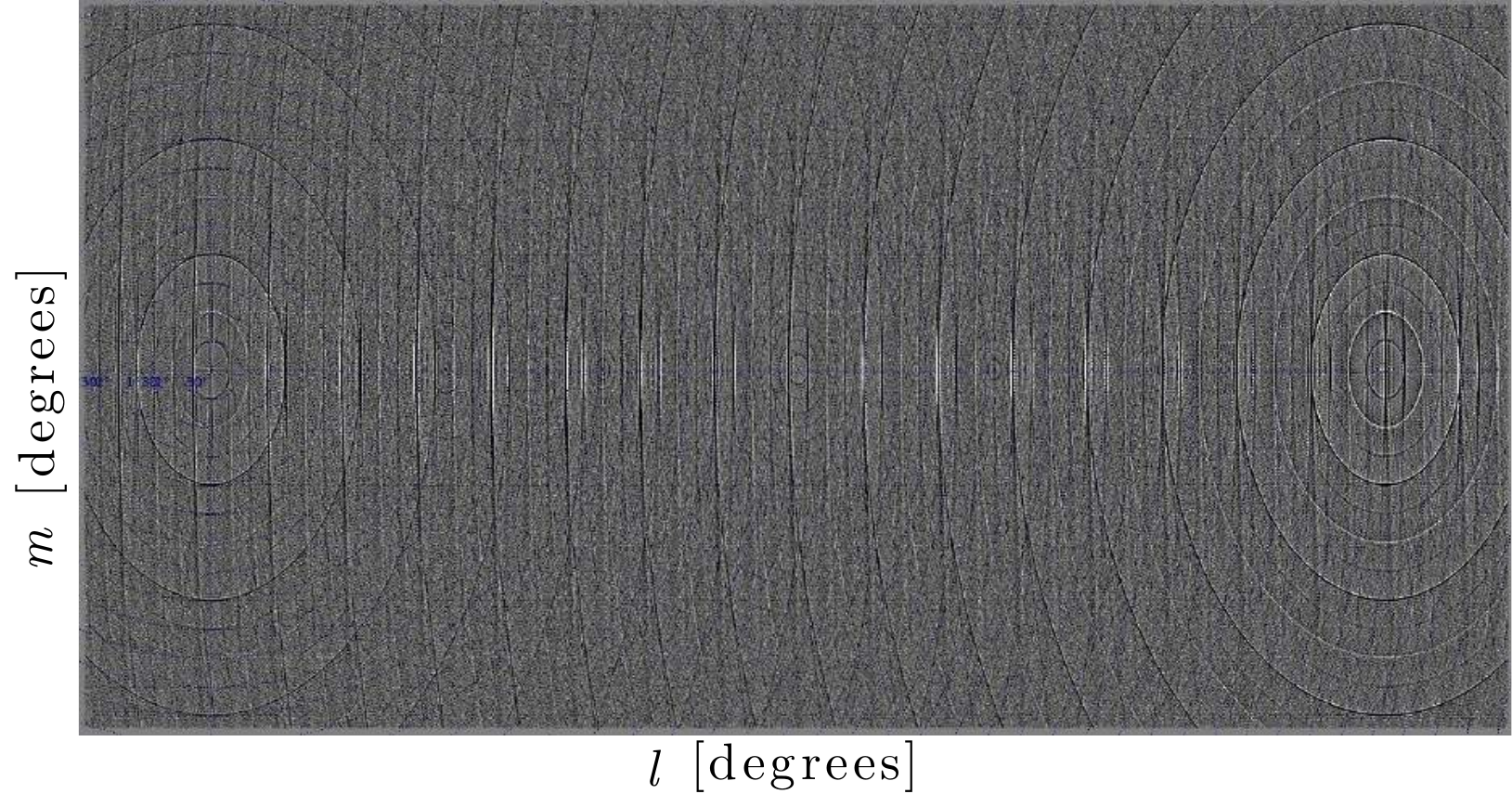}\hfill%
\includegraphics[height=.4\columnwidth,width=\columnwidth]{./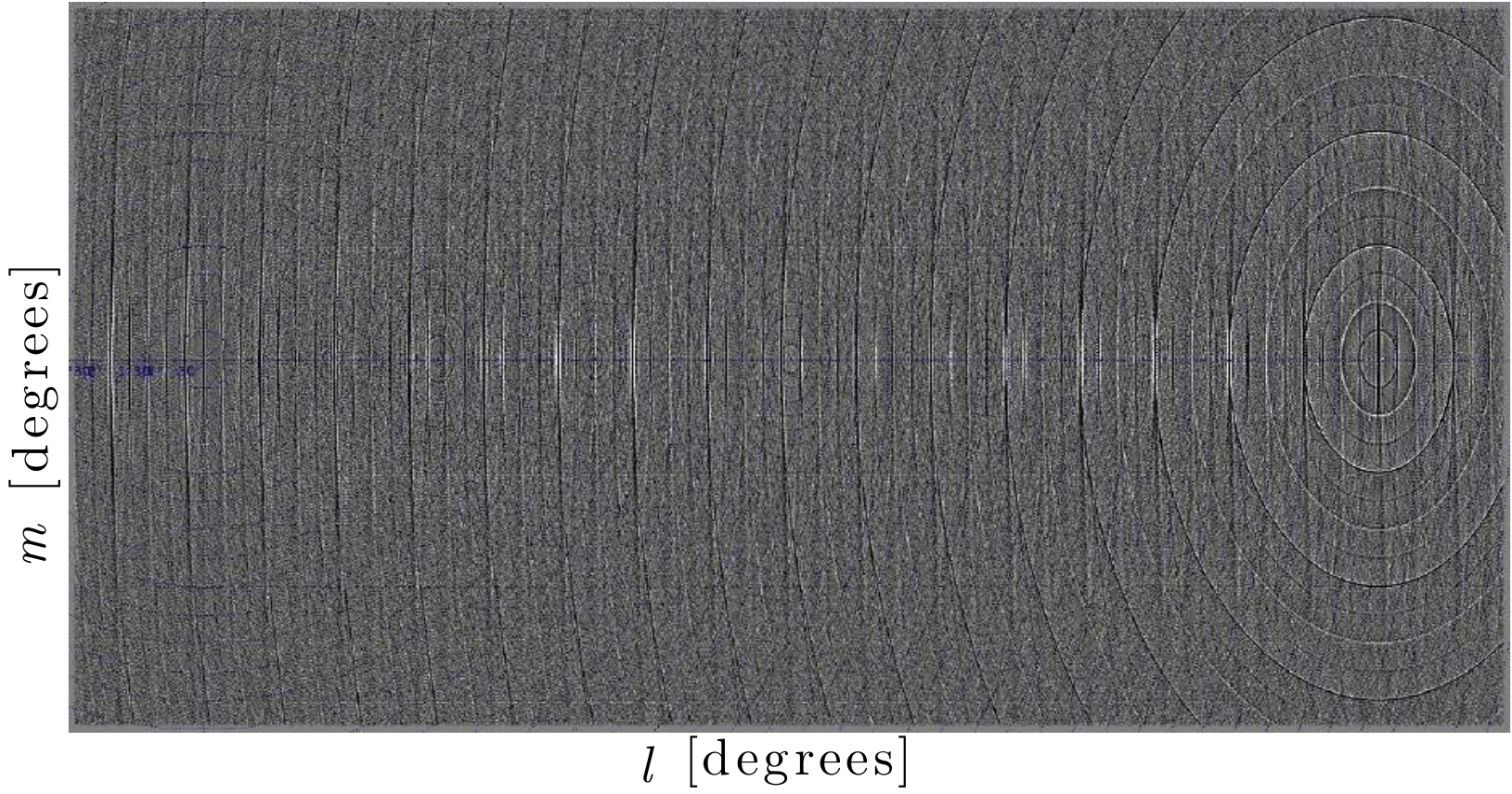}%
\caption{ALS (left) and LS (right) distilled ghost  pattern for the full WSRT array, with the $A_2$ source at $20^\circ$.}
\label{fig:patt_ls_20deg}
\end{figure*}

Figs.~\ref{fig:patt_ls_1deg}--\ref{fig:patt_ls_20deg} display the ghost patterns that are
obtained for ALS and LS calibration using all antennas during imaging. In Fig.~\ref{fig:patt_ls_1deg}, the
secondary source is at $l=1^{\circ}$, while in Fig.~\ref{fig:patt_ls_20deg} it is placed at $l=20^{\circ}$
(thus qualitatively reproducing the observational scenario of Fig.~\ref{fig:2004ghosts}). In
Fig.~\ref{fig:patt_ls_20deg} only the ``inner ghosts'' (the ghosts between the primary and
secondary source) are visible, while some ``outer ghosts'' are emerging in Fig.~\ref{fig:patt_ls_1deg}.


Fig.~\ref{fig:stem_14ant} displays the theoretically determined distilled ghost pattern for 
the full WSRT array, as percentage of $A_2$ flux (for the $A_2=0.2$ at $1^\circ$ case). Compare these to the per-baseline 
patterns in Fig.~\ref{fig:theor_stem}. The pattern exhibits a number of interesting features:

\begin{itemize}

\item Most (though not all) ghosts have negative amplitudes; the positive ghosts tend to be fewer and much weaker.

\item The strongest response is the ``flux suppression'' ghost at the $A_2$ position ($1^\circ$, or $k=\phi_0$). 
At about $13\%$, it is perfectly consistent with the amount of flux suppression normally observed when calibrating WSRT 
data with LS. As discussed  in Sect.~\ref{sec:imaging}, the $k=n\phi_0$ positions are shared by the ghost patterns of 
all baselines, and thus favour the formation of strong ghosts. It is not surprising that the $k=\phi_0$ position 
shows the strongest response overall, as that is where the missing flux that the calibration process is trying to fit 
is located. The next-brightest ghost ($\sim6\%$), is at $0^\circ$. As discussed above, this particular ghost is 
specific to ALS. 

\item Curiously, the other ``favoured'' positions ($-2^\circ$, $-1^\circ$, $-2^\circ$, etc.) show a much diminished
response -- only about $1\sim2\%$ -- i.e. weaker than the strongest of the inner ghosts (see e.g. the ``halfway ghost'' 
at $1/2^\circ$, with $2.5\%$). Comparing this to Fig.~\ref{fig:theor_stem}, we can partially understand 
how this comes about: different baselines show a mix of positive and negative responses at these positions, whereas
the $0^\circ$ and $1^\circ$ responses are consistently negative. The terms in the sum of Eq.~\ref{eq:gsf} 
therefore average down at the other positions.

\item Most of the strongest remaining ghosts are the inner ones between the two sources ($0^\circ-1^\circ$). 
However, the ``outer ghosts'' seem to extend indefinitely at the $0.1\sim0.2\%$ level.

\end{itemize}

The latter two points are especially puzzling, and there should be some fundamental mathematical reason for why this should 
be so, but it escapes us at present.

\begin{figure*}%
\begin{center}%
\includegraphics[width=\columnwidth]{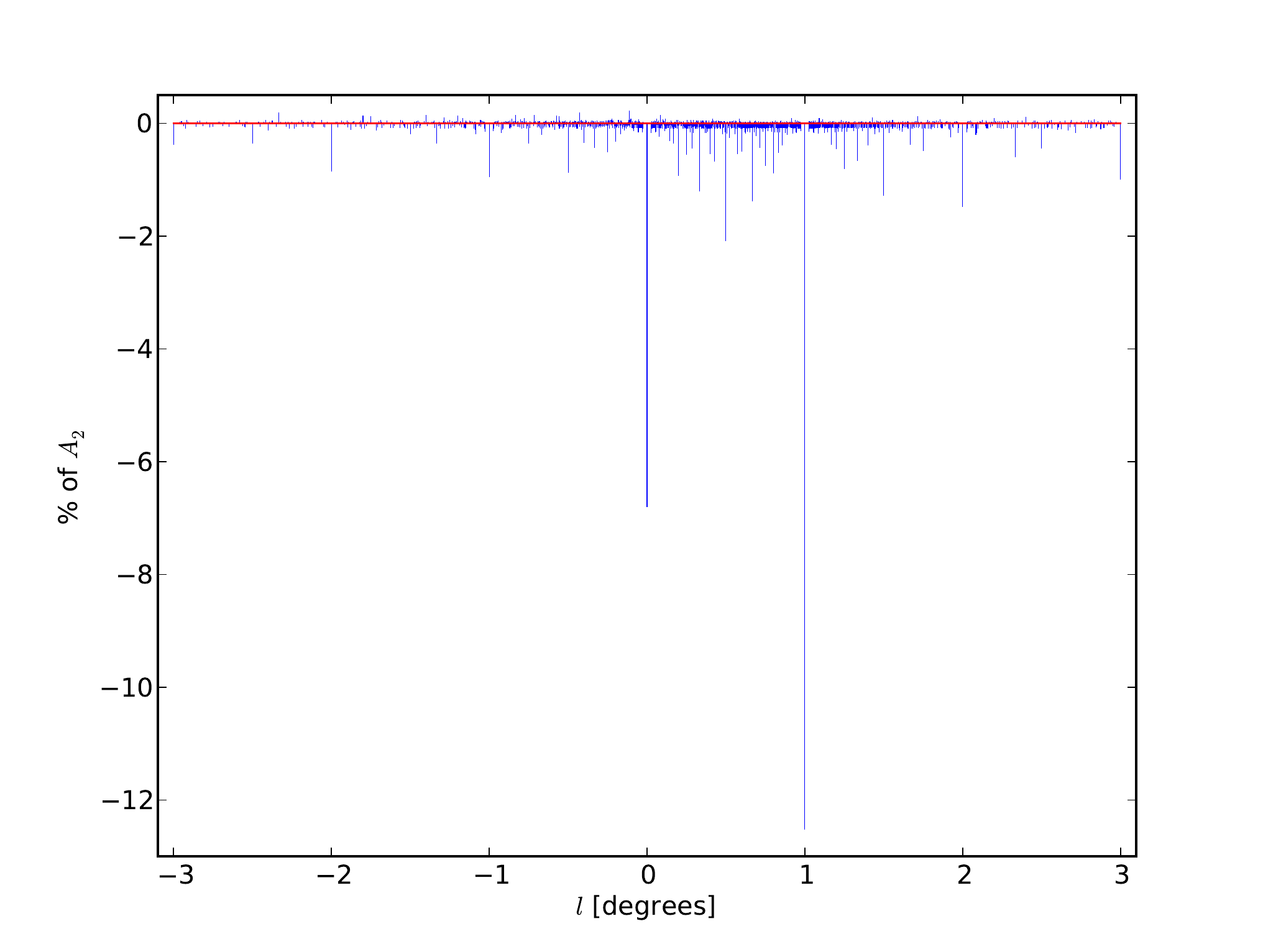}\hfill\includegraphics[width=\columnwidth]{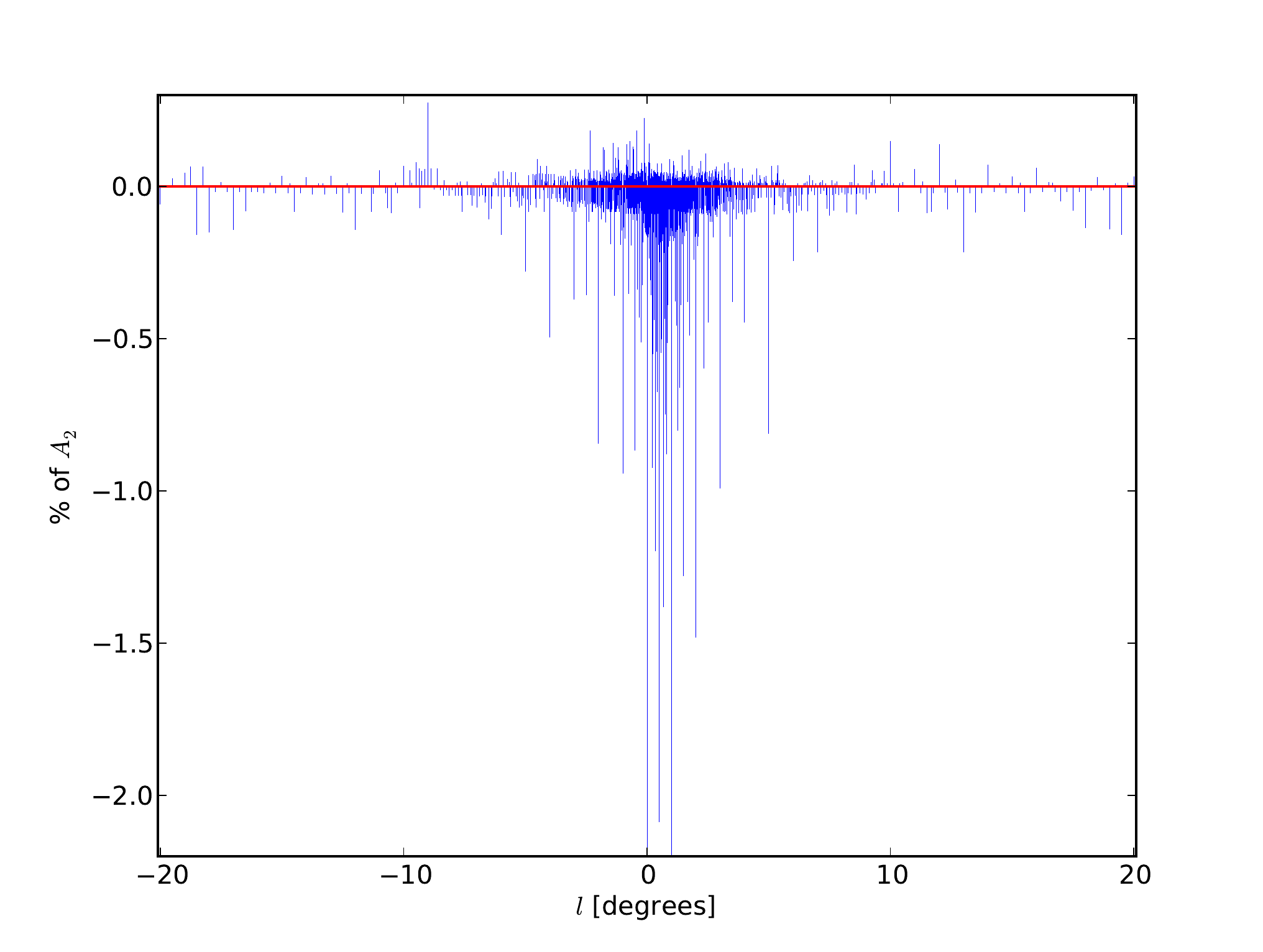}
\caption{\label{fig:stem_14ant}Theoretical residual ghost pattern within $3^\circ$ (left plot) and $20^\circ$ (right plot) 
of phase centre, for the full WSRT array. The $A_2$ source was at $1^\circ$. Amplitudes are given as a percentage of 
$A_2$ flux. The $3^\circ$ plot shows all the ghosts, while in the $20^\circ$ plot, only ghosts with 
amplitudes in excess of $\approx 0.009$\% of $A_2$ are shown, and 
the $y$ axis is cut off just below the $1/2^\circ$  ghost -- the $0^\circ$ and $1^\circ$ ghost response thus extends 
well below the plot limits.}%
\end{center}%
\end{figure*}


\subsection{Dependence on flux ratio}

Now that Eq.~\ref{eq:G_inv} has been validated, the natural question arises of how $l_0,m_0,A_1$ and $A_2$
influence the amplitudes of the ghost pattern. Proposition~\ref{prop:3} implies that the position of the secondary
source $l_0,m_0$ has no influence on the amplitudes -- it only stretches or shrinks the ghost patterns, and determines
their orientation. (This also explains why it was sufficient to verify the validity of Eq.~\ref{eq:G_inv} at only
one position of the secondary source.) The source fluxes, obviously, do have an effect. As discussed in
Section~\ref{sec:sky}, the matrix $\bmath{\mathcal{G}}$ is determined by $A_2$ (which is equivalent to the flux ratio, since
we've been assuming $A_1=1$ throughout), which implies that the ghost amplitudes given by the various $d_{k,pq}$
coefficients are dependent on $A_2$.


The actual ghost amplitudes do not have a simple analytic representation as they are ultimately determined by the interaction between the largest eigenvalue of $\Rcal$ and its associated eigenvector (Eq.~\ref{eq:eq_2}).
We can, however, empirically show an approximately linear dependence on $A_2$. Let us postulate this dependence:

\begin{equation}
\label{eq:E_1}
\zeta_k \approx K_k A_2, 
\end{equation}

and find an estimate for each $K_k$ over a range of $A_2$ values using least-squares. The relative magnitude of the
error of the fit:

\begin{equation}
\label{eq:E_1err}
\epsilon_k = \bigg | \frac{\zeta_k - K_k A_2}{\zeta_k} \bigg |, 
\end{equation}

as a function of $A_2$ for the thirteen brightest ghosts is plotted in Fig.~\ref{fig:errorG}. This shows that most ghosts
vary linearly with $A_2$ to within 10\%. Curiously, the flux suppression ghost ($1^\circ$) is linear to within $1\%$,
but the ghosts at $2^\circ$ and $3^\circ$ are the least linear of all.\footnote{Yet another mathematical 
puzzle raised by the ghost phenomenon. We have no theoretical explanation for this at present!} Approximate linear models for the
amplitudes of  all the ghosts in the distilled ghost pattern can be derived in this manner.

Consider now the dependence on $A_1$ (for which we've used a fixed value of 1 Jy until now). It is obvious from the calibration equation that 
rescaling the true sky and the model sky by the same factor will have no effect on the solutions matrix $\Gcal$, which completely determines the $\zeta_k$ coefficients, i.e. the distilled ghost amplitudes. This means that a calibration problem with fluxes of 
$A_1=A_1' \neq 1$, $A_2=A_2'$ will produce the same coefficients as one with $A_1=1,$ $A_2=A_2'/A_1'$. In other words, {\em the distilled ghost pattern is determined by the flux ratio of the two sources rather than their absolute fluxes}. 

The ghost pattern in the corrected or residual visibilities, on the other hand, is a convolution of the distilled pattern with the sky, and therefore {\em will} scale with absolute flux. To be more precise, for fluxes $A_1'$, $A_2'$, the resulting residual ghost amplitudes (Eq.~\ref{eq:zeta}) will be:

\begin{equation}
\label{eq:zeta1}
\zeta^\Delta_k \approx A_1' K_k \frac{A_2'}{A_1'} + A_2' K_{k-\phi_0} \frac{A_2'}{A_1'} = K_k A_2' + K_{k-\phi_0} \frac{A_2'^2}{A_1'}. 
\end{equation}

When $A_2' \ll A_1'$, the first term in the sum dominates, which makes the ghost patterns in the residual image 
\emph{nearly independent of $A_1'$}. This explains the behaviour observed by \citet{Smirnov2010ghosts} and discussed 
in the introduction. 

Fig.~\ref{fig:art_G} shows the theoretically-derived relative ghost source amplitudes
$\zeta_k/A_2$ for the 13 strongest\footnote{Strongest at $A_2=0.5$, to be precise. As the plots show, the 
relative ranking of the ghosts can actually change as a function of $A_2.$} 
ghosts of the distilled ghost pattern, as a function of $A_2$. A true linear dependence
would have yielded  constant horizontal lines; deviation from horizontal indicates deviation from 
linearity. As we saw above, the $2^\circ$ and $3^\circ$ ghosts appears to be the least linear.

Fig.~\ref{fig:art_R} shows the same amplitudes for the residual ghost pattern. Since the residual pattern
is a superposition of two scaled fundamental patterns, the dependence is different due to the additional
linear component given by the second term of Eq.~\ref{eq:zeta1} (since we're plotting relative
amplitudes, the equation should be divided by $A_2'$). As expected, this component becomes negligible for 
$A_2 \ll A_1$. For larger $A_2$ the linear component can actually come to dominate -- note how the $2^\circ$
ghost becomes stronger than the $1^\circ$ ghost for larger values of $A_2$ (which is not surprising, since it
contains the $1^\circ$ component from the distilled ghost pattern, scaled by $A_2$).

\subsection{The role of array redundancy}

WSRT's highly redundant configuration plays a very important role in ghost formation. Theoretically,
this is explained by Eq.~\ref{eq:IpqG1}. The set of all possible ghost positions is discrete, and given by 
$\{k \bmath{s}_0/\phi_0 \}$. Each baseline $pq$ yields ghosts at a specific subset of these positions, i.e. 
at intervals (in $k$) of $\phi_0/\phi_{pq}$. For short baselines, $\phi_{pq}$ is small, and few ghosts are 
produced, and vice versa for long baselines. Positions corresponding to redundant baselines, or more generally 
to common integer factors of multiple $\phi_{pq}$'s, will then host stronger ghosts  due to a contribution 
from multiple baselines. 

This effect is vividly illustarted by Figs.~\ref{fig:theor_stem}--\ref{fig:ls_stem}. The shortest baseline 
(9A, 36m, circle symbol) produces the most widely-spaced ghosts, at intervals of $1^\circ$. The 144m baselines 
(01 and 12, up/down triangles) produce ghosts at $0.25^\circ$, the 720m baselines (05 and 16, left/right triangles)
produce ghosts at $0.05^\circ$, and the longest baseline (0D, 2.7km, horizontal tick marks) produces the most finely
spaced ghosts. Groups of redundant baselines (01 and 12, 05 and 16) yield ghosts at exactly the same positions,
but with different amplitudes (sometimes even of different sign). The difference is explained by the fact that 
the antennas constituting redundant spacings form slightly different sets of baselines to other antennas, and are
thus subject to different calibration constraints.

The positions corresponding to $\{k=n\phi_0\}$ (in this case, multiples of 1 degree) will have contributions from all 
baselines, and indeed (as we've shown above), the $0$ and $1$ positions yield the strongest ghosts in the 
combined pattern. Likewise, the next-strongest ghost appears at the halfway point ($k=\phi_0/2$), since many $\phi_{pq}$'s are even in the (36,108,1332,1404m) WSRT configuration. Other prominent ghosts may be expected at other rational 
fractions of $\bmath{s}_0$, which fully explains earlier observations. 

Equation~\ref{eq:IpqG1} also provides us with a qualitative understanding of ghost patterns for a less 
regularly-spaced East-West array. As a mental experiment, we may pick a length for $\bmath{b}_0$ (say, 1m), and 
imagine  moving the WSRT antennas to new positions such that the spacings are still integer multiples of $\bmath{b}_0$, 
but are mutually prime. The least common multiple $\phi_0$ would then be the product of all spacings, and would be very large
(and most entries of the geometry matrix $\bmath{\Phi}$ would be very large). Furthermore,  no two baselines would yield ghosts
at any common position apart from $\{k=n\phi_0\}$. The resulting pattern would then consist of very many finely spaced 
and weak ghosts, with a lot of interaction between the GSF sidelobes, and would therefore be a lot more noise-like.
Further decreasing $\bmath{b}_0$ would  increase $\phi_0$ even more, thus spacing the ghosts even finer and further 
washing out the overall response. Of course, to within some fraction of the dish size, any conceivable array layout 
can be considered regularly-spaced (with a very large $\phi_0$), so we can only properly talk
about arrays that are more regular (WSRT, small $\phi_0$) or less regular (large $\phi_0$). The argument above suggests
that highly regular array layouts result in more widely separated and stronger ghosts. This is a hitherto unforeseen
disadvantage to redundancy, and should be investigated and quantified in light of current arguments promoting redundancy
in future telescope designs \citep{noorishad-thesis}  so as to exploit the redundancy calibration technique 
\citep{JEN-redundancy}.

Similar considerations apply to fully 2D/3D arrays such as LOFAR and the JVLA, where we may expect the overall ghost responce 
to be a lot more scattered and noise-like. An upcoming Paper II (Grobler et al., in prep) will study this subject in 
more detail. Here we will just note that an exception to the above considerations are ghosts occupying 
the $\{k=n\phi_0\}$ positions, which the theory shows must be yielded 
by all baselines, regardless of redundancy. The two strongest ones, at positions 1 and 0 -- the ``flux suppression ghost''
and the primary source ghost -- sit on top of actual sources, and are therefore not easy to detect as separate 
artefacts. The other $\{k=n\phi_0\}$ positions seem to yield much weaker ghosts in practice (see above). However, 
ghosts at the -1 position have recently been spotted in LOFAR data in two independent instances (de Bruyn, priv. comm., 
Fender, priv. comm.) The latter in particular was associated with a bright transient source. This suggests that the 
-1 ghost can yield a strong response under some conditions. We will  investigate this phenomenon in Paper II.

\begin{figure}
\includegraphics[width=\columnwidth]{./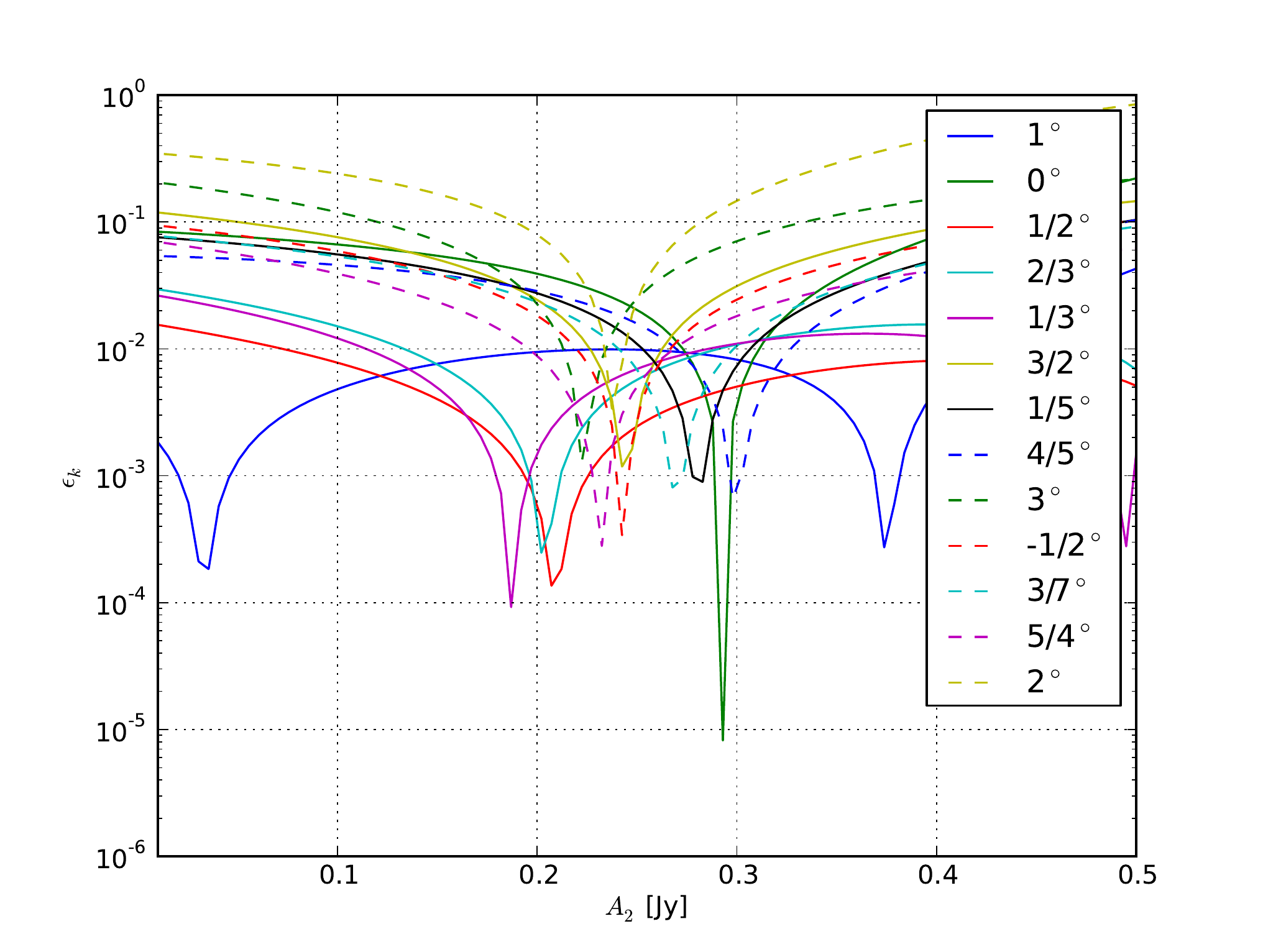}
\caption{The relative error magnitude of the linear fit (Eq.~\ref{eq:E_1err}) for the thirteen brightest ghosts. 
The legends are the ghost locations, with source $A_2$ being at $1^\circ$.}
\label{fig:errorG} 
\end{figure}

\begin{figure*}%
\includegraphics[width=\columnwidth]{./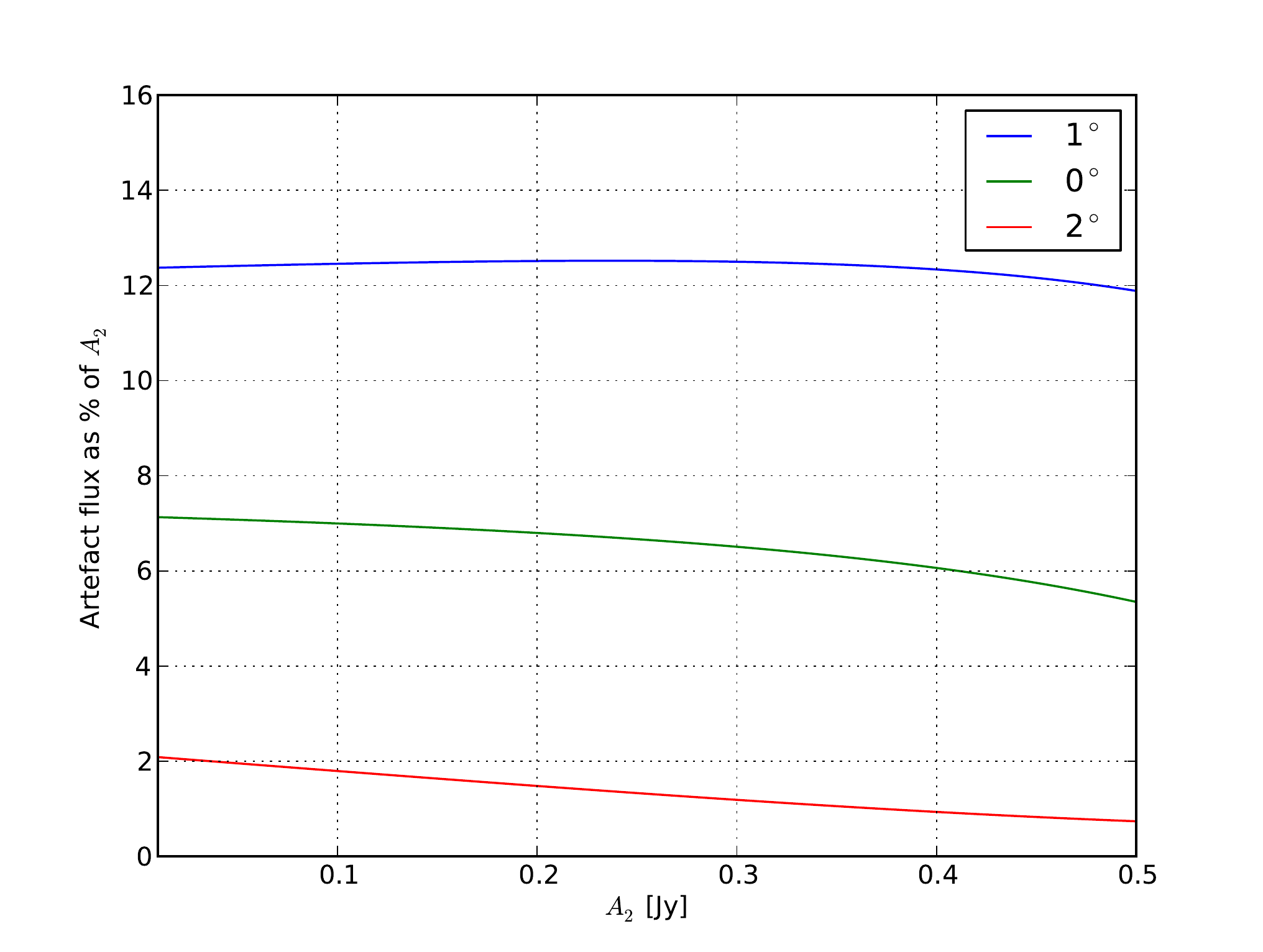}\hfill
\includegraphics[width=\columnwidth]{./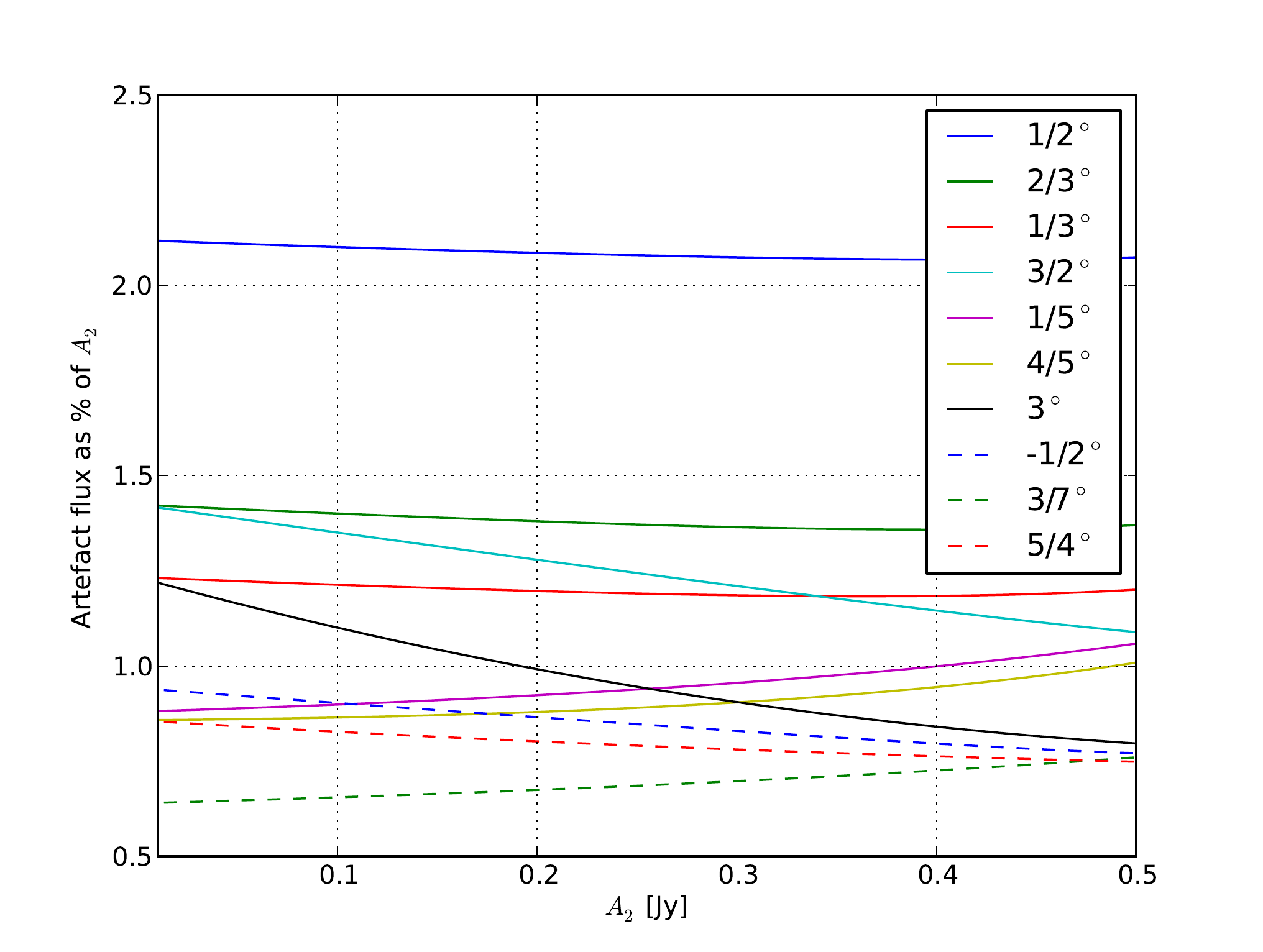}%
\caption{The relative amplitude $|\zeta_k/A_2|$ of the top 13 (top 3 on the left, 4--13 on the right) ghosts in 
the distilled ghost pattern, as a  function of $A_2$. Ranking is by ghost amplitude at $A_2=0.5.$
The ghost positions are indicated by the legend,
with source $A_2$ being at $1^\circ$.}
\label{fig:art_G}
\end{figure*}

\begin{figure*}%
\includegraphics[width=\columnwidth]{./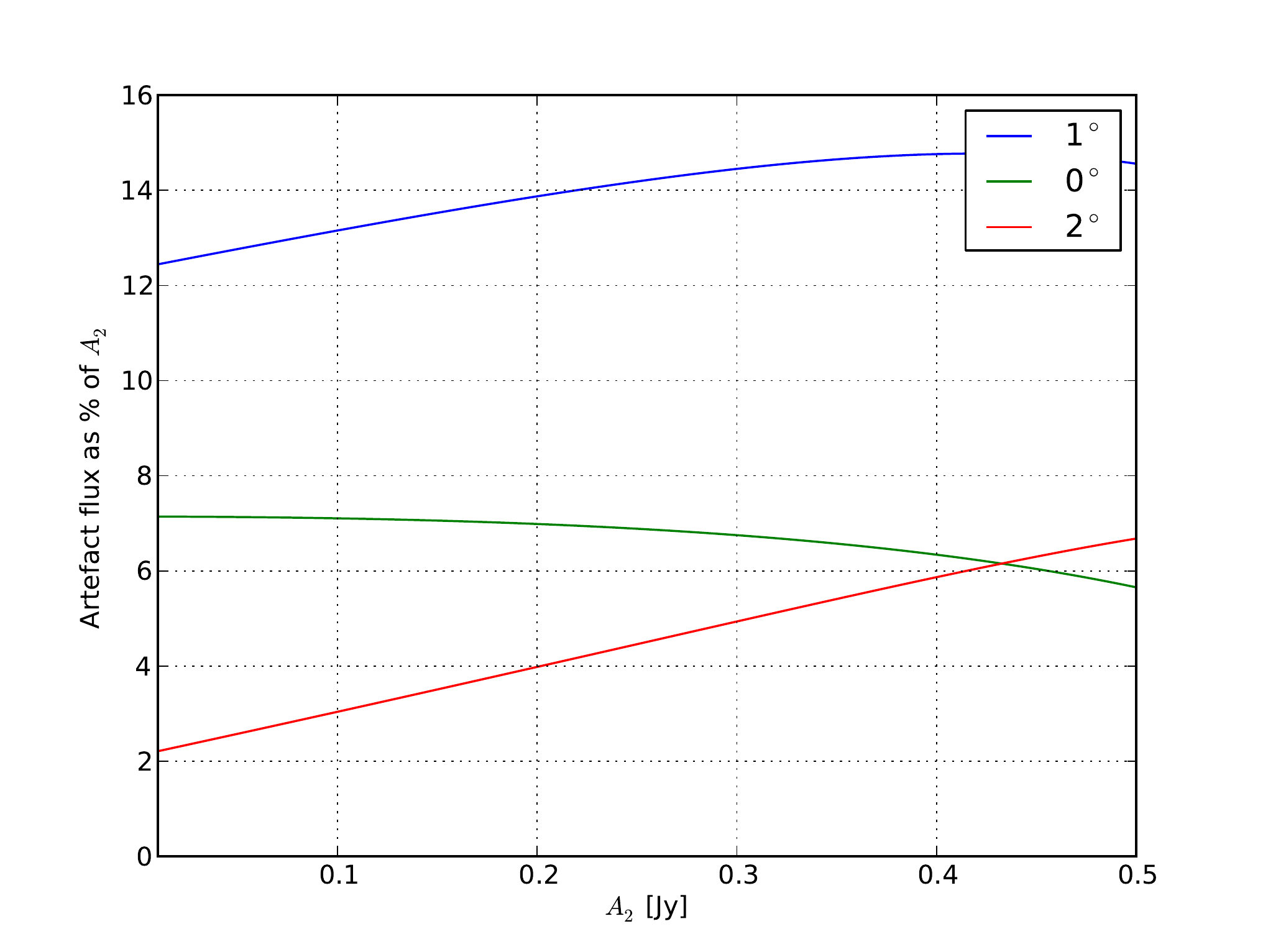}\hfill
\includegraphics[width=\columnwidth]{./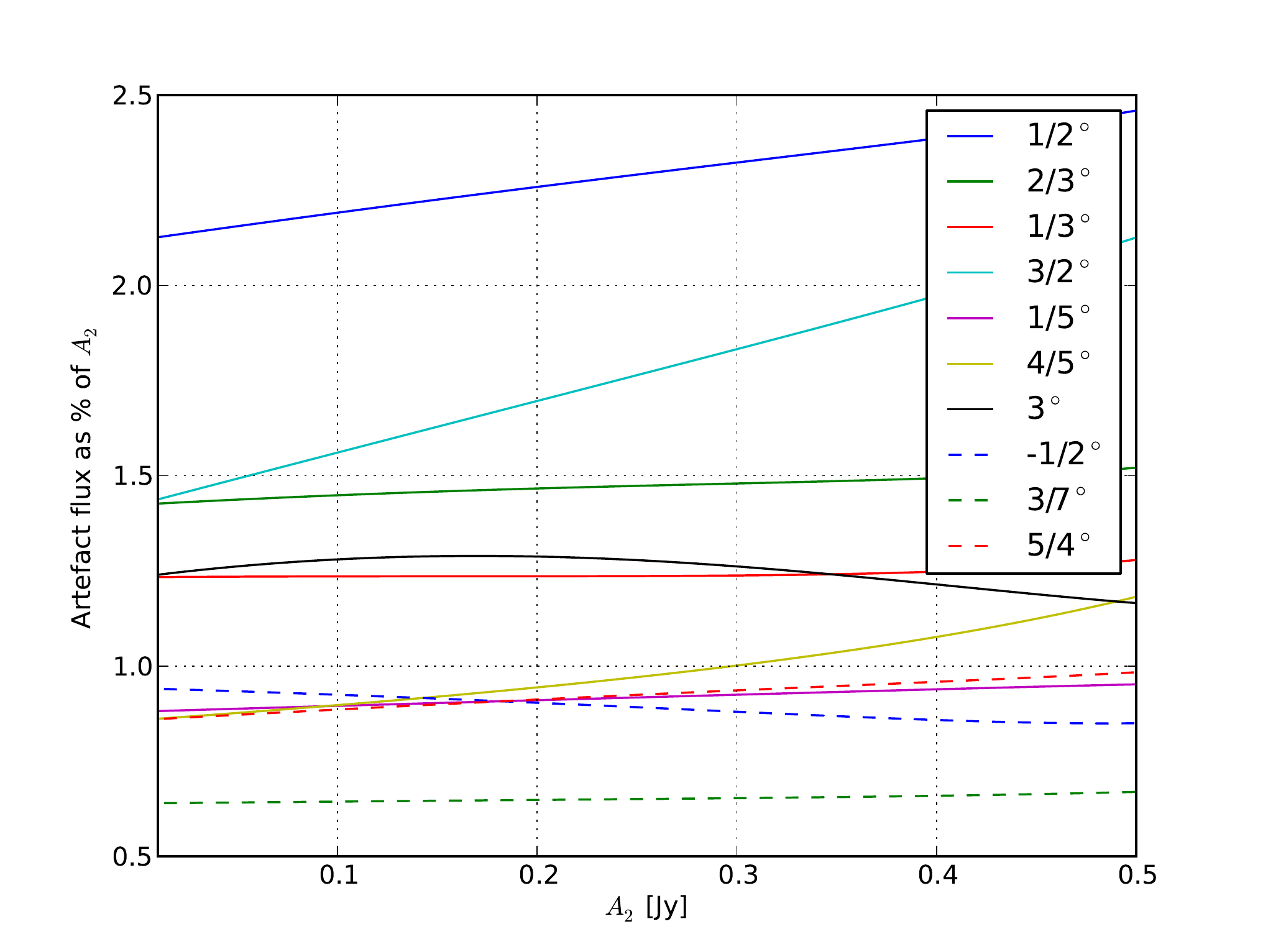}%
\centering
\caption{The relative amplitude $|\zeta^\Delta_k/A_2|$ of the top 13 (top 3 on the left, 4--13 on the right) ghosts in 
the residual ghost pattern, as a  function of $A_2$. Ranking is by ghost amplitude at $A_2=0.5.$
The ghost positions are indicated by the legend, with source $A_2$ being at $1^\circ$.}
\label{fig:art_R}
\end{figure*}

\section{Conclusions and future work}

In this work, we have demonstrated that the ghost source phenomenon is a fundamental aspect of the selfcal process,
and will invariably arise during calibration with an incomplete sky model. It is perhaps surprising that these 
features were not explicitly noted before 2004, but radio interferometrists are accustomed to esoteric instrumental 
artefacts showing up in poorly-calibrated maps, so something that is faint to begin with and goes away with a more accurate 
calibration would not necessarily have attracted attention. The strongest ghost -- the one sitting on top of the
missing model source -- has been indirectly observed as a matter of course, under the guise of ``flux suppression'', 
but being masked by the source itself, it did not give enough clues to the true extent of the phenomenon. Consider also 
that the spacing between ghosts (given by $\bmath{s}_0/\phi_0$) decreases as $\bmath{s}_0$ becomes small, while the
size of the ghosts remains the same (as it is similar to the PSF size). There is therefore some critical $\bmath{s}_0$ 
beyond which the ghosts will begin to blend together into a single ``spoke'' connecting the sources. In the 
authors' experience, poorly calibrated WSRT maps will often exhibit spokes connecting nearby brighter sources, which 
tend to go away once the sky model is improved. It is quite plausible that we've been seeing ``blended ghosts'' all along. 
Since the primary beam makes it more likely for apparently bright sources to be near rather than distant, such blends
may be the most frequent manifestation of the ghost mechanism.

Although we have only studied the two-source case theoretically, empirical results (as well as the observations of
Fig.~\ref{fig:2010ghosts}) suggest that with multiple missing  sources (which would be a far more typical case), many
individual ``ghost strings'' are produced at different orientations.  This would cause the artefacts and their
positive/negative sidelobes to overlap and, to an extent, average out, thus washing out the regular geometric pattern
and further contributing to their non-detection. It took a special set of circumstances --  i.e. a single relatively
strong and distant unmodelled source, high sensitivity, and WSRT's regular and highly redundant geometry  -- to make the ghosts
stand out (Fig.~\ref{fig:2004ghosts}) as a regular and peculiar geometric feature. 

Given the increased sensitivity of current and upcoming instruments, and the emergence of statistical detection techniques,
as argued in the introduction, a good understanding of the ghost phenomenon is vital. We have developed a theoretical 
framework for explaining ghost formation in the two-source WSRT case, which has yielded predictions that closely 
match simulations, qualitatively match actual observations (which are of course always more complex than just two sources), 
and explain all the previously puzzling features of the ghost phenomena that were empirically observed earlier 
(positioning at rational fractions along a line connecting two sources, a ``ghost spread function'' that differs from the 
PSF, independence -- to first order -- on the flux of the modelled source). We have also established that the 
well-known phenomenon of flux suppression is actually just one manifestation of the same underlying mechanism.

A particularly intriguing avenue of further research is ghost formation in the presense of direction-dependent effects.
Recall that in the original QMC result (Fig.~\ref{fig:2010ghosts}), the ghosts were found to be associated with
uncorrected DDEs, and went away once direction-dependent solutions had been applied. In this case, the unmodelled
flux responsible for ghost formation must have been due to errors in the voltage beam, caused by the artifically large
pointing error. We may postulate that any unmodelled flux, whether due to errors in the sky model, or unaccounted-for 
DDEs, will lead to ghosts on some level.

More work is required to describe the ghost phenomenon, both empirically and theoretically, for 2D arrays, for 
multiple and extended unmodelled sources, and for various forms of direction-dependent calibration. In
particular, flux suppression, as the strongest manifestation of the ghost mechanism, 
needs to be studied in more detail (especially in light of upcoming deep, blind surveys). We must also investigate  
alternative calibration approaches. In particular, robust calibration \citep{Kazemi2013a} has been shown to result in
less flux suppression, and must necessarily exhibit different ghost behaviour. 

Ultimately, we need to develop a theoretical and/or numerical mechanism for answering the following fundamental
questions. Given an observational scenario, how deep/accurate does a sky model need to be in order to suppress
ghosts to a given level? And, what then are the statistical properties and signatures (power spectrum, etc.) of the
remaining ghost artefacts? Building on the theory and numerical tools developed in this work, prospects are good that we
can eventually provide rigorous answers to these questions.


\section*{Acknowledgments}

This work is based upon research supported by the South African Research Chairs Initiative of the Department of 
Science and Technology and National Research Foundation.

Jan Noordam lovingly maintained a picture of the 2004 ghosts on a wall of his office, thus keeping the puzzle alive
until the next clue dropped in 2010, and contributed many challenging theories subsequently. The link to ALS was made 
thanks to fruitful discussions with Tobia Carozzi and Griffin Foster. We thank Gianni Bernardi and Cyril Tasse 
for critical comments on earlier drafts.

We would like to thank the anonymous referee for a thorough review, and very helpful suggestions that improved the paper.

\bibliographystyle{mn2e}
\bibliography{g_paper}

\appendix

\section[]{Lemmas and Propositions}

This appendix contains formal mathematical derivations of the propositions used in Sect.~\ref{sec:t_der}. 

\begin{proposition}
\label{prop:1}
If the function-valued matrix $\Rcal(\bmath{b})$:$\mathbfss{R}^2\rightarrow\mathbfss{C}^{n\times n}$ is defined as stated in Definition~\ref{def:R} its rank does not exceed two and its eigenvalues 
are either equal to zero or 
\begin{equation}
\label{eq:lam}
\frac{n(A_1+A_2)}{2} \pm h,
\end{equation}
where $h=\frac{1}{2}\sqrt{[n^2-4 {n \choose 2}][A_1+A_2]^2+\kappa}$ and
$\kappa = 4\sum_{p<q}(A_1^2+A_2^2+2A_1A_2\cos(2\pi\phi_{pq}\bmath{b}\cdot\bmath{s}_0))$.
\end{proposition}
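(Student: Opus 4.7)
The plan is to show that $\Rcal(\bmath{b})$ admits an explicit rank-two decomposition as a sum of two outer products, and then extract the two nonzero eigenvalues from the trace and the Frobenius norm.

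First I would observe that $\phi_{pq}=\phi_q-\phi_p$ allows the phase factor in Eq.~\ref{eq:rpq} to split multiplicatively: $e^{-2\pi i\phi_{pq}\bmath{b}\cdot\bmath{s}_0}=e^{2\pi i\phi_p\bmath{b}\cdot\bmath{s}_0}\,\overline{e^{2\pi i\phi_q\bmath{b}\cdot\bmath{s}_0}}$. Let $\bmath{1}$ denote the all-ones column vector in $\mathbfss{C}^n$ and define $\bmath{k}(\bmath{b})\in\mathbfss{C}^n$ by $k_p(\bmath{b})=e^{2\pi i\phi_p\bmath{b}\cdot\bmath{s}_0}$. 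Then Definition~\ref{def:R} gives, entrywise,
\begin{equation}
\Rcal(\bmath{b})=A_1\,\bmath{1}\bmath{1}^{H}+A_2\,\bmath{k}(\bmath{b})\bmath{k}(\bmath{b})^{H}.
\end{equation}
This exhibits $\Rcal(\bmath{b})$ as a sum of two rank-one Hermitian matrices, so $\mathrm{rank}\,\Rcal(\bmath{b})\le 2$ and at most two eigenvalues $\lambda_1,\lambda_2$ are nonzero.

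Next I would compute two scalar invariants of $\Rcal(\bmath{b})$. For the trace, the diagonal satisfies $r_{pp}(\bmath{b})=A_1+A_2$ because $\phi_{pp}=0$, giving $\lambda_1+\lambda_2=\mathrm{tr}\,\Rcal(\bmath{b})=n(A_1+A_2)$. For the sum of squares of the eigenvalues, I would use $\lambda_1^2+\lambda_2^2=\|\Rcal\|_F^2=\sum_{p,q}|r_{pq}(\bmath{b})|^2$, expand $|A_1+A_2 e^{-2\pi i\phi_{pq}\bmath{b}\cdot\bmath{s}_0}|^2=A_1^2+A_2^2+2A_1A_2\cos(2\pi\phi_{pq}\bmath{b}\cdot\bmath{s}_0)$, and split the double sum into $p=q$ (contributing $n(A_1+A_2)^2$) and $p\ne q$ (contributing $2\sum_{p<q}(\cdots)=\kappa/2$). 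This yields
\begin{equation}
\lambda_1^2+\lambda_2^2=n(A_1+A_2)^2+\tfrac{1}{2}\kappa.
\end{equation}

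With these two invariants in hand, the eigenvalues are the two roots of the quadratic whose sum is $n(A_1+A_2)$ and whose sum of squares is the expression above. From $4\lambda_1\lambda_2=(\lambda_1+\lambda_2)^2-(\lambda_1-\lambda_2)^2$ together with $2\lambda_1\lambda_2=(\lambda_1+\lambda_2)^2-(\lambda_1^2+\lambda_2^2)$, a short algebraic manipulation gives $(\lambda_1-\lambda_2)^2=[n^2-4\binom{n}{2}](A_1+A_2)^2+\kappa=4h^2$, so
\begin{equation}
\lambda_{1,2}=\tfrac{1}{2}(\lambda_1+\lambda_2)\pm\tfrac{1}{2}|\lambda_1-\lambda_2|=\tfrac{n(A_1+A_2)}{2}\pm h,
\end{equation}
exactly as claimed; all remaining eigenvalues are zero because the rank is at most two.

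The substantive step is really the decomposition in the first paragraph; everything else is bookkeeping. The only minor obstacle I anticipate is keeping the combinatorial factors straight when splitting $\sum_{p,q}$ into diagonal and off-diagonal contributions and matching them to the definition $\kappa=4\sum_{p<q}(\cdots)$, so that the coefficient $n^2-4\binom{n}{2}$ emerges cleanly.
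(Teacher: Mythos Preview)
Your argument is correct. The outer-product decomposition $\Rcal(\bmath{b})=A_1\bmath{1}\bmath{1}^H+A_2\bmath{k}\bmath{k}^H$ gives $\mathrm{rank}\le 2$ immediately, and the trace/Frobenius-norm computation recovers the stated eigenvalues; the check $n^2-4\binom{n}{2}=2n-n^2$ makes the match with the paper's formula exact.

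The paper, however, takes a different route on both parts. For the rank bound it does not write down the outer-product factorisation at all; instead it invokes a classical criterion (their Lemma~\ref{lemma:1}): a Hermitian matrix has rank $\le k$ if all its $(k{+}1)\times(k{+}1)$ and $(k{+}2)\times(k{+}2)$ principal submatrices are singular. Lemmas~\ref{lemma:2}--\ref{lemma:3} then verify by direct determinant computation that every $3\times3$ and $4\times4$ principal submatrix of $\Rcal(\bmath{b})$ vanishes, using the additive structure of $\bmath{\Phi}$ recorded in Lemma~\ref{lemma:geo_matrix}. For the eigenvalues, the paper writes the reduced characteristic polynomial $\lambda^2-\mathrm{tr}(\Rcal)\lambda+\sum_{p<q}(r_{pp}r_{qq}-r_{pq}r_{qp})=0$ and solves it, i.e.\ it works with the invariants $\lambda_1+\lambda_2$ and $\lambda_1\lambda_2$, whereas you use $\lambda_1+\lambda_2$ and $\lambda_1^2+\lambda_2^2$. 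The two are of course equivalent. Your approach is shorter and more transparent---the decomposition makes the rank-two structure visible rather than inferred---while the paper's submatrix route is more mechanical and ties into the combinatorics of the geometry matrix $\bmath{\Phi}$ that they develop for later use.
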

\begin{proof}
The fact that the rank of $\Rcal(\bmath{b})$ does not exceed two follows trivially from Lemma~\ref{lemma:1}--\ref{lemma:3}. Since the rank of
$\Rcal(\bmath{b})$ is at most two its characteristic equation is equal to \citep{Ikramov2009,Blinn1996}
\begin{equation}
\label{eq:char}
\Bigg(\lambda^2 -\rmn{tr}(\Rcal(\bmath{b}))\lambda + \sum_{p<q} \begin{array}{|ll|}r_{pp} & r_{pq}\\ r_{qp} & r_{qq}\end{array}\Bigg)\lambda^{n-2}=0.
\end{equation}
Solving for $\lambda$ in Eq.~\ref{eq:char} produces the result.
\end{proof}

Proposition~\ref{prop:1} states that the rank of $\Rcal(\bmath(b))$ is two and gives an analytic expression of its largest eigenvalue, $\lambda(\bmath{b})$ (Eq.~\ref{eq:lam}).
The expression of $\lambda(\bmath{b})$ is also used in Proposition~\ref{prop:2} and Lemma~\ref{lemma:7}. Proposition~\ref{prop:1} have three direct dependencies namely, Lemma~\ref{lemma:1},
Lemma~\ref{lemma:2} and Lemma~\ref{lemma:3}. Lemma~\ref{lemma:1} gives the properties that a matrix must have so that its rank does not exceed $k\in\mathbfss{N}$.
Lemma~\ref{lemma:2} and Lemma~\ref{lemma:3} show that $\Rcal(\bmath{b})$ has the required properties so that its rank does not exceed two. 
The validity of Lemma~\ref{lemma:2} and Lemma~\ref{lemma:3} follows from Lemma~\ref{lemma:geo_matrix}, which gives the mathematical properties of $\bmath{\Phi}$.

\begin{proposition}
\label{prop:2} 
The entries $g_{pq}(\bmath{b})$ of $\Gcal(\bmath{b})$ are
differentiable Hermitian functions. Moreover, $g_{pq}(u,v)=g_{pq}\big(u+\frac{j}{l_0},v+\frac{k}{l_0}\big)$ and
$g_{pq}\big(u,-\frac{l_0}{m_0}u + c\big)=g_{pq}(0,c)$ $\forall j,k\in\mathbfss{Z}$ and $\forall u,v,c\in\mathbfss{R}$.
\end{proposition}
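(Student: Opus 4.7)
The plan is to derive the four asserted properties of each $g_{pq}$ by pushing the corresponding properties up through the spectral decomposition $\Gcal(\bmath{b})=\lambda(\bmath{b})\mathbf{x}(\bmath{b})\mathbf{x}^H(\bmath{b})$ of Definition~\ref{def:G}, starting from the very explicit form of the entries of $\Rcal(\bmath{b})$.

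First I would verify the properties at the level of $\Rcal(\bmath{b})$ itself. Definition~\ref{def:R} and Eq.~\ref{eq:rtrue} give $r_{pq}(\bmath{b}) = A_1 + A_2 \exp(-2\pi i \phi_{pq}(ul_0 + vm_0))$ with $\phi_{pq}\in\mathbfss{Z}$. From this form three things are immediate: (i) each entry is real-analytic in $(u,v)$, so $\Rcal$ is a $C^\infty$ Hermitian matrix-valued function (with $r_{qp}=\overline{r_{pq}}$); (ii) because $\phi_{pq}$ is an integer, every entry is invariant under $u\mapsto u+1/l_0$ and $v\mapsto v+1/m_0$, so $\Rcal(u,v)=\Rcal(u+j/l_0,v+k/m_0)$ for all $j,k\in\mathbfss{Z}$; (iii) every entry depends only on the scalar combination $ul_0+vm_0$, so along any line $v=-(l_0/m_0)u+c$ one has $\Rcal(u,v)=\Rcal(0,c)$.

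Next I would transfer these properties to $\lambda(\bmath{b})$. Proposition~\ref{prop:1} gives the closed form $\lambda(\bmath{b})=n(A_1+A_2)/2 + h(\bmath{b})$, with $h(\bmath{b})$ an explicit square root. A short estimate shows that the expression under the radical stays bounded away from zero (the $[n^2-4\binom{n}{2}](A_1+A_2)^2$ piece is positive for $n\ge 2$ and dominates, and $\kappa\ge 0$), so $\lambda(\bmath{b})$ is always a simple eigenvalue, strictly separated from $n(A_1+A_2)/2-h(\bmath{b})$ and from $0$. Hence $\lambda(\bmath{b})$ is itself a smooth real function and inherits the periodicity and 1D-invariance of $\Rcal$ automatically (any symmetry that leaves the whole matrix fixed must leave its distinguished largest eigenvalue fixed as well).

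Third, I would push the properties to $\mathbf{x}(\bmath{b})$. Because $\lambda(\bmath{b})$ is an isolated simple eigenvalue of the smooth Hermitian family $\Rcal(\bmath{b})$, standard analytic perturbation theory (Kato, or the implicit function theorem applied to the pair $(\Rcal-\lambda I)\mathbf{x}=0$, $\mathbf{x}^H\mathbf{x}=1$) provides a smooth choice of normalized eigenvector up to an overall unit phase; pinning the phase by, e.g., requiring the first coordinate of $\mathbf{x}$ that is not identically zero to be real and positive gives a smooth $\mathbf{x}(\bmath{b})$. The same symmetries that stabilize $\Rcal$ and $\lambda$ then stabilize $\mathbf{x}$ as well (any $\bmath{b}\mapsto\bmath{b}'$ with $\Rcal(\bmath{b}')=\Rcal(\bmath{b})$ sends $\mathbf{x}(\bmath{b})$ into an eigenvector for the same simple eigenvalue, hence into $\mathbf{x}(\bmath{b})$ after the phase convention). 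Finally, $g_{pq}(\bmath{b})=\lambda(\bmath{b})\,x_p(\bmath{b})\,\overline{x_q(\bmath{b})}$ is a product of smooth functions and therefore smooth; the Hermitian relation $g_{qp}=\overline{g_{pq}}$ is read off directly; and the periodicity and line-invariance descend from those of $\lambda$ and $\mathbf{x}$.

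The main obstacle is Step~3: the smoothness and symmetry-compatibility of the eigenvector. Eigenvectors of Hermitian families are only smooth where the eigenvalue is simple and are defined only modulo a unit phase, so both the non-vanishing of $h(\bmath{b})$ (to guarantee simplicity globally) and the existence of a globally consistent, smooth phase-fixing rule have to be dealt with explicitly; if these were to fail, $g_{pq}$ could pick up spurious phase discontinuities even though $\lambda \mathbf{x}\mathbf{x}^H$ as a whole is gauge-invariant. Once simplicity is verified from Proposition~\ref{prop:1} and the phase is fixed as above, the rest of the argument is essentially bookkeeping.
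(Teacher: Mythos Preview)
Your overall strategy---verify the symmetries for $\Rcal$, pass them to $\lambda$, then to $\mathbf{x}$, then to $\Gcal$---is the same as the paper's, which packages the three steps into Lemmas~\ref{lemma:4}, \ref{lemma:6} and \ref{lemma:7}. However, there are two genuine problems in your execution.

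First, you have misread what ``Hermitian function'' means here. The proposition is not asserting that the matrix $\Gcal$ is Hermitian (i.e.\ $g_{qp}=\overline{g_{pq}}$); that is trivial from Definition~\ref{def:G}. It is asserting that each scalar function is conjugate-symmetric in its argument: $g_{pq}(-u,-v)=\overline{g_{pq}(u,v)}$. This is exactly the property used in Proposition~\ref{prop:3} to conclude that the Fourier coefficients are real. The paper proves it by observing that $\Rcal(-\bmath{b})=\overline{\Rcal(\bmath{b})}$, which forces $\Gcal(-\bmath{b})=\overline{\Gcal(\bmath{b})}$, hence $g_{pq}(-\bmath{b})=\overline{g_{pq}(\bmath{b})}$. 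Your line ``the Hermitian relation $g_{qp}=\overline{g_{pq}}$ is read off directly'' does not establish this and leaves a gap.

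Second, your simplicity estimate is wrong: $n^{2}-4\binom{n}{2}=n(2-n)\le 0$ for $n\ge 2$, so that term is not positive and certainly does not dominate. The conclusion that $\lambda$ is simple is nonetheless true, but for a different reason: writing $S=\sum_{p<q}\cos(2\pi\phi_{pq}\bmath{b}\cdot\bmath{s}_0)$ one has $2S+n=\big|\sum_{p}e^{2\pi i\phi_{p}\bmath{b}\cdot\bmath{s}_0}\big|^{2}\ge 0$, and the radicand in Eq.~\ref{eq:lam} simplifies to $n^{2}(A_1^{2}+A_2^{2})+2n(2-n)A_1A_2+8A_1A_2S$, which is linear in $S$ and at the extreme $S=-n/2$ equals $n^{2}(A_1-A_2)^{2}>0$ (since $A_1=1>A_2$). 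The paper simply asserts simplicity in Lemma~\ref{lemma:7} and invokes Rellich's theorem for analyticity.

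Finally, the phase-fixing issue you flag as the ``main obstacle'' is a non-issue: $g_{pq}=\lambda x_{p}\overline{x_{q}}$ is an entry of $\lambda\mathbf{x}\mathbf{x}^{H}$, which is invariant under $\mathbf{x}\mapsto e^{i\theta}\mathbf{x}$. You acknowledge this yourself, so there is no need to construct a globally smooth phase convention; work directly with the rank-one matrix $\Gcal(\bmath{b})$, as the paper does in Eq.~\ref{eq:G_per}, and the difficulty evaporates.
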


\begin{proof}
By Lemma~\ref{lemma:4} and Eq.~\ref{eq:lam}
\begin{eqnarray}
\label{eq:G_per}
\Gcal\bigg(u+\frac{j}{l_0},v+\frac{k}{l_0}\bigg)&=&\lambda(u,v)\mathbf{x}(u,v)\mathbf{x}^{H}(u,v)\\
&=&\Gcal(u,v),\nonumber
\end{eqnarray}
for all $j,k\in\mathbfss{Z}$. Eq.~\ref{eq:G_per} implies that $g_{pq}(u,v)=g_{pq}\big(u+\frac{j}{l_0},v+\frac{k}{l_0}\big)$, $\forall j,k\in\mathbfss{Z}$.
Similarly, $g_{pq}\big(u,-\frac{l_0}{m_0}u + c\big)=g_{pq}(0,c),$ $\forall u,c\in\mathbfss{R}$ (by Lemma~\ref{lemma:6}). 
The fact that $g_{pq}(\bmath{b})$ is a differentiable function is established by Lemma~\ref{lemma:7}.

The function $g_{pq}(u,v)$ is the best possible (in a least squares sense) fit of $r_{pq}(u,v)$ (see Eq.~\ref{eq:cal2}). From this
observation and the fact that $\Rcal$ and $\Gcal$ are Hermitian matrices (as well as the fact that $\Rcal(-\bmath{b})=\conj{\Rcal(\bmath{b})} \Rightarrow \Gcal(-\bmath{b})=\conj{\Gcal(\bmath{b})}$) the following statements logically follow
\begin{enumerate}
 \item The best possible fit of $r_{pq}(-u,-v)=\conj{r}_{pq}(u,v)$ is $g_{pq}(-u,-v)$.
 \item The best possible fit of $r_{qp}(u,v)=r_{pq}(-u,-v)=\conj{r}_{pq}(u,v)$ is $g_{qp}(u,v)=\conj{g}_{pq}(u,v)$.
\end{enumerate}
The above statements imply that $g_{pq}(-u,-v)=\conj{g}_{pq}(u,v)$.
\end{proof}

Proposition~\ref{prop:2} shows that the elements $g_{pq}(\bmath{b})$ of $\Gcal(\bmath{b})$ are periodic, effectively one-dimensional, differentiable, Hermitian functions. The
properties of $g_{pq}(\bmath{b})$ follow from Lemma~\ref{lemma:4} (periodicity), Lemma~\ref{lemma:6} (one-dimensionality) and Lemma~\ref{lemma:7} (differentiability).
Lemma~\ref{lemma:7} is a consequence of Rellich's theorem.

\begin{proposition}
\label{prop:3} 
Each element $g_{pq}(\bmath{b})$ of $\Gcal(\bmath{b})$ can be written as the following sum
\begin{equation}
\label{eq:f_series}
 g_{pq}(\bmath{b}) = \sum_{j=-\infty}^{\infty}c_j e^{2\pi i j\bmath{b}\cdot\bmath{s}_0},
\end{equation}
that is
\begin{equation}
 g_{pq}(u,v) = \sum_{j=-\infty}^{\infty}c_j e^{2\pi i j(ul_0 + vm_0)},
\end{equation}
where 
\begin{equation}
  \label{eq:eq_2_p}
 c_j = \mu \int_{-\frac{1}{2|m_0|}}^{\frac{1}{2|m_0|}}\int_{-\frac{1}{2|l_0|}}^{\frac{1}{2|l_0|}}g_{pq}(u,v)e^{-2\pi i j(ul_0 + vm_0)}~dudv,
\end{equation}
with $\mu = |l_0||m_0|$ and $c_j\in\mathbfss{R}$. 
\end{proposition}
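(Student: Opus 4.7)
The plan is to invoke the standard two-dimensional Fourier series expansion on the fundamental period rectangle, then use the effective one-dimensionality of $g_{pq}$ to collapse the double sum to a single sum, and finally use the Hermitian property to force real coefficients.

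First I would note that, by Proposition~\ref{prop:2}, $g_{pq}(u,v)$ is continuously differentiable and doubly periodic with periods $1/|l_0|$ and $1/|m_0|$ in $u$ and $v$ respectively. Hence on the torus $T=[-1/(2|l_0|),1/(2|l_0|)]\times[-1/(2|m_0|),1/(2|m_0|)]$ it admits a uniformly convergent two-dimensional Fourier series
\[
g_{pq}(u,v)=\sum_{j,k\in\mathbfss{Z}} c_{jk}\,e^{2\pi i(jul_0+kvm_0)},
\]
with
\[
c_{jk}=\mu\int_T g_{pq}(u,v)\,e^{-2\pi i(jul_0+kvm_0)}\,du\,dv,\qquad \mu=|l_0||m_0|.
\]
The differentiability from Proposition~\ref{prop:2} guarantees absolute convergence, which I will need when rearranging the sum.

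Next I would exploit the one-dimensionality statement of Proposition~\ref{prop:2}: $g_{pq}$ is constant along every line $v=-(l_0/m_0)u+c$, so it depends on $(u,v)$ only through the single linear combination $\xi=ul_0+vm_0$. The cleanest way to push this into the Fourier formula is to perform the change of variables $(u,v)\mapsto(\xi,\eta)$ with $\xi=ul_0+vm_0$ and, say, $\eta=ul_0-vm_0$, computing the integral for $c_{jk}$ as an iterated integral. Since the integrand then factorizes into a function of $\xi$ alone times $e^{-\pi i(j-k)(\xi-\eta)}\cdot e^{-\pi i(j+k)(\xi+\eta)}$ (up to constants), the $\eta$-integral produces a Kronecker delta forcing $j=k$. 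I therefore get $c_{jk}=0$ whenever $j\neq k$, and writing $c_j:=c_{jj}$ yields
\[
g_{pq}(u,v)=\sum_{j\in\mathbfss{Z}} c_j\,e^{2\pi i j(ul_0+vm_0)},\qquad c_j=\mu\int_T g_{pq}(u,v)\,e^{-2\pi i j(ul_0+vm_0)}\,du\,dv,
\]
which is precisely Eq.~\ref{eq:eq_2_p}.

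Finally, to show $c_j\in\mathbfss{R}$, I would use the Hermitian property $g_{pq}(-u,-v)=\overline{g_{pq}(u,v)}$ established in Proposition~\ref{prop:2}. Substituting $(u,v)\to(-u,-v)$ in the integral defining $c_j$ and taking complex conjugates gives $\overline{c_j}=c_j$, so $c_j$ is real.

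I expect the main obstacle to be the collapse-to-one-dimension step: strictly speaking, the one-dimensionality only holds along directions $(1,-l_0/m_0)$, so one has to be careful that the change of variables is compatible with the fundamental domain $T$ (which is a rectangle aligned with the $u,v$ axes, not with the $\xi,\eta$ directions). The cleanest way around this is to use the double periodicity to integrate over any translate of $T$, or equivalently to regard the Fourier coefficients as defined on the quotient torus and verify directly that the integrand $g_{pq}(u,v)e^{-2\pi i(jul_0+kvm_0)}$ has mean zero on each line $\xi=\text{const}$ whenever $j\neq k$. The differentiability from Proposition~\ref{prop:2} and the uniform convergence of the series ensure that none of these manipulations are invalidated by convergence issues.
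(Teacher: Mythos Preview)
Your plan is sound and reaches the same conclusion as the paper, but the mechanism you use to kill the off-diagonal Fourier coefficients is different. The paper also starts from the two-dimensional expansion and invokes Proposition~\ref{prop:2}, but instead of changing variables in the integral for $c_{jk}$, it evaluates the \emph{series} along a line $v=-\tfrac{l_0}{m_0}u+c$: the diagonal part ($j=k$) collapses to a constant in $u$, the full sum is constant by one-dimensionality, so the off-diagonal part is a one-variable Fourier series in $u$ with no constant term that equals a constant, hence is identically zero; varying $c$ then forces every $c_{jk}$ with $j\neq k$ to vanish. Your route attacks the coefficient integral directly and is arguably cleaner once the fundamental-domain issue you correctly flag is handled (and it can be: pick a fundamental domain aligned with the $(\xi,\eta)$ lattice, e.g.\ $\xi\in[0,1)$, $\eta\in[0,2)$, so that the $\eta$-integral of $e^{-\pi i(j-k)\eta}$ runs over a full period and vanishes for $j\neq k$). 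The paper's route avoids that geometric bookkeeping at the price of a uniqueness-of-Fourier-coefficients argument. Both are valid and of comparable length. One small correction: your exponential factorisation is miswritten; the correct split under $\xi=ul_0+vm_0$, $\eta=ul_0-vm_0$ is $e^{-2\pi i(jul_0+kvm_0)}=e^{-\pi i(j+k)\xi}\,e^{-\pi i(j-k)\eta}$, which is what your argument actually needs.
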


\begin{proof}
Since $g_{pq}$ is a differentiable periodic function in $\mathbfss{R}^2$ (Proposition~\ref{prop:2}), consider the standard Fourier series expansion
\begin{equation}
\label{eq:series_2}
g_{pq}(u,v) = \sum_{j=-\infty}^{\infty} \sum_{k=-\infty}^{\infty} c_{jk}e^{2\pi i(j l_0 u + k m_0 v)},
\end{equation}
with
\begin{equation}
c_{jk} = \mu\int_{-\frac{1}{2|m_0|}}^{\frac{1}{2|m_0|}}\int_{-\frac{1}{2|l_0|}}^{\frac{1}{2|l_0|}}g_{pq}(u,v)e^{-2\pi i (jul_0 + kvm_0)}~dudv
\end{equation}
and $\mu = |l_0||m_0|$.

Since $g_{pq}$ is Hermitian (Proposition~\ref{prop:2}), the coefficients $(c_{jk})$ are real numbers. Fix $c\in\mathbfss{R}$. Note that
$g_{pq}\big(u,-\frac{l_0}{m_0}u + c\big)=g_{pq}(0,c),$ $\forall u\in\mathbfss{R}$ (Proposition~\ref{prop:2}). We'll denote this
constant $g_{pq}(0,c)$ by $\alpha\in\mathbfss{C}$.

Evaluating the ``diagonal'' of the series in Eq.~\ref{eq:series_2}, at $(u,v)=\big(u,-\frac{l_0}{m_0}u + c\big)$, results in another
constant (i.e. independent of $u$), say $\beta\in\mathbfss{C}$.

Thus
\begin{equation}
h(u) \equiv \sum_{j=-\infty}^{\infty} \sum_{k=-\infty \atop k \neq j}^{\infty} c_{jk} e^{2\pi i (j-k)l_0 u + km_0c} = \alpha - \beta.  
\end{equation}
So, setting $n=j-k$, we get
\begin{eqnarray}
 \alpha - \beta &=& \sum_{j=-\infty}^{\infty} \sum_{k=-\infty \atop k \neq j}^{\infty}  c_{jk}e^{2\pi i k m_0 c}\cdot e^{2\pi i (j-k)l_0 u},\\
 &=& \sum_{j=-\infty}^{\infty} \sum_{n=-\infty \atop n \neq 0}^{\infty} c_{j,j-n}e^{2\pi i (j-n)m_0c}\cdot e^{2\pi i n l_0 u},\\
 &=& \sum_{n=-\infty \atop n \neq 0}^{\infty} d_n e^{2\pi i n l_0 u}, 
\end{eqnarray}
$\rmn{where}~d_n = \sum_{j = -\infty}^{\infty} c_{j,j-n}e^{2\pi i (j-n)m_0c}$. Thus $h(u)$, which is a one dimensional Fourier series wihtout a constant term, is a constant $\alpha - \beta$. This is only possible if 
$\alpha - \beta = 0$ and $d_n = 0$ whenever $n\neq 0$. This is again only possible if each $c_{j,j-n}=0$ whenever $n\neq 0$. Thus $c_{jk}=0$ whenever $j\neq k$.

Therefore, in the two dimensional Fourier series expansion of $g_{pq}$, only the terms with $j=k$ contribute.
\end{proof}

Proposition~\ref{prop:3} states that $g_{pq}(\bmath{b})$ can be expressed as an effectively one-dimensional Fourier-series and follows from Proposition~\ref{prop:2}.

Note that Proposition~\ref{prop:3} can also be stated using $e^{-2\pi ij\bmath{b}\cdot\bmath{s}_0}$ instead of $e^{2\pi ij\bmath{b}\cdot\bmath{s}_0}$ in which case
Eq.~\ref{eq:eq_2_p} becomes
\begin{equation}
  \label{eq:eq_2}
 c_j = \mu \int_{-\frac{1}{2|m_0|}}^{\frac{1}{2|m_0|}}\int_{-\frac{1}{2|l_0|}}^{\frac{1}{2|l_0|}}g_{pq}(u,v)e^{2\pi i j(ul_0 + vm_0)}~dudv,
\end{equation}
with $\mu = |l_0||m_0|$ and $c_j\in\mathbfss{R}$. It is also important to note that Proposition~\ref{prop:3} assumes that $l_0 \neq 0$ and $m_0 \neq 0$. When either $l_0$ or $m_0$ is 
zero the derivation simplifies and becomes one dimensional. To avoid cluttering the derivation of the one dimensional case is not repeated here.

\begin{proposition}
\label{prop:4} 
Let $h(u,v)=\frac 1 {g_{pq}(u,v)},$ then $h(u,v)$ will be a differentiable Hermitian function if $g_{pq}(u,v)\neq0,\forall u,v\in\mathbfss{R}$. 
Moreover, $h\big(u+\frac{j}{l_0},v+\frac{k}{l_0}\big)=h(u,v)$ and $h\big(u,-\frac{l_0}{m_0}u + c\big)=h(0,c)$,
 $\forall j,k\in\mathbfss{Z}$ and $u,v,c\in\mathbfss{R}$.
 \end{proposition}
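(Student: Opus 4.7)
The plan is to obtain each of the four claimed properties of $h=1/g_{pq}$ by directly transporting the corresponding property of $g_{pq}$ (already established in Proposition~\ref{prop:2}) through the map $z\mapsto 1/z$, using the non-vanishing hypothesis to ensure this map is well-defined and smooth on the relevant range.

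First I would dispose of differentiability. Since Proposition~\ref{prop:2} gives that $g_{pq}$ is a differentiable function $\mathbfss{R}^2\to\mathbfss{C}$, and since we assume $g_{pq}(u,v)\neq 0$ for all $(u,v)\in\mathbfss{R}^2$, the composition with the holomorphic map $z\mapsto 1/z$ (which is smooth on $\mathbfss{C}\setminus\{0\}$) is differentiable at every point; the quotient rule gives an explicit formula $\nabla h = -(\nabla g_{pq})/g_{pq}^{2}$.

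Next I would verify the Hermitian property. Proposition~\ref{prop:2} supplies $g_{pq}(-u,-v)=\conj{g_{pq}(u,v)}$. Using the elementary identity $\conj{1/z}=1/\conj{z}$ for $z\neq 0$, one obtains
\begin{equation}
h(-u,-v)=\frac{1}{g_{pq}(-u,-v)}=\frac{1}{\conj{g_{pq}(u,v)}}=\conj{\left(\frac{1}{g_{pq}(u,v)}\right)}=\conj{h(u,v)}.
\end{equation}
Periodicity and effective one-dimensionality are even more immediate: by Proposition~\ref{prop:2},
\begin{equation}
h\!\left(u+\tfrac{j}{l_0},v+\tfrac{k}{l_0}\right)=\frac{1}{g_{pq}\!\left(u+\tfrac{j}{l_0},v+\tfrac{k}{l_0}\right)}=\frac{1}{g_{pq}(u,v)}=h(u,v),
\end{equation}
for all $j,k\in\mathbfss{Z}$, and analogously $h\!\left(u,-\tfrac{l_0}{m_0}u+c\right)=1/g_{pq}(0,c)=h(0,c)$ for all $u,c\in\mathbfss{R}$.

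There is essentially no hard step in this proof: once Proposition~\ref{prop:2} is in hand, each property of $h$ reduces to a one-line application of the fact that pointwise inversion is a smooth involution on $\mathbfss{C}\setminus\{0\}$ that commutes with complex conjugation. The only genuine subtlety is the hypothesis $g_{pq}\neq 0$ everywhere; the footnote in the main text already flags this as the delicate point, and the present proposition simply takes it as an assumption rather than attempting to prove it. I therefore would not try to establish non-vanishing here, but would just state it as the standing assumption and observe that all four conclusions follow routinely once it is granted.
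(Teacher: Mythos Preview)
Your proposal is correct and follows essentially the same approach as the paper's own proof: each property of $g_{pq}$ from Proposition~\ref{prop:2} is transported through $z\mapsto 1/z$, using $\conj{1/z}=1/\conj{z}$ for the Hermitian property and the quotient rule (together with the non-vanishing hypothesis) for differentiability. The only cosmetic difference is the order in which the four properties are verified.
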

\begin{proof} 
To see that $h$ has the same period as
$g_{pq}$, notice that for any
$j,k\in\mathbfss{Z}$ we have $h(u+j\lp,v+k\kp)=(g_{pq}(u+j\frac 1 {l_o},v+k\kp))^{-1}=g_{pq}(u,v)^{-1}=h(u,v).$ Similarly $h(u,u-\frac{l_0}{m_0}u + c)=h(0,c)~\forall u,c\in\mathbfss{R}$.
To see that $h$ is Hermitian, recall that complex conjugation
satisfies $\overline{\frac 1 z}=\frac 1 {\overline z}.$ Thus one computes $h(-u,-v)=\frac 1
{g_{pq}(-u,-v)}=\frac 1 {\overline{(g_{pq}(u,v))}}=\overline{(\frac 1 {g_{pq}(u,v)})}=\overline{h(u,v)}$. Finally, $h(u,v)$ is also differentiable since 
\begin{eqnarray}
 \frac{\partial h(u,v)}{\partial u} &=& -\frac{\frac{\partial g_{{pq}}(u,v)}{\partial u}}{g_{{pq}}^2(u,v)},\\
 \frac{\partial h(u,v)}{\partial v} &=& -\frac{\frac{\partial g_{{pq}}(u,v)}{\partial v}}{g_{{pq}}^2(u,v)},\\
\end{eqnarray}
exist ($g_{pq}(u,v)\neq0$ by assumption).
\end{proof}

Proposition~\ref{prop:4} shows that the elements of $\Gtop(\bmath{b})$ are also periodic, effectively one-dimensional, differentiable, Hermitian functions.
Proposition~\ref{prop:3} and Proposition~\ref{prop:4} therefore implies that the elements of $\Gtop(\bmath{b})$ also have a one-dimensional Fourier-series representation.

\begin{lemma}
\label{lemma:1}
Let $\bmath{A}$ be symmetric or Hermitian. If all principal submatrices having $k+1$ rows or $k+2$ rows are singular, the rank of
$\bmath{A}$ does not exceed $k$ \citep{Perlis1952}. 
\end{lemma}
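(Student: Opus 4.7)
The plan is to translate the hypothesis on singular principal submatrices into a statement about the coefficients of the characteristic polynomial of $A$, and then exploit the reality of the eigenvalues of a Hermitian matrix to rule out every possible rank exceeding $k$.

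First I would use the standard expansion $\det(\lambda I - A) = \sum_{r=0}^{n} (-1)^r c_r\,\lambda^{n-r}$, where $c_r$ is the sum of all $r \times r$ principal minors of $A$. The assumption that every principal submatrix of orders $k+1$ and $k+2$ is singular forces $c_{k+1} = c_{k+2} = 0$. Because $A$ is Hermitian, its eigenvalues are real; letting $s = \mathrm{rank}(A)$ and $\mu_1, \ldots, \mu_s$ denote its nonzero eigenvalues, the characteristic polynomial factors as $\lambda^{n-s}\prod_{i=1}^s(\lambda - \mu_i)$, and matching coefficients gives $c_r = e_r(\mu_1, \ldots, \mu_s)$ for $r \le s$ and $c_r = 0$ for $r > s$, where $e_r$ is the $r$-th elementary symmetric polynomial.

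Next I would suppose for contradiction that $s > k$. When $s = k+1$ or $s = k+2$, one has $c_s = \mu_1 \cdots \mu_s \neq 0$, directly contradicting $c_s = 0$. The substantive case is $s \ge k+3$, where I would need to rule out $e_{k+1}(\mu) = e_{k+2}(\mu) = 0$ for $s$ real nonzero numbers. For this I would form the reciprocal polynomial \[ q(t) = \prod_{i=1}^s (1 - \mu_i t) = \sum_{r=0}^s (-1)^r e_r(\mu)\, t^r, \] whose roots $1/\mu_i$ are all real and nonzero. Choosing $m$ to be the largest index with $0 \le m \le k$ and $e_m(\mu) \neq 0$ (guaranteed by $e_0 = 1$), the maximality of $m$ together with the hypothesis forces $e_{m+1} = e_{m+2} = 0$. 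Repeated application of Rolle's theorem then shows that $q^{(m)}$ has only real roots, and a direct coefficient reading gives $q^{(m)}(0) = (-1)^m m!\,e_m \neq 0$ while $(q^{(m)})'(0) = (q^{(m)})''(0) = 0$; the former guarantees that $0$ is not among the (real) roots of $q^{(m)}$.

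To finish I would invoke the logarithmic-derivative identity \[ \frac{f''(0)}{f(0)} - \left(\frac{f'(0)}{f(0)}\right)^2 = -\sum_i \frac{1}{\rho_i^2}, \] valid for any polynomial $f$ with real nonzero roots $\rho_i$. Applied with $f = q^{(m)}$, whose $f'(0)$ vanishes but whose $f(0)$ does not, the right-hand side is strictly negative, so $f''(0)/f(0) \neq 0$, contradicting $(q^{(m)})''(0) = 0$ from the previous step. The main obstacle is exactly this closing real-analysis step: converting the simultaneous vanishing of two consecutive elementary symmetric polynomials into an outright impossibility for real nonzero numbers. Everything preceding it is routine bookkeeping with the characteristic polynomial together with the spectral theorem for Hermitian matrices.
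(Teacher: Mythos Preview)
The paper does not supply its own proof of this lemma; it merely states the result and cites Perlis (1952). So there is no in-paper argument to compare against.

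Your proof is correct. The reduction to the characteristic polynomial and the identification $c_r = e_r(\mu_1,\ldots,\mu_s)$ for a Hermitian matrix are standard, and you have correctly isolated the crux: for $s$ nonzero \emph{real} numbers, one cannot have $e_m\neq 0$ while $e_{m+1}=e_{m+2}=0$. Your device of passing to $q^{(m)}$, invoking Rolle repeatedly to keep all its roots real, and then extracting a contradiction from the logarithmic-derivative identity is clean; the degree bound $s-m\ge 3$ ensures $q^{(m)}$ is nonconstant, so $\sum_i 1/\rho_i^2$ is strictly positive and the contradiction goes through.

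A shorter route, closer to how the classical textbook argument is usually organised, is to first note that a rank-$r$ Hermitian matrix has a nonsingular $r\times r$ principal submatrix $B$ (write $A=VD_rV^*$ with $V$ of full column rank, choose $r$ independent rows $I$ of $V$, and observe $A[I,I]=V[I,:]\,D_r\,V[I,:]^*$ is invertible), and then show that any nonsingular Hermitian $B$ of size $m$ has a nonsingular principal submatrix of size $m-1$ or $m-2$: if $c_{m-1}(B)=(\prod_i\lambda_i)\sum_i 1/\lambda_i=0$, then
\[
c_{m-2}(B)=\Big(\prod_i\lambda_i\Big)\sum_{i<j}\frac{1}{\lambda_i\lambda_j}=-\tfrac{1}{2}\Big(\prod_i\lambda_i\Big)\sum_i\frac{1}{\lambda_i^2}\neq 0.
\]
Iterating from size $r>k$ down by steps of $1$ or $2$ forces a nonsingular principal submatrix of size $k+1$ or $k+2$. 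This avoids the reciprocal polynomial and Rolle, but it is exactly the same ``two consecutive elementary symmetric polynomials of real nonzero numbers cannot both vanish'' principle that your argument establishes in greater generality.
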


\begin{lemma}
\label{lemma:2}
All $3 \times 3$ function-valued principal submatrices of $\Rcal(\bmath{b})$ are singular.
\end{lemma}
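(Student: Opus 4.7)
The plan is to exhibit every $3\times 3$ principal submatrix of $\Rcal(\bmath{b})$ as a sum of two rank-one function-valued matrices; since the sum of two rank-one $3\times 3$ matrices has rank at most two, such a submatrix must have vanishing determinant, i.e.\ be singular.

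First I would pick an arbitrary triple of indices $\{p,q,s\}\subseteq\{1,\dots,n\}$ and write down the corresponding $3\times 3$ principal submatrix, whose generic entry is $r_{ab}(\bmath{b}) = A_1 + A_2 e^{-2\pi i\,\phi_{ab}\bmath{b}\cdot\bmath{s}_0}$ by Definition~\ref{def:R} and Eq.~\ref{eq:rtrue}. The constant term $A_1$ contributes the all-ones matrix $A_1\bmath{1}\bmath{1}^T$, which is manifestly of rank one (here $\bmath{1}=(1,1,1)^T$).

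Next I would handle the $A_2$-contribution by using the additive structure of the array geometry matrix, namely $\phi_{ab}=\phi_b-\phi_a$ (Definition~\ref{def:phi}). Define the column vector
\[
\bmath{v}(\bmath{b}) \;=\; \bigl(e^{2\pi i\,\phi_p\bmath{b}\cdot\bmath{s}_0},\;e^{2\pi i\,\phi_q\bmath{b}\cdot\bmath{s}_0},\;e^{2\pi i\,\phi_s\bmath{b}\cdot\bmath{s}_0}\bigr)^T.
\]
A direct check shows that the $(a,b)$ entry of $\bmath{v}\bmath{v}^H$ is $e^{2\pi i(\phi_a-\phi_b)\bmath{b}\cdot\bmath{s}_0}=e^{-2\pi i\,\phi_{ab}\bmath{b}\cdot\bmath{s}_0}$, so the $A_2$-part of the principal submatrix is exactly $A_2\,\bmath{v}(\bmath{b})\bmath{v}(\bmath{b})^H$, which is again of rank one (as an outer product of a single nonzero vector with its Hermitian conjugate).

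Combining, the principal submatrix equals $A_1\bmath{1}\bmath{1}^T+A_2\bmath{v}\bmath{v}^H$, a sum of two rank-one matrices, hence of rank at most two. Since it is a $3\times 3$ matrix, it is singular, which proves the lemma. The argument is essentially a bookkeeping exercise; there is no serious obstacle. The only thing one must be careful about is matching the sign convention so that the outer product $\bmath{v}\bmath{v}^H$ reproduces $e^{-2\pi i\,\phi_{ab}\bmath{b}\cdot\bmath{s}_0}$ rather than its complex conjugate, which fixes the choice $\bmath{v}_a=e^{2\pi i\,\phi_a\bmath{b}\cdot\bmath{s}_0}$ above.
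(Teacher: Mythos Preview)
Your argument is correct, and it takes a genuinely different route from the paper. The paper writes out the generic $3\times 3$ principal submatrix explicitly, uses the additivity $\phi_{ps}=\phi_{pq}+\phi_{qs}$ from Lemma~\ref{lemma:geo_matrix} to parametrize the exponents as $a$, $b$, $a+b$, and then asserts (with a citation) that the resulting $3\times 3$ determinant vanishes identically. You instead recognize the structural decomposition $A_1\bmath{1}\bmath{1}^T + A_2\,\bmath{v}\bmath{v}^H$ and invoke subadditivity of rank. This is cleaner and in fact strictly stronger: the same decomposition applies to the full $n\times n$ matrix $\Rcal(\bmath{b})$, yielding $\mathrm{rank}\,\Rcal(\bmath{b})\le 2$ directly and obviating Lemmas~\ref{lemma:1}--\ref{lemma:3} altogether. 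The paper's approach makes the dependence on the geometry constraint $A=a+b$ explicit at the level of matrix entries, whereas yours encodes the same constraint implicitly through $\phi_{ab}=\phi_b-\phi_a$. One small note: the paper reserves $\bmath{1}$ for the all-ones \emph{matrix}, so in a merged version you would want a different symbol for your all-ones column vector.
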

\begin{proof}
Due to the construction of $\Rcal(\bmath{b})$ all $3 \times 3$ function-valued principal submatrices of $\Rcal(\bmath{b})$ have the following form (see Lemma~\ref{lemma:geo_matrix})
\arraycolsep=0.1pt 
\footnotesize
\begin{equation}
\label{eq:mat_3}
\left[ \begin{array}{lll}
A_1+A_2 & A_1 + A_2e^{-2\pi i a \bmath{b}\cdot\bmath{s}_0} & A_1 + A_2e^{-2\pi i A \bmath{b}\cdot\bmath{s}_0}\\
A_1 + A_2e^{2\pi i a \bmath{b}\cdot\bmath{s}_0} & A_1+A_2 & A_1 + A_2e^{-2\pi i b \bmath{b}\cdot\bmath{s}_0}\\
A_1 + A_2e^{2\pi i A \bmath{b}\cdot\bmath{s}_0} & A_1 + A_2e^{2\pi i b \bmath{b}\cdot\bmath{s}_0} & A_1+A_2\end{array} \right],
\end{equation}
\normalsize
where $A = a+b$ and $a,b\in\mathbfss{N}$. The determinant of the matrix in Equation \ref{eq:mat_3} is equal to zero \citep{Kopp2008}.
\end{proof}

\begin{lemma}
\label{lemma:3}
All $4 \times 4$ function-valued principal submatrices of $\Rcal(\bmath{b})$ are singular.
\end{lemma}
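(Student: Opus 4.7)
The plan is to parallel the structure of Lemma~\ref{lemma:2}, but to avoid the brute-force $4\times4$ determinant calculation (which would be painful to write out directly) by exploiting the fact that every entry of $\Rcal(\bmath{b})$ decomposes naturally as a sum of two rank-one contributions. The argument then reduces to a one-line rank count.

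First, I would rewrite the entries of $\Rcal(\bmath{b})$ in ``outer product'' form. Using Definition~\ref{def:R} together with the identity $\phi_{pq}=\phi_q-\phi_p$ from Definition~\ref{def:phi}, one has
\[
r_{pq}(\bmath{b}) \;=\; A_1 + A_2\,e^{-2\pi i\phi_{pq}\bmath{b}\cdot\bmath{s}_0} \;=\; A_1\cdot 1\cdot\conj{1} \;+\; A_2\,\eta_p(\bmath{b})\,\conj{\eta_q(\bmath{b})},
\]
where $\eta_p(\bmath{b}) \;=\; e^{2\pi i\phi_p\bmath{b}\cdot\bmath{s}_0}$. Letting $\bmath{1}$ denote the all-ones column vector and $\bmath{\eta}(\bmath{b})=(\eta_1,\ldots,\eta_n)^T$, this is just the matrix identity
\[
\Rcal(\bmath{b}) \;=\; A_1\,\bmath{1}\bmath{1}^H \;+\; A_2\,\bmath{\eta}(\bmath{b})\bmath{\eta}^H(\bmath{b}),
\]
exhibiting $\Rcal(\bmath{b})$ as a sum of two rank-one matrices. (Note that this also furnishes a more transparent re-derivation of Proposition~\ref{prop:1}.)

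Next, I would observe that any $4\times 4$ principal submatrix of $\Rcal(\bmath{b})$ inherits the same structure: picking indices $p_1,p_2,p_3,p_4$ and restricting both $\bmath{1}$ and $\bmath{\eta}$ to these coordinates yields vectors $\tilde{\bmath{1}},\tilde{\bmath{\eta}}\in\mathbfss{C}^4$, and the submatrix equals $A_1\,\tilde{\bmath{1}}\tilde{\bmath{1}}^H + A_2\,\tilde{\bmath{\eta}}\tilde{\bmath{\eta}}^H$. Being the sum of two rank-one matrices, it has rank at most two, so in particular its rank is strictly less than $4$ and its determinant vanishes.

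There is essentially no obstacle here: the only subtlety is recognizing the rank-two outer-product decomposition, which is already implicit in the $3\times 3$ computation of Lemma~\ref{lemma:2}. The same argument in fact shows that every $k\times k$ principal submatrix with $k\geq 3$ is singular, which is more than enough to feed into Lemma~\ref{lemma:1} and conclude the rank bound in Proposition~\ref{prop:1}.
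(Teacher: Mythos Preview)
Your argument is correct, and it is genuinely different from the paper's. The paper proceeds by writing out the generic $4\times4$ principal submatrix explicitly, using property~(vii) of Lemma~\ref{lemma:geo_matrix} to parametrise it by three integers $a,b,c$ (so that the off-diagonal exponents are $a,b,c,a{+}b,b{+}c,a{+}b{+}c$), and then asserts that the resulting determinant vanishes identically---i.e.\ essentially a brute-force symbolic computation mirroring the $3\times3$ case in Lemma~\ref{lemma:2}. Your route instead exhibits the global outer-product decomposition $\Rcal(\bmath{b})=A_1\,\bmath{1}\bmath{1}^H+A_2\,\bmath{\eta}\bmath{\eta}^H$ and observes that principal submatrices inherit it, so the rank is at most two and the $4\times4$ determinant is automatically zero. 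This is cleaner and strictly stronger: it simultaneously handles Lemmas~\ref{lemma:2} and~\ref{lemma:3}, dispenses with the structural properties of $\bmath{\Phi}$ from Lemma~\ref{lemma:geo_matrix}, and (as you note) gives the rank statement of Proposition~\ref{prop:1} directly without appealing to Lemma~\ref{lemma:1}. The paper's explicit parametrisation does make the submatrix structure visible, but that structure is not used elsewhere, so nothing is lost by your shortcut.
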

\begin{proof}
Due to the construction of $\Rcal(\bmath{b})$ all $4 \times 4$ function-valued principal submatrices of $\Rcal(\bmath{b})$ have the following form (see Lemma~\ref{lemma:geo_matrix})\\
\arraycolsep=0.075pt 
\footnotesize
\begin{equation}
\label{eq:mat_4}
\left[ \begin{array}{llll}
A_1+A_2 & A_1 + A_2e^{-ka} & A_1 + A_2e^{-k A }&A_1 + A_2e^{-kC}\\
A_1 + A_2e^{ka} & A_1+A_2 & 1 + Ae^{-k b}& A_1 + A_2e^{-k B}\\
A_1 + A_2e^{kA} & A_1 + A_2e^{kb} & A_1+A_2 & A_1 + A_2e^{-kc}\\
A_1 + A_2e^{kC} & A_1 + A_2e^{kB} &  A_1 + A_2e^{k c} & A_1+A_2\end{array} \right],
\end{equation}
\normalsize
where $k=2\pi i \bmath{b}\cdot\bmath{s}_0,A = a + b,B=b+c,C=a+b+c$ and $a,b,c\in\mathbfss{N}$. The determinant of the matrix in Equation \ref{eq:mat_4} is equal to zero.
\end{proof}

\begin{definition}
Let $|\bmath{A}|_{d+1}$, where $\bmath{A} \in \mathbfss{Z}^{k\times k}$, be defined as $\sum_{p=1}^{n-1}a_{pp+1}$.
\end{definition}

\begin{definition}
Let $\mathcal{A}$ denote the set of all $m\times m$ principal sub-matrices of $\bmath{\Phi}$, with $m=n-1$. Let $\mathcal{B}$ denote the set of all $k\times k$ principal sub-matrices of $\bmath{\Phi}$, with $2 \leq k \leq n$.
\end{definition}

\begin{lemma}
\label{lemma:geo_matrix}
The array geometry matrix $\bmath{\Phi}$ has the following properties:
\begin{enumerate}
 \item $\phi_{pp}=0$ (diagonal),
 \item $\phi_{pq} \neq 0$ (non-diagonal),
 \item $\phi_{pq} > 0;~\forall q>p$,
 \item $\phi_{pq} = -\phi_{qp}$,
 \item gcd($\{\phi_{pq}\}_{q>p}$) $=1$.
 \item $|\bmath{\Phi}|_{d+1}=\phi_{1n}$.
 \item $|\bmath{B}|_{d+1} = b_{1k}$, $\forall \bmath{B}\in\mathcal{B}$.
\end{enumerate}

\end{lemma}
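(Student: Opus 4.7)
The plan is to handle the seven properties in a natural order: properties 1--4 follow directly from the algebraic form $\phi_{pq} = \phi_q - \phi_p$ together with the convention that antennas are indexed in order of increasing $\phi_p$ along the CQB axis; property 5 uses the defining maximality of $\bmath{b}_0$; and properties 6--7 are both telescoping-sum identities.

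First I would record the setup: by the definition of a regularly-spaced array we have $\bmath{u}_1 = 0$, so $\phi_1 = 0$, and without loss of generality the antennas may be indexed so that $\phi_1 < \phi_2 < \cdots < \phi_n$ (if the telescope is East-West, this is just sorting stations along the arm; any other ordering is obtained by a permutation that preserves all claimed properties). From $\phi_{pq} = \phi_q - \phi_p$ the following are immediate: (1) $\phi_{pp} = 0$; (4) $\phi_{pq} = -\phi_{qp}$; and under the chosen ordering, (3) $\phi_{pq} > 0$ for $q>p$, which in turn forces (2) $\phi_{pq}\neq 0$ off the diagonal.

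For (5), I would argue both divisibility directions. On the one hand, since $\phi_1 = 0$, the set $\{\phi_{1q}\}_{q>1}$ is exactly $\{\phi_q\}_{q>1}$, and by the definition of the CQB (that $\bmath{b}_0$ is the largest baseline for which every $\bmath{u}_p$ is an integer multiple) we have $\gcd(\{\phi_q\}_{q\ge 1}) = 1$; combined with $\phi_1 = 0$, this gives $\gcd(\{\phi_q\}_{q>1}) = 1$. On the other hand, any common divisor $d$ of the full collection $\{\phi_{pq}\}_{q>p}$ must in particular divide every $\phi_{1q} = \phi_q$, hence $d \mid 1$. Thus $\gcd(\{\phi_{pq}\}_{q>p}) = 1$.

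For (6) and (7), the key observation is that $|\cdot|_{d+1}$ is a sum over the superdiagonal, and superdiagonal entries of $\bmath{\Phi}$ (or of any principal submatrix) are differences of consecutive $\phi$-labels, so the sum telescopes. Concretely, $|\bmath{\Phi}|_{d+1} = \sum_{p=1}^{n-1}\phi_{p,p+1} = \sum_{p=1}^{n-1}(\phi_{p+1}-\phi_p) = \phi_n - \phi_1 = \phi_{1n}$, giving (6). For (7), any $\bmath{B}\in\mathcal{B}$ is obtained by picking indices $p_1 < p_2 < \cdots < p_k$ so that $b_{ij} = \phi_{p_ip_j} = \phi_{p_j}-\phi_{p_i}$; then $|\bmath{B}|_{d+1} = \sum_{i=1}^{k-1}(\phi_{p_{i+1}}-\phi_{p_i}) = \phi_{p_k}-\phi_{p_1} = b_{1k}$. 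There is no real obstacle to overcome: the ``hardest'' point is merely being pedantic about the ordering convention in (3) and the two-way argument in (5); everything else is mechanical.
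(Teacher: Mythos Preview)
Your proof is correct and, for properties (i)--(vi), essentially identical to the paper's: the same direct computations from $\phi_{pq}=\phi_q-\phi_p$, the same reduction of property~(v) to $\gcd(\{\phi_{1q}\})=\gcd(\{\phi_q\})=1$, and the same telescoping sum for~(vi).

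The one genuine methodological difference is in property~(vii). The paper argues by deletion: it starts from $\bmath{\Phi}$, removes a single row/column index $j$, verifies case-by-case ($j=1$, $j=n$, $1<j<n$) that the superdiagonal sum of the resulting $(n-1)\times(n-1)$ principal submatrix still equals its $(1,m)$ entry, and then appeals to iterating this deletion $t$ times. Your argument instead parametrises an arbitrary $\bmath{B}\in\mathcal{B}$ directly by its retained index set $p_1<\cdots<p_k$, writes $b_{i,i+1}=\phi_{p_{i+1}}-\phi_{p_i}$, and telescopes in one line. Your route is shorter and avoids the case analysis and the informal ``expand by $t$ deletions'' step; the paper's route has the minor advantage of making explicit how (vii) reduces to (vi) as a base case, but at the cost of more bookkeeping. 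Either is perfectly valid.
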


\begin{proof}
Property (i) is true since $\phi_{pp} = \phi_p-\phi_p=0$. Properties (ii)--(iv) follow trivially from the assumption that the antenna positions $\{\bmath{u}_p\}$ satisfy $||\bmath{u}_{q}||_2>||\bmath{u}_{p}||_2~\forall q > p$. 
Property (v) is true since gcd($\{\phi_{pq}\}_{q>p}$) = gcd(gcd($\{\phi_{1q}\}$),$\{\phi_{dq}\}_{q>d,d>1}$) = gcd(gcd($\{\phi_{q}\}$),$\{\phi_{dq}\}_{q>d,d>1}$) =  gcd(1,$\{\phi_{dq}\}_{q>d,d>1}$) = 1. Property (vi)
is true since $|\bmath{\Phi}|_{d+1} = \sum_{p=1}^{n-1}\phi_{pp+1} = \sum_{p=1}^{n-1}\phi_{p+1}-\phi_p=\phi_{1n}$. 

Property (vii) can be proven using the following argument. Assume that $\bmath{A}^j\in\mathcal{A}$ is obtained from $\bmath{\Phi}$ by deleting the $j$-th row and column from $\bmath{\Phi}$ (where $j$ was chosen arbitrarily).
When calculating $|\bmath{A}^j|_{d+1}$ three separate cases arise,
\begin{itemize}
 \item $1< j<n$: $|\bmath{A}^j|_{d+1} = \sum_{i=1}^{m-1} a_{ii+1}^j$ = $\sum_{p=1 \atop p \neq j,j+1}^{n-1}\phi_{pp+1}+\phi_{j,j+2}$ = $\sum_{p=1 \atop p \neq j,j+1}^{n-1}\phi_{pp+1}+\phi_{jj+1}+\phi_{j+1j+2}$ = $\phi_{1n}$=$a_{1m}$,  
 \item $j=n$: $|\bmath{A}^j|_{d+1} = \phi_{1n-1}=a_{1m}$,
 \item $j=1$: $|\bmath{A}^j|_{d+1} = \phi_{2n}=a_{1m}$.
\end{itemize}
The above shows that $|\bmath{A}|_{d+1} = a_{1m}~\forall \bmath{A} \in \mathcal{A}$ (since $j$ was chosen arbitrarily). Expanding the above derivation by using $1<t\leq n-2$ arbitrary deletions yields the required result. 
\end{proof}



\begin{lemma}
\label{lemma:4}
Let $\lambda(u,v)$ denote the largest eigenvalue of $\Rcal(u,v)$ and $\mathbf{x}(u,v)$ its associated normalized eigenvector, then
$\mathbf{x}(u,v)=\mathbf{x}\big(u+\frac{j}{l_0},v+\frac{k}{m_0}\big),\forall j,k\in\mathbfss{Z}$. 
\end{lemma}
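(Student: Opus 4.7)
The plan is to deduce periodicity of the eigenvector from periodicity of the matrix $\Rcal$ itself, which in turn follows immediately from the phase structure of its entries. First I would write out an entry of $\Rcal(u,v)$ using Definition~\ref{def:R} and Eq.~\ref{eq:rtrue}, namely
\[
r_{pq}(u,v) \;=\; A_1 + A_2\, e^{-2\pi i\,\phi_{pq}(ul_0 + vm_0)},
\]
where $\phi_{pq}\in\mathbfss{Z}$ by Definition~\ref{def:phi}. Substituting $u\mapsto u+j/l_0$ and $v\mapsto v+k/m_0$ shifts the exponent by $-2\pi i\,\phi_{pq}(j+k)$, which is an integer multiple of $2\pi i$ and therefore leaves the exponential unchanged. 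Doing this entry-wise yields
\[
\Rcal\!\left(u+\tfrac{j}{l_0},\,v+\tfrac{k}{m_0}\right) \;=\; \Rcal(u,v)\qquad\forall\, j,k\in\mathbfss{Z}.
\]

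Next I would pass from matrix equality to eigenstructure equality. Since the two matrices are literally identical, they share every eigenvalue together with the corresponding eigenspace. In particular, the largest eigenvalue $\lambda$ (whose analytic form is given by Proposition~\ref{prop:1}) and the one-dimensional subspace it determines are identical at $(u,v)$ and at $(u+j/l_0,\,v+k/m_0)$. Hence one may choose $\mathbf{x}$ to be the same normalized vector at both points, giving the claimed identity.

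The only subtlety, and what I would flag as the main obstacle, is that a normalized eigenvector of a Hermitian matrix is only determined up to multiplication by a unit complex scalar, so the literal equality $\mathbf{x}(u,v)=\mathbf{x}(u+j/l_0,v+k/m_0)$ depends on a consistent choice of phase. For the downstream use in Proposition~\ref{prop:2} this ambiguity is harmless, because the calibration matrix is built as $\Gcal=\lambda\,\mathbf{x}\mathbf{x}^H$, and the rank-one projector $\mathbf{x}\mathbf{x}^H$ onto the largest eigenspace is canonically well-defined (independent of any phase choice). So provided $\lambda$ is a simple eigenvalue, which is generically the case away from a measure-zero set (and can be read off from the discriminant $h$ in Eq.~\ref{eq:lam}), the eigenvector can be transported periodically by fixing any convention (e.g.\ making the first nonzero component real and positive), and the lemma follows. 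Where $\lambda$ is degenerate one interprets the statement in the projector sense, which is all that Proposition~\ref{prop:2} actually uses.
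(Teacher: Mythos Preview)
Your proof is correct and follows essentially the same route as the paper: observe that $\Rcal(u+j/l_0,v+k/m_0)=\Rcal(u,v)$, hence $\lambda$ and its eigenvector inherit the same periodicity. You are in fact more careful than the paper, which simply states the matrix periodicity and concludes without either the explicit entry-wise computation or any mention of the phase-ambiguity issue you flag.
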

\begin{proof}
Notice that $\forall j,k\in\mathbfss{Z}$
\begin{eqnarray}
\Rcal\bigg(u+\frac{j}{l_0},v+\frac{k}{m_0}\bigg) &=& \Rcal(u,v),\\
\lambda\bigg(u+\frac{j}{l_0},v+\frac{k}{l_0}\bigg) &=& \lambda(u,v);
\end{eqnarray}
implying that $\mathbf{x}(u,v)=\mathbf{x}\big(u+\frac{j}{m_0},v+\frac{k}{l_0}\big),\forall j,k\in\mathbfss{Z}$. 
\end{proof}


\begin{lemma}
\label{lemma:6}
Let $\lambda(u,v)$ denote the largest eigenvalue of $\Rcal(u,v)$ and $\mathbf{x}(u,v)$ its associated normalized eigenvector, then
$\mathbf{x}\big(u,-\frac{l_0}{m_0}u + c\big)=\mathbf{x}(0,c),~\forall u,c\in\mathbfss{R}$.
\end{lemma}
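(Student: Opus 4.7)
The plan is to mirror the argument already used for Lemma~\ref{lemma:4}, exploiting the fact that the entries of $\Rcal(\bmath{b})$ depend on $\bmath{b}$ only through the dot product $\bmath{b}\cdot\bmath{s}_0$. The lemma is essentially a statement about a one-dimensional level set of $\Rcal$: every line of the form $v=-\tfrac{l_0}{m_0}u+c$ in the $uv$-plane is precisely a level curve of the linear functional $\bmath{b}\mapsto\bmath{b}\cdot\bmath{s}_0=ul_0+vm_0$.

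First I would verify the key invariance. Take any $u,c\in\mathbfss{R}$ and compute
\begin{equation}
\bigl(u,\,-\tfrac{l_0}{m_0}u+c\bigr)\cdot\bmath{s}_0 \;=\; ul_0 + \bigl(-\tfrac{l_0}{m_0}u+c\bigr)m_0 \;=\; cm_0 \;=\; (0,c)\cdot\bmath{s}_0.
\end{equation}
Since by Definition~\ref{def:R} each entry $r_{pq}(\bmath{b})=A_1+A_2e^{-2\pi i\phi_{pq}\bmath{b}\cdot\bmath{s}_0}$ depends on $\bmath{b}$ solely via $\bmath{b}\cdot\bmath{s}_0$, this gives the matrix-level identity
\begin{equation}
\Rcal\bigl(u,\,-\tfrac{l_0}{m_0}u+c\bigr) \;=\; \Rcal(0,c).
\end{equation}
Two equal matrices share the same spectrum, so in particular $\lambda(u,-\tfrac{l_0}{m_0}u+c)=\lambda(0,c)$, and any associated eigenvector of the left-hand side is an associated eigenvector of the right-hand side.

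The remaining step is to conclude that the \emph{normalized} eigenvectors actually coincide. By Proposition~\ref{prop:1} the nonzero eigenvalues are $\tfrac{n(A_1+A_2)}{2}\pm h$, and for generic $\bmath{b}$ the largest eigenvalue is simple, so its normalized eigenvector is unique up to a unimodular scalar. Under a consistent sign/phase convention for the eigenvector-selection map (the same convention used implicitly in Lemma~\ref{lemma:4} and in the definition of $\Gcal$ in Definition~\ref{def:G}), equality of the matrices forces equality of the chosen unit eigenvectors, yielding $\mathbf{x}(u,-\tfrac{l_0}{m_0}u+c)=\mathbf{x}(0,c)$ as claimed.

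The main obstacle I anticipate is precisely this last point: handling the eigenvector on the (measure-zero) set where $\lambda$ might fail to be simple, and making sure the ``same normalization'' is chosen in a way that is consistent across the line. This is the same implicit assumption made in the proof of Lemma~\ref{lemma:4}, so it is reasonable to dispose of it with an appeal to Rellich's theorem (already invoked via Lemma~\ref{lemma:7} for differentiability), which guarantees a smooth, hence consistent, selection of the eigenvector branch associated with $\lambda(\bmath{b})$. Once that selection is fixed, the computation above finishes the proof.
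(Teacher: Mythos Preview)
Your argument is correct and follows essentially the same route as the paper: show that $\Rcal\bigl(u,-\tfrac{l_0}{m_0}u+c\bigr)=\Rcal(0,c)$ (hence $\lambda$ agrees), and conclude the eigenvectors coincide. You are in fact more careful than the paper, which simply writes ``implying that $\mathbf{x}\bigl(u,-\tfrac{l_0}{m_0}u+c\bigr)=\mathbf{x}(0,c)$'' without addressing the phase/normalization ambiguity you flag.
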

\begin{proof}
Notice that $\forall u,c\in\mathbfss{R}$
\begin{eqnarray}
\label{eq:lem_6_1}
\Rcal\bigg(u,-\frac{l_0}{m_0}u + c\bigg)&=&\Rcal(0,c),\\
\label{eq:lem_6_2}
\lambda\bigg(u,-\frac{l_0}{m_0}u + c\bigg)&=&\lambda(0,c),
\end{eqnarray}  
implying that $\mathbf{x}\big(u,-\frac{l_0}{m_0}u + c\big)=\mathbf{x}(0,c),~\forall u,c\in\mathbfss{R}$.
\end{proof}


\begin{lemma}
\label{lemma:7}
Let $\lambda(u,v)$ denote the largest eigenvalue of $\Rcal(u,v)$ and $\mathbf{x}(u,v)$ its associated normalized eigenvector. The real function
$\lambda(u,v)$ and the function-valued vector $\mathbf{x}(u,v)$ are differentiable. 
\end{lemma}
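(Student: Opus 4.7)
The plan is to deduce differentiability of both $\lambda(u,v)$ and $\mathbf{x}(u,v)$ from the real-analyticity of the entries of $\Rcal(u,v)$, combined with the closed-form expression provided by Proposition~\ref{prop:1}. The entries $r_{pq}(u,v) = A_1 + A_2 e^{-2\pi i \phi_{pq}\bmath{b}\cdot\bmath{s}_0}$ are trigonometric polynomials in $(u,v)$, and hence real-analytic on all of $\mathbfss{R}^2$; the work is to transfer this analyticity through the spectral decomposition.

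For $\lambda(u,v)$ I would start from the formula $\lambda = \frac{n(A_1+A_2)}{2} + h$, with $h = \frac{1}{2}\sqrt{[n^2-4\binom{n}{2}](A_1+A_2)^2 + \kappa}$. The radicand is already a real-analytic function of $(u,v)$, so the only obstruction to real-analyticity of $\lambda$ is possible vanishing of the radicand. To rule this out I would rewrite $\Rcal = A_1\mathbf{1}\mathbf{1}^H + A_2\mathbf{v}\mathbf{v}^H$ with $v_p = e^{2\pi i \phi_p\bmath{b}\cdot\bmath{s}_0}$ and exploit the cyclic equality of nonzero eigenvalues: the two candidate nonzero eigenvalues of $\Rcal$ coincide with those of the $2\times 2$ Gram matrix $B^HB$ where $B = [A_1^{1/2}\mathbf{1},\,A_2^{1/2}\mathbf{v}]$. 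A short algebraic calculation gives $4h^2 = n^2(A_1-A_2)^2 + 4A_1A_2|\mathbf{1}^H\mathbf{v}|^2$, which is bounded below by $n^2(A_1-A_2)^2 > 0$ since Definition~\ref{def:R} forces $A_1=1 > A_2 > 0$. Thus $h$ is the square root of a strictly positive real-analytic function and is itself real-analytic, and so is $\lambda$.

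For the eigenvector $\mathbf{x}(u,v)$ I would invoke standard analytic perturbation theory in the spirit of Rellich. Because $\lambda$ is separated from every other eigenvalue of $\Rcal$ by a gap of at least $\min(2h,\,\lambda) > 0$ (the other nonzero eigenvalue is $\lambda-2h$, and the remaining eigenvalues are $0$, lying below $\lambda$ by the trace bound), around each $(u_0,v_0)$ there is a fixed closed contour $\Gamma\subset\mathbfss{C}$ enclosing only $\lambda(u,v)$ in a neighbourhood. The spectral projector $P(u,v) = (2\pi i)^{-1}\oint_{\Gamma}(zI-\Rcal(u,v))^{-1}\,dz$ is then well-defined and real-analytic in $(u,v)$, since the resolvent is a rational function of analytic entries and is bounded uniformly on $\Gamma$. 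Picking a reference vector $\mathbf{e}$ with $P(u_0,v_0)\mathbf{e}\neq 0$, I would set $\mathbf{x}(u,v) = P(u,v)\mathbf{e}/\|P(u,v)\mathbf{e}\|$ locally, which is real-analytic, and fix the phase by requiring $\mathbf{e}^H\mathbf{x}\in\mathbfss{R}_{>0}$. The main technical subtlety is precisely this phase ambiguity intrinsic to eigenvectors, which is handled by patching together local reference-vector choices; since the bulk lemma only claims differentiability (not a single global analytic branch), such a local construction suffices.
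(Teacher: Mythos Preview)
Your argument is correct, but it proceeds along a somewhat different route from the paper. The paper first observes that $\Rcal(u,v)$ depends on $(u,v)$ only through the single real parameter $t=\bmath{b}\cdot\bmath{s}_0$, and then invokes Rellich's theorem (stated separately as Lemma~\ref{lemma:rellich}) for one-parameter analytic Hermitian families, noting that the largest eigenvalue is simple by Eq.~\ref{eq:lam}; differentiability in $(u,v)$ then follows from the chain rule. You instead work directly in two variables: for $\lambda$ you use the explicit formula of Proposition~\ref{prop:1} and establish strict positivity of the radicand via the rank-two Gram-matrix identity $4h^{2}=n^{2}(A_{1}-A_{2})^{2}+4A_{1}A_{2}\lvert\mathbf{1}^{H}\mathbf{v}\rvert^{2}$, and for $\mathbf{x}$ you build the Riesz projector from the resolvent, which is essentially the mechanism underlying Rellich/Kato perturbation theory rather than the theorem itself. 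The paper's route is shorter once Rellich is taken as a black box and exploits the effective one-dimensionality that is reused elsewhere in the appendix; your route is more self-contained, makes the spectral gap quantitatively explicit, and does not require the preliminary reduction to a single parameter.
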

\begin{proof}
The parameter dimension of $\Rcal(\bmath{b})$ is effectively one, i.e. $\Rcal_t(t):=\Rcal(u,v), \lambda_t(t) :=  \lambda(u,v)$ and $\mathbf{x}_t(t):=\mathbf{x}(u,v)$ with $t(u,v) := \bmath{b}\cdot\bmath{s}_0$. The entries of $\Rcal_t(t)$ are analytic functions depending on $t\in\mathbfss{R}$.
Therefore, Rellich's theorem (Lemma~\ref{lemma:rellich}) implies that $\lambda(t)$ and $\mathbf{x}(t)$ are analytic (the largest eigenvalue is simple-Eq.~\ref{eq:lam}) and therefore also differentiable. 
We can therefore calculate
\begin{eqnarray}
\frac{\partial\lambda(u,v)}{\partial u} &=& \frac{d\lambda_t(t(u,v))}{dt}\frac{\partial t(u,v)}{\partial u},\\
\frac{\partial\lambda(u,v)}{\partial v} &=& \frac{d\lambda_t(t(u,v))}{dt}\frac{\partial t(u,v)}{\partial v}.
\end{eqnarray}
The above equations imply that the real function $\lambda(u,v)$ is differentiable. A similar argument can be used to prove that $\mathbf{x}(u,v)$ is also differentiable. 
\end{proof}



\begin{lemma}[Rellich's Theorem]
\label{lemma:rellich} 
Let $\bmath{A}(t) : \mathbfss{R}\rightarrow\mathbfss{C}^{n \times n}$ be a Hermitian function-valued matrix that depends
on $t$ analytically.
\begin{enumerate}
 \item The $n$ roots of the characteristic polynomial of $\bmath{A}(t)$ can be arranged so that each root $\lambda_j(t)$ for $j = 1,\cdots,n$ is an analytic function of $t$.
 \item There exists an eigenvector $\mathbf{x}_j(t)$ associated with $\lambda_j(t)$ for $j = 1,\cdots,n$ satisfying
       \begin{enumerate}
              \item $||\mathbf{x}_j(t)||_2=1~\forall t\in\mathbfss{R}$,
              \item $\mathbf{x}_j(t)$ is an analytic function-valued vector of $t$ \citep{Reed1978,Lax1996,Kilicc2011}.
       \end{enumerate}
\end{enumerate}
\end{lemma}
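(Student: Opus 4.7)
The plan is to prove Rellich's theorem in two stages: first establish analyticity of the eigenvalues, then construct analytic normalized eigenvectors using the Riesz spectral projector. Throughout, I exploit the fact that Hermiticity on $\mathbfss{R}$ forces eigenvalues to be real, which is what distinguishes Rellich's theorem from the generic analytic matrix case (where one only gets Puiseux expansions).

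First I would complexify: since the entries of $\bmath{A}(t)$ are real-analytic, they extend to holomorphic functions on a complex strip around $\mathbfss{R}$. The characteristic polynomial $p(\lambda,t)=\det(\lambda I-\bmath{A}(t))$ is then a polynomial in $\lambda$ whose coefficients are holomorphic in $t$. At any $t_0$ where all eigenvalues are simple, the implicit function theorem gives holomorphic roots $\lambda_j(t)$ locally. At a crossing point $t_0$, the Newton--Puiseux theorem tells us that the roots of $p(\lambda,t)$ admit expansions of the form $\lambda(t)=\sum_{k\ge k_0}a_k(t-t_0)^{k/q}$ for some integer $q\ge 1$, with the $q$ distinct branches obtained by replacing $(t-t_0)^{1/q}$ by $\omega^j(t-t_0)^{1/q}$ ($\omega$ a primitive $q$-th root of unity). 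The crucial step is to show that $q=1$: because $\bmath{A}(t)$ is Hermitian for real $t$, every eigenvalue is real on a neighbourhood of $t_0$ in $\mathbfss{R}$, but a genuine Puiseux branch with $q>1$ cannot be real on an open interval on both sides of $t_0$ (the different $\omega^j$ branches would have to coincide pairwise, contradicting irreducibility of the associated Puiseux cycle). Hence each root extends to a true power series at $t_0$, and by analytic continuation along $\mathbfss{R}$ we obtain globally analytic eigenvalues $\lambda_j(t)$.

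Next, for the eigenvectors I would use the Riesz projection formalism. Fix $t_0\in\mathbfss{R}$ and a small positively oriented contour $\Gamma$ in the $\lambda$-plane enclosing $\lambda_j(t_0)$ and no other eigenvalue. For $t$ close enough to $t_0$, the eigenvalues of $\bmath{A}(t)$ inside $\Gamma$ are exactly those branches that coalesce to $\lambda_j(t_0)$, and the spectral projector
\begin{equation}
P_j(t)=\frac{1}{2\pi i}\oint_\Gamma (\zeta I-\bmath{A}(t))^{-1}\,d\zeta
\end{equation}
is analytic in $t$ because the resolvent $(\zeta I-\bmath{A}(t))^{-1}$ is jointly analytic on $\Gamma\times\{|t-t_0|<\varepsilon\}$. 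Pick any fixed vector $v$ with $P_j(t_0)v\neq 0$ (possible since $P_j(t_0)\neq 0$) and set $\tilde{\mathbf{x}}_j(t)=P_j(t)v$; this is analytic and nonzero near $t_0$. Then $\mathbf{x}_j(t)=\tilde{\mathbf{x}}_j(t)/\|\tilde{\mathbf{x}}_j(t)\|_2$ is a normalized eigenvector, and it is analytic because $\|\tilde{\mathbf{x}}_j(t)\|_2^2=\tilde{\mathbf{x}}_j(t)^H\tilde{\mathbf{x}}_j(t)$ is a positive real-analytic scalar, so its square root is real-analytic. Patching these local constructions using analytic continuation on $\mathbfss{R}$ gives the global analytic $\mathbf{x}_j(t)$.

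The main obstacle is the Puiseux step at eigenvalue crossings. Ruling out fractional exponents $q>1$ is the heart of the theorem and the one place where Hermiticity is essential; without it, one only obtains continuous (or at best Puiseux-analytic) eigenvalues, as generic non-Hermitian examples show. A secondary technical subtlety is handling crossings of multiplicity higher than two, where one must argue that the full cycle of Puiseux conjugates must decompose into singletons: this follows from the same reality argument applied branch by branch. Once the eigenvalues are genuinely analytic, the Riesz projection argument for eigenvectors is essentially routine, with the only care needed being that the chosen seed vector $v$ produces a nonvanishing $P_j(t)v$ on the relevant interval, which can always be arranged by a finite covering argument and choosing different seed vectors on overlapping patches, then reconciling them by analytic scalar factors of unit modulus.
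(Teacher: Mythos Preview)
The paper does not actually prove this lemma: it is stated with citations to \citet{Reed1978}, \citet{Lax1996} and \citet{Kilicc2011} and then used as a black box in Lemma~\ref{lemma:7}. So there is no ``paper's own proof'' to compare against; you are supplying an argument where the authors simply invoke the literature.

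Your sketch follows the classical Rellich/Kato route and is essentially sound. The Puiseux step for the eigenvalues is the standard argument: Hermiticity on $\mathbfss{R}$ forces all branches to be real, and a genuine $q>1$ Puiseux cycle cannot have all $q$ conjugate branches real on both sides of $t_0$ without collapsing. That part is fine.

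The eigenvector construction has one gap worth flagging. At a crossing $t_0$ your contour $\Gamma$ encloses \emph{all} eigenvalues that coalesce to $\lambda_j(t_0)$, so for $t$ near but not equal to $t_0$ the projector $P_j(t)$ has rank greater than one, and $P_j(t)v$ lies in a sum of eigenspaces rather than the single eigenspace for $\lambda_j(t)$. The usual fix is to first use the analyticity of the eigenvalues to conclude that crossings form a discrete set (two distinct analytic functions agree only on an isolated set), then on the complement use a small contour around $\lambda_j(t)$ alone to get a rank-one analytic projector, and finally argue that the resulting eigenvector extends analytically through each isolated crossing (e.g.\ by a removable-singularity argument, since the unnormalised eigenvector stays bounded). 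Your ``patching by analytic continuation'' remark gestures at this, but as written the local construction near $t_0$ does not yet produce an eigenvector for the single branch $\lambda_j$.
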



\bsp

\label{lastpage}

\end{document}